\theoremstyle{plain}
\newtheorem{theorem}{Theorem}
\newtheorem{conjecture}[theorem]{Conjecture}
\renewcommand\phi{\varphi}
\newcommand{\R}{\mathbb{R}}
\newcommand{\eps}{\varepsilon}
\def\d{\,{\rm d}}
\begin{document}

\title{The ionization problem in quantum mechanics}

\author[P. T. Nam]{Phan Th\`anh Nam}
\address{Department of Mathematics, LMU Munich, Theresienstrasse 39, 80333 Munich, Germany} 
\email{nam@math.lmu.de}

\maketitle

\begin{center}
{\em Dedicated to Elliott H. Lieb on the occasion of his 90th birthday} 
\end{center}

\begin{abstract}  While it is well-known experimentally that a neutral atom can bind at most
one or two extra electrons, deriving this fact rigorously from first principles of quantum mechanics remains a very challenging problem, often referred to as the ionization conjecture. We will review some of Elliott H. Lieb's fundamental contributions to this topic and discuss their impacts on several recent developments. 

%

%
\end{abstract}

\section{The ionization problem}\label{sec:1}

A great achievement of quantum mechanics, which goes back to its birthday,  is the satisfactory explanation of {\em the stability of atoms}. In this context, the fact that the electrons do not fall into the nucleus of an atom can be derived mathematically from a variant of {\em Heisenberg's uncertainty principle}  (e.g.  Hardy's or Sobolev's inequality). On the other hand, the  deeper question  ``How many electrons that a nucleus can bind?" has not been answered completely. From experiments it is widely believed that a neutral atom can bind at most
one or two extra electrons, but proving this fact mathematically from the many-body Schr\"odinger equation remains a very challenging problem.

In this article, we will  limit the discussion to the Born--Oppenheimer approximation of non-relativistic atoms. To be precise, we consider a system of $N$ quantum electrons moving around a  classical nucleus  and interacting via Coulomb forces. The statistical properties of electrons are encoded by a normalized wave function in $L^2(\R^{3N})$ which satisfies the anti-symmetry
\begin{equation} \label{eq:Pauli}
\Psi(x_1,...,x_i,...,x_j,...,x_N)= - \Psi(x_1,...,x_j, ..., x_i, ...,x_N), \quad \forall i\ne j,
\end{equation}
where $x_i\in \mathbb{R}^3$ stands for the position of the $i$-th electron (we ignore the spin of electrons for simplicity). Usually $|\Psi|^2$ is interpreted as the probability density of $N$ electrons. The condition \eqref{eq:Pauli} is called {\em Pauli's exclusion principle}; in particular it implies that $\Psi(x_1,...,x_i,...,x_j,...,x_N)=0$ if $x_i=x_j$ with $i\ne j$, namely two electrons cannot occupy a common position. As we will see,   Pauli's exclusion principle plays a crucial role in the ionization problem.

The Hamiltonian of the system is 
\begin{equation} \label{eq:HN}
H_{N} = \sum\limits_{i = 1}^N {\left( { - \Delta _{x_i}  - \frac{Z}
{{|x_i |}}} \right)}  + \sum\limits_{1 \le i < j \le N} {\frac{1}
{{|x_i  - x_j |}}} .
\end{equation}
Here we use atomic units; in particular the electronic charge is $-1$ and the nuclear charge is $Z\in \mathbb{N}$. By Hardy's inequality 
\begin{equation} \label{eq:Hardy}
-\Delta \ge \frac{1}{4|x|^2}\quad \text{on }L^2(\R^3)
\end{equation}
it is easy to see that $H_N$ is bounded from below with the core domain $C_c^\infty(\R^{3N})$, and hence it can be extended to be a self-adjoint operator by Friedrichs' method with the form domain $H^1(\R^{3N})$. We will always restrict $H_N$ to the anti-symmetric  subspace induced by \eqref{eq:Pauli}.

We are interested in the ground state problem 
\begin{equation} \label{eq:EN}
E_N = \inf \sigma(H_N)=\mathop {\inf }\limits_{||\Psi ||_{L^2}  = 1} \langle \Psi ,H_{N} \Psi \rangle. 
\end{equation}
Obviously, both $H_N$ and $E_N$ depend on $Z$, but we ignore this dependence in the notation. It is well-known that if a minimizer $\Psi$ of  \eqref{eq:EN} exists, then it is a solution to the Schr\"odinger equation
\begin{equation} \label{eq:EN-eq}
H_N \Psi = E_N \Psi.
\end{equation}

Although $E_N$ is finite, the existence of minimizers of \eqref{eq:EN} does not necessarily
hold. Heuristically, there should a transition when $N$ increases as follows: 
\begin{itemize}
\item If $N< Z+1$, then the outermost electron is attracted by the rest of the system which is of the effective charge $Z-(N-1)>0$. The existence is likely to hold in this case;

\item If $N>Z+1$, then the outermost electron prefers to ``escape to infinity" due to the Coulomb repulsion. The nonexistence is in favor. 
\end{itemize}
The first half of this prediction, namely the existence of all positive ions and neutral atoms, was proved by  Zhislin in 1960.  

\begin{theorem}[\cite{Z60}]  \label{thm:Zhislin} If $N<Z+1$, then $E_N$ has a minimizer. 
\end{theorem}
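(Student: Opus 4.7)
The plan is to follow the classical two-step strategy: combine the HVZ theorem with a strict binding inequality. Taking HVZ as input, we know $\sigma_{\text{ess}}(H_N) = [E_{N-1}, \infty)$, so it suffices to show $E_N < E_{N-1}$: once this is established, $E_N$ is an isolated eigenvalue of finite multiplicity below the essential spectrum, and a standard direct method (extract a weak $H^1$ limit of a minimizing sequence and use HVZ-type localization to preclude loss of mass at infinity) produces an honest minimizer.

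The strict binding inequality $E_N < E_{N-1}$ is established by induction on $N$, running through $N < Z+1$. The base case $N=1$ is immediate since $-\Delta - Z/|x|$ has infinitely many negative eigenvalues, so $E_1 < 0 = E_0$. For the inductive step, let $\Psi_{N-1}$ be a minimizer of $E_{N-1}$ (existing by the previous stage of the induction). Pick $\phi \in C_c^\infty(\R^3)$ supported in a shell $\{R \leq |x| \leq 2R\}$ with $R$ large, and form the antisymmetrized trial state
\begin{equation*}
\Psi = \frac{1}{\sqrt{N}} \sum_{k=1}^N (-1)^{k-1} \phi(x_k) \, \Psi_{N-1}(x_1, \ldots, \widehat{x_k}, \ldots, x_N).
\end{equation*}
Because $E_{N-1} < E_{N-2}$ by the inductive hypothesis, Combes--Thomas or Agmon estimates give exponential decay of $\Psi_{N-1}$ with some rate $c > 0$, so all exchange terms in $\|\Psi\|^2$ and in $\langle \Psi, H_N \Psi\rangle$ are $O(e^{-cR})$.

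A direct computation, using Newton's theorem to evaluate the interaction of $\phi$ with the charge density of $\Psi_{N-1}$ concentrated near the origin, then yields
\begin{equation*}
\langle \Psi, H_N \Psi\rangle \leq E_{N-1} \|\Psi\|^2 + \langle \phi, h_{\text{eff}} \phi\rangle + O(e^{-cR}),
\end{equation*}
where $h_{\text{eff}} = -\Delta + V_{\text{eff}}$ and $V_{\text{eff}}(x) = -(Z - N + 1)/|x| + O(1/|x|^2)$ as $|x| \to \infty$. Since $Z - N + 1 > 0$, the operator $h_{\text{eff}}$ has negative bound states; selecting $\phi$ to mimic a hydrogen-like orbital at scale $\sim 1/(Z-N+1)$ makes $\langle \phi, h_{\text{eff}} \phi\rangle$ strictly negative, yielding $E_N < E_{N-1}$.

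The main obstacle is the bookkeeping for the antisymmetrization: the exchange and cross terms involving the far-away orbital $\phi$ and the localized cluster $\Psi_{N-1}$ must decay in $R$ faster than the binding energy gained from $h_{\text{eff}}$. This is exactly what the exponential decay of $\Psi_{N-1}$ supplies. A secondary point is ensuring that Newton's theorem applies with exponentially small error outside the effective support of the cluster, so that only the monopole $(N-1)/|x|$ of the electron--electron repulsion survives to combine with $-Z/|x|$ into the attractive effective potential. Everything else reduces to a clean application of the direct method to a minimizing sequence.
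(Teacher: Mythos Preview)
Your proposal is correct and follows essentially the same route the paper sketches: HVZ plus the induction $E_N<E_{N-1}$ via the trial state $\Psi_{N-1}\wedge\varphi$ with one electron placed far away, feeling the residual attractive Coulomb potential $-(Z-N+1)/|x|$. Two minor remarks: once $E_N$ lies strictly below $\sigma_{\mathrm{ess}}(H_N)$ it is already an eigenvalue, so the separate ``direct method'' step is unnecessary; and the phrase ``scale $\sim 1/(Z-N+1)$'' is a slip---you want $\phi$ supported at large $R$, where the $-c/R$ potential beats the $R^{-2}$ kinetic cost, not at the Bohr radius of the effective charge.
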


On the other hand, the second half of the above prediction, namely the nonexistence of highly negative ions, is much more difficult and often referred to as the ``ionization conjecture",  see \cite[Problem 9]{S00} and \cite[Chapter 12]{LS09}. To be precise, let us denote by $N_c=N_c(Z)$ the largest number of electrons such that $E_{N_c}$ has a minimizer. Then we have the following conjecture.

\begin{conjecture}[\cite{S00,LS09}] \label{con2} $N_c\le Z+C$ with a constant $C>0$ (possibly $C=1$).
\end{conjecture}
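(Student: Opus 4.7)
The plan is to combine Lieb's multiplication-by-$|x|$ identity with a semiclassical screening analysis in the spirit of Solovej's treatment of Hartree--Fock theory. First I would start from a ground state $\Psi$ of $H_{N_c}$, which exists by assumption, and from the Schr\"odinger equation $H_{N_c} \Psi = E_{N_c} \Psi$. Testing against $|x_{N_c}|\Psi$, antisymmetrizing in the electron index, and using the elementary bound $|x_{N_c}|/|x_i - x_{N_c}| \leq 1 + |x_i|/|x_i - x_{N_c}|$ together with the symmetrization trick $i \leftrightarrow N_c$, one recovers Lieb's inequality $N_c < 2Z+1$. This is the easy half of the conjecture; it already confirms $N_c = O(Z)$ but with constant $2$ instead of $1$.

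To sharpen the constant $2$ down to $1$, my next step is a multiscale screening argument. Using the Lieb--Simon asymptotics of Thomas--Fermi theory, the one-particle density $\rho_\Psi(x) = N_c \int |\Psi(x, x_2, \dots, x_{N_c})|^2 \dd x_2 \cdots \dd x_{N_c}$ should be close, on the atomic length scale $Z^{-1/3}$, to the TF density of the atom at nuclear charge $Z$. I would then decompose $\R^3$ into dyadic shells around the nucleus and, in each shell, construct an effective one-body Schr\"odinger operator whose external potential has been screened by the electrons closer in. Combined with an IMS-type partition-of-unity localization and a Lieb--Thirring kinetic bound, this should show that the effective nuclear charge seen by any putative excess electron at distance $|x| \gg Z^{-1/3}$ is at most $O(1)$, which is incompatible with a bound state once the excess is large enough. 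The bridge between the Lieb identity and the screening mechanism requires quantitative control of $\rho_\Psi - \rho^{TF}$ in $L^1$ outside a small core, a handle on exchange--correlation via the Lieb--Oxford inequality, and Fefferman--Seco-type energy asymptotics to pin down $\int \rho_\Psi = N_c$ at leading order. One would then iterate the shell argument from the outermost region inward, each pass reducing the allowed excess charge by a constant factor, to reach the target bound $N_c \leq Z + C$.

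The main obstacle, in my view, lies in controlling genuine many-body correlations, which are absent in the Hartree--Fock setting where Solovej carried out precisely such a program in 2003. In Hartree--Fock one exploits the positivity of the exchange term and the rigid Slater-determinant structure; for the true Schr\"odinger problem, correlations can in principle reduce the effective repulsion felt by the outermost electron and weaken every naive screening estimate. Overcoming this seems to demand either a new many-body-to-one-body reduction that preserves a useful positivity, or a direct estimate on the spatial decay of the correlation energy far from the nucleus --- and it is precisely at this step that, I believe, any proof of Conjecture~\ref{con2} will have to introduce a substantively new idea.
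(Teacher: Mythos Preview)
The statement you are attempting to prove is labeled \emph{Conjecture} in the paper, and for good reason: it is an open problem. The paper contains no proof of Conjecture~\ref{con2}; on the contrary, it repeatedly emphasizes that the bound $N_c(Z)\le Z+C$ for the full many-body Schr\"odinger problem remains unresolved, and that the best known asymptotic result is $N_c(Z)\le Z+O(Z^{5/7})$ due to Fefferman--Seco and Seco--Sigal--Solovej. So there is no ``paper's own proof'' to compare your proposal against.

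That said, your write-up is an accurate and well-informed summary of the landscape. The first paragraph correctly reproduces Lieb's $N_c<2Z+1$ argument. The second paragraph sketches Solovej's bootstrap/screening strategy, which is exactly what the paper describes as the route that succeeds in Hartree--Fock theory. And your final paragraph puts its finger on precisely the obstruction the paper itself singles out: in HF one can cleanly separate the ``outside'' problem from the ``inside'' problem via the unconstrained density-matrix formulation and Lieb's variational principle, whereas in the genuine many-body setting no analogous localization or reduction is currently available. The paper's own verdict is that ``a completely new many-body localization technique'' would be needed.

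So the gap you identify is not a flaw in your execution --- it is \emph{the} gap, and it is why Conjecture~\ref{con2} is still a conjecture. Your proposal should be read as a research program rather than a proof, and you are right to flag that the decisive new idea is missing.
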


Due to the celebrated Hunziker--van Winter--Zhislin (HVZ) theorem (see e.g. \cite[Theorem 11.2]{T09}), the essential spectrum of $H_N$ is $[E_{N-1},\infty)$. Consequently, we always have $E_N\le E_{N-1}$; moreover, if $E_N<E_{N-1}$, then $H_N$ has a bound state with eigenvalue $E_N$.  In \cite{Z60}, Zhislin proved Theorem \ref{thm:Zhislin} by establishing the strict binding inequality $E_N<E_{N-1}$ for all $N<Z+1$ by an induction argument (if $E_{N-1}$ has an eigenfunction $\Psi_{N-1}$, then the energy of the $N$-body state $\Psi_{N-1}\wedge \varphi$ is lower than $E_{N-1}$ for some function $\varphi\in L^2(\R^3)$ describing one electron at infinity). In contrast, Conjecture \ref{con2} implies that there exists $N_c \le Z+C$ such that 
\begin{equation} \label{eq:EN-ENc}
E_N=E_{N_{c}},\quad \forall N\ge N_c.
\end{equation}
It is believed that $E_N$ is not only {\em strictly decreasing} when $N\le N_c$, but also {\em convex}. 

\begin{conjecture}[\cite{LS09}] \label{conCV} The function $E_N$ is convex in $N\in \mathbb{N}$.  
\end{conjecture}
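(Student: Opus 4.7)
The conjecture amounts to the discrete convexity $E_{N+1}+E_{N-1}\ge 2E_N$ for every $N\ge 1$, equivalently that the binding energy $E_{N-1}-E_N$ is non-increasing in $N$. This is physically very compelling — each added electron encounters stronger Coulomb repulsion from those already bound — so the real task is to make the heuristic quantitative. The natural route is through trial states: assuming minimizers (or approximate minimizers) $\Psi_\pm$ for $H_{N\pm1}$ (which exist for $N+1\le N_c$ by Theorem~\ref{thm:Zhislin}; for $N+1>N_c$ one works with minimizing sequences and the HVZ theorem), I would look for two antisymmetric $N$-particle trial wavefunctions $\Phi_1,\Phi_2$ such that
\begin{equation}
\langle\Phi_1,H_N\Phi_1\rangle+\langle\Phi_2,H_N\Phi_2\rangle\le E_{N+1}+E_{N-1}.
\end{equation}

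A first attempt is to choose a one-particle orbital $\varphi\in L^2(\R^3)$ of unit norm and set
\begin{equation}
\Phi_1=\Psi_{N-1}\wedge\varphi,\qquad \Phi_2(x_1,\ldots,x_N)=\sqrt{N+1}\int\overline{\varphi(y)}\,\Psi_{N+1}(x_1,\ldots,x_N,y)\,\d y,
\end{equation}
both antisymmetric, with $\|\Phi_2\|\le 1$. Decomposing
\begin{equation}
H_{N+1}=H_N + h_{N+1},\qquad h_{N+1}=-\Delta_{x_{N+1}}-\frac{Z}{|x_{N+1}|}+\sum_{i=1}^N\frac{1}{|x_i-x_{N+1}|},
\end{equation}
and using the Schr\"odinger equation for $\Psi_{N+1}$, one can express $\langle\Phi_2,H_N\Phi_2\rangle$ as an expectation of $H_{N+1}-h_{N+1}$ in the weighted state $\varphi(x_{N+1})\Psi_{N+1}$. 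Simultaneously, Zhislin's infinity trick applied to $\Phi_1$, with the support of $\varphi$ pushed far from the nucleus, yields $\langle\Phi_1,H_N\Phi_1\rangle\le E_{N-1}+o(1)$, while the same cutoff is designed to extract from $\Phi_2$ the binding energy that $\Psi_{N+1}$ attributes to its outermost electron.

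The hard part is that this strategy is essentially circular: the upper bound it produces for $\langle\Phi_2,H_N\Phi_2\rangle$ involves the binding energy of the $(N+1)$-st electron of $\Psi_{N+1}$, and closing the loop requires showing that this binding already dominates the binding of the $N$-th electron of $\Psi_N$ — which is almost the convexity we set out to prove. Bridging this gap demands an $N$-uniform control of correlations inside $\Psi_{N+1}$: concretely, one would need to show that the one-body density matrix of $\Psi_{N+1}$, after removing one unit of mass, induces a competitor for $E_N$ whose correlation energy does not exceed that of $\Psi_N$. No such rearrangement tool is presently available at the full many-body level, which is precisely why Conjecture~\ref{conCV} remains open. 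A tractable intermediate target would be to establish the analogous convexity for density-functional surrogates (Thomas--Fermi or reduced Hartree--Fock), where the strict convexity of the Coulomb energy in the density makes density rearrangement arguments applicable, and then to transport that convexity to the many-body problem in the semiclassical regime $N\sim Z\to\infty$ via expansions of $E_N$ with errors uniform in $N$.
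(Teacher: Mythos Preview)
The statement you were given is a \emph{conjecture}, not a theorem: the paper does not prove it and explicitly flags even its weakest consequence (that binding $N$ electrons implies binding $N-1$) as ``still an open mathematical question''. There is therefore no proof in the paper to compare against, and any claimed proof would be a major new result.

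Your write-up is not a proof either, and you acknowledge as much. What you have produced is a reasonable discussion of why the obvious trial-state approach stalls: building $\Phi_1=\Psi_{N-1}\wedge\varphi$ and $\Phi_2$ by tracing out one particle from $\Psi_{N+1}$ is the first thing anyone tries, and you correctly identify the circularity --- controlling $\langle\Phi_2,H_N\Phi_2\rangle$ in terms of $E_N$ requires precisely the kind of correlation bound that would already imply convexity. That diagnosis is accurate and matches the community's understanding of why Conjecture~\ref{conCV} is hard. Your suggestion to first settle the analogue in convex density-functional models and then transfer it semiclassically is sensible as a research program, but note that even a uniform-in-$N$ energy expansion to sufficient precision is currently out of reach (the best many-body bound is $N_c\le Z+O(Z^{5/7})$), so this route would not yield the full conjecture either.

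In short: there is no gap to name because there is no proof, yours or the paper's. If this were submitted as a proof it would have to be rejected; as a commentary on the obstruction it is fine.
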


A consequence of Conjecture \ref{conCV} is that if the nucleus can bind $N$ electrons ($E_N<E_{N-1}$), then it can also bind $N-1$ electrons ($E_{N-1}<E_{N-2}$).  This seemingly obvious fact is still an open mathematical question!

\bigskip

In the next sections, we will review some rigorous results on the ionization problem, in the full many-body Schr\"odinger theory as well as in some simplified models. We will summarize the main ideas and mention several open problems, thus extending a cordial invitation to young researchers to enter the subject. See also \cite{N21} for a shorter review on the same topic.


\bigskip

\section{Non-asymptotic bounds} \label{sec:2}

The fact that $N_c=N_c(Z)<\infty$ is already highly nontrivial. This was proved in 1982 independently by Ruskai \cite{R82,R82b} and Sigal  \cite{S82,S84}. While the conjectured bound $N_c(Z)\le Z+O(1)_{Z\to \infty}$ remains open, there are non-asymptotic bounds on $N_c(Z)$ and the most important one was proved by Lieb in 1984. 

\begin{theorem}[\cite{L84b,L84}] \label{thm:L84} We have $N_c(Z)<2Z+1$ for all  $Z>0$.

\end{theorem}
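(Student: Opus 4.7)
The plan is to test the Schr\"odinger equation $H_N \Psi = E_N \Psi$ against the multiplier $|x_i| \Psi$ for each $i$ and combine the resulting inequalities via the triangle inequality. Concretely, I will show that whenever $H_N$ admits a normalized (real, anti-symmetric) eigenfunction $\Psi \in H^1(\R^{3N})$ with $H_N \Psi = E_N \Psi$, one must have $N < 2Z + 1$; applied at $N = N_c$, where a minimizer exists by definition and thus satisfies \eqref{eq:EN-eq}, this yields the theorem. The weight $|x_i|$ is the right choice because it is non-negative, commutes with the Hamiltonian $H_{N-1}^{(i)}$ of the other $N-1$ electrons, and turns the singular nuclear attraction $-Z/|x_i|$ into the constant $-Z$.

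Fix $i$ and split $H_N = H_{N-1}^{(i)} - \Delta_{x_i} - Z/|x_i| + \sum_{j \neq i} |x_i - x_j|^{-1}$, where $H_{N-1}^{(i)}$ depends only on $\{x_j : j \neq i\}$. For almost every $x_i$, the slice of $\Psi$ is anti-symmetric in the remaining variables, hence
\[
\int \Psi \, H_{N-1}^{(i)} \Psi \prod_{j \neq i} \dd x_j \;\geq\; E_{N-1} \int \Psi^2 \prod_{j \neq i} \dd x_j.
\]
Multiplying by $|x_i| \geq 0$, integrating in $x_i$, and using $E_N \leq E_{N-1}$ (HVZ), I obtain
\[
\bigl\langle \Psi, |x_i| \bigl( -\Delta_{x_i} - Z/|x_i| + \textstyle \sum_{j \neq i} |x_i - x_j|^{-1} \bigr) \Psi \bigr\rangle \;\leq\; 0.
\]
Integration by parts yields $\langle \Psi, |x_i|(-\Delta_{x_i}) \Psi \rangle = \int |x_i| |\nabla_{x_i} \Psi|^2 - \int \Psi^2 / |x_i|$, and the substitution $\phi = |x_i|^{1/2} \Psi$ followed by Hardy's inequality $\int |\nabla_{x_i}\phi|^2 \geq \tfrac14 \int \phi^2/|x_i|^2$ rewrites this kinetic contribution as $\int |\nabla_{x_i} \phi|^2 - \tfrac14 \int \phi^2 / |x_i|^2 \geq 0$. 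Dropping this non-negative term and using $\|\Psi\|_{L^2} = 1$ yields the clean bound
\[
\sum_{j \neq i} \int \Psi^2 \frac{|x_i|}{|x_i - x_j|} \;\leq\; Z.
\]

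Summing over $i = 1,\dots,N$ symmetrizes the left-hand side into $\sum_{1 \leq i < j \leq N} \int \Psi^2 (|x_i| + |x_j|)/|x_i - x_j|$, and the triangle inequality $|x_i| + |x_j| \geq |x_i - x_j|$ gives $N Z \geq \binom{N}{2}$. The triangle inequality is strict off the Lebesgue-null set where $x_j = -t x_i$ for some $t \geq 0$; combined with $\int \Psi^2 = 1$ this upgrades the estimate to $N Z > N(N-1)/2$, i.e.\ $N < 2Z + 1$. The main technical subtlety I anticipate is justifying the integration by parts with the Lipschitz weight $|x_i|$ near its singularity at the origin and its linear growth at infinity; this is handled by a smooth approximation (e.g.\ replacing $|x_i|$ by $(|x_i|^2 + \varepsilon^2)^{1/2}$ and multiplying by a radial cutoff) together with the $H^1$-regularity and exponential decay of bound-state eigenfunctions. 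The conceptual heart of the proof is the algebraic cancellation $|x_i| \cdot (-Z/|x_i|) = -Z$, which removes the singularity of the one-body attraction and lets the many-body repulsion, symmetrized by the triangle inequality, dominate once $N$ exceeds $2Z$.
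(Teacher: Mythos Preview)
Your proof is correct and follows essentially the same Benguria--Lieb argument as the paper: multiply the eigenvalue equation by $|x_i|$, use $H_{N-1}^{(i)}\ge E_{N-1}\ge E_N$ and the positivity of the weighted kinetic term, then apply the triangle inequality to the symmetrized interaction. The only cosmetic differences are that you sum over all $i$ rather than using the symmetry of $|\Psi|^2$ to work with a single index (equivalent), and that you derive the kinetic bound $|x|(-\Delta)+(-\Delta)|x|\ge 0$ via the substitution $\phi=|x|^{1/2}\Psi$ and Hardy's inequality---which is precisely the alternative route the paper mentions just after \eqref{eq:Lieb-ineq}.
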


This result holds even if $Z$ is not an integer, and it can also be  extended to molecules. In particular, it settles the ionization conjecture for the hydrogen atom. In spite of its importance, the proof of Theorem \ref{thm:L84} is so short and elegant that, as recommended in \cite{L84b}, it can be given in any elementary course of quantum mechanics. 

\begin{proof}[Proof of Theorem \ref{thm:L84}] Assume that  Schr\"odinger's equation \eqref{eq:EN-eq} has a solution $\Psi$. Then multiplying the equation with $|x_N|$ we have 
\begin{align}  \label{eq:Lieb-proof-0}
0&=\langle |x_N| \Psi, (H_N -E_N) \Psi\rangle= \langle |x_N| \Psi, (H_{N-1} -E_N) \Psi\rangle  \nonumber\\ 
&\quad+  \Re \langle |x_N| \Psi, (-\Delta_{x_N}) \Psi\rangle - Z +\frac{1}{2}\sum_{i=1}^{N-1} \left\langle \Psi, \frac{|x_N|+|x_i|}{|x_i-x_N|} \Psi \right\rangle. 
\end{align}
Here we have used the symmetry of $|\Psi|^2$ to symmetrize the interaction term. The first two terms on the right hand side of \eqref{eq:Lieb-proof-0} can be dropped for a lower bound thanks to the obvious inequality $
H_{N-1} \ge E_{N-1}\ge E_N
$ for the first $(N-1)$ electrons and the operator inequality
\begin{equation} \label{eq:Lieb-ineq}
(-\Delta)|x|+|x|(-\Delta)\ge 0\text{ on } L^2(\mathbb{R}^3)
\end{equation}
for the $N$-th electron.  Combining with the triangle inequality $|x|+|y| \ge |x-y|$ we conclude from \eqref{eq:Lieb-proof-0} that 
$ 0 > - Z + \frac{N-1}{2},  
$ namely $N<2Z+1$. Here we get the strict inequality because the triangle inequality is strict almost everywhere.  \end{proof}

The one-body inequality \eqref{eq:Lieb-ineq} was proved directly in \cite{L84b,L84} by integration by parts. As explained in \cite{N12}, this bound can be also deduced from Hardy's inequality \eqref{eq:Hardy} by applying the IMS formula 
\begin{equation} \label{eq:IMS}
\frac{\varphi(x)^2 (-\Delta)+ (-\Delta)\varphi(x)^2}{2}= \varphi(x) (-\Delta) \varphi (x) - |\nabla \varphi|^2
\end{equation}
to the case $\varphi(x)= |x|^{1/2}$. In 2013, Chen and Siedentop \cite{CS13} proved an interesting generalization of \eqref{eq:IMS}: if $\min(a,b)\in [0,2]$ and $a+b\le d$, then  
$$
|\nabla_x|^a |x|^b + |x|^b |\nabla_x|^a\ge 0 \text { on }L^2(\R^d). 
$$

The method of ``multiplying the equation by $|x|$" is called the {\em Benguria--Lieb argument}. It was used by Benguria on simplified models \cite{B79} and extended by Lieb to the full many-body context. Nowadays, this is a standard argument in the analysis of Coulomb systems. Let us discuss below two further results obtained by variants of this argument. 

In 2012, we derived a new bound which improves  Theorem \ref{thm:L84} for $Z\ge 6$. 

\begin{theorem}[\cite{N12}] \label{thm:N12} We have $N_c(Z)<1.22 Z+3 Z^{1/3}$ for all $Z\ge 1$. 
\end{theorem}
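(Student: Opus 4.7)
The strategy is to strengthen the Benguria--Lieb argument by replacing the multiplier $|x_N|$ with a tunable weight $w(x_N)$, and then to extract a gain over the naive triangle inequality used at the very end of Lieb's proof. A natural one-parameter family to try is $w(r)=r^s$ with $s\in(0,1]$, where the Chen--Siedentop extension of \eqref{eq:Lieb-ineq} quoted in the text guarantees $w(-\Delta)+(-\Delta)w\ge 0$ on $L^2(\R^3)$; one may also superimpose a radial cutoff at a scale $R$ to be optimized at the end.

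First, for a minimizer $\Psi$ of $E_N$, I would multiply $H_N\Psi=E_N\Psi$ by $w(x_N)\Psi$, take real parts, and symmetrize each pair interaction as in Lieb's computation. Because $w(x_N)$ commutes with $H_{N-1}$, the $H_{N-1}$ contribution equals $\langle w(x_N)^{1/2}\Psi,H_{N-1}w(x_N)^{1/2}\Psi\rangle\ge E_{N-1}\langle\Psi,w(x_N)\Psi\rangle\ge E_N\langle\Psi,w(x_N)\Psi\rangle$ and cancels the left-hand side. The kinetic piece is non-negative by Chen--Siedentop and can be discarded for a lower bound. What remains is the purely potential-theoretic inequality
\begin{equation*}
\tfrac12\sum_{i=1}^{N-1}\Bigl\langle\Psi,\tfrac{w(x_N)+w(x_i)}{|x_i-x_N|}\Psi\Bigr\rangle\;\le\;Z\Bigl\langle\Psi,\tfrac{w(x_N)}{|x_N|}\Psi\Bigr\rangle,
\end{equation*}
which for $w(r)=r$ collapses to Lieb's $N<2Z+1$ and for general $w$ leaves room for improvement.

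Second -- and this is the heart of the proof -- I would replace the triangle-inequality bound $(|x_N|+|x_i|)/|x_i-x_N|\ge 1$ implicit in the $s=1$ case with a sharper pointwise inequality of the form $(w(x_N)+w(x_i))/|x_i-x_N|\ge \alpha(s)\bigl(w(x_N)/|x_N|+w(x_i)/|x_i|\bigr)$ with an explicit constant $\alpha(s)>\tfrac12$, valid away from the degenerate set where one of $|x_N|,|x_i|$ is much smaller than the other. To absorb the loss on the degenerate set, I would invoke Pauli's exclusion principle through a Lieb--Thirring kinetic bound: the number of electrons inside any ball of radius $R$ centred at the nucleus is at most $O((ZR)^{3})$, so choosing $R$ of order $Z^{-1/3}$ caps the loss at $O(Z^{1/3})$. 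Re-symmetrizing over all $N$ electrons then yields an inequality of the schematic form $(N-1)\alpha(s)\le Z+CZ^{1/3}$, and optimizing over $(s,R)$ should pin the constants to $1/\alpha(s_\ast)\le 1.22$ and $C\le 3$.

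The main obstacle, I expect, is engineering the refined pointwise two-body inequality so that it genuinely improves on $\tfrac12$ after integration against an arbitrary antisymmetric $\Psi$ (with no spherical-symmetry assumption available) and then splicing it cleanly with the Pauli count of core electrons. The one-line algebraic miracle that makes the $s=1$ case of Lieb so short does not persist when $s\ne1$, and both explicit constants $1.22$ and $3$ arise only through a careful quantitative balance between the exponent $s$, the localization radius $R\sim Z^{-1/3}$, and the semiclassical count of electrons that are geometrically close to the nucleus.
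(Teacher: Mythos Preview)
Your proposal has a genuine gap at the very step you flag as ``the heart of the proof.'' The pointwise two-body inequality you want,
\[
\frac{w(x)+w(y)}{|x-y|}\;\ge\;\alpha(s)\Bigl(\frac{w(x)}{|x|}+\frac{w(y)}{|y|}\Bigr),\qquad w(r)=r^s,
\]
cannot hold with $\alpha(s)>\tfrac12$ on any region that matters. Take $y=-x$ with $|x|=|y|=r$: the left side equals $r^{s-1}$ and the right side equals $2\alpha(s)\,r^{s-1}$, so $\alpha(s)\le\tfrac12$ for every $s$. This antipodal configuration has $|x|=|y|$, hence it is \emph{not} in the ``degenerate set'' you hoped to excise, and no localization near the nucleus removes it. A pointwise bound therefore cannot push the leading constant below $2$.

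The paper's route differs in two essential ways. First, it multiplies by $|x_N|^2$ (i.e.\ $s=2$), not by $|x_N|^s$ with $s\le 1$; the kinetic commutator is then no longer nonnegative but only $\ge -\tfrac38$ via the IMS formula, and \emph{that} small loss is what the Lieb--Thirring inequality is used for (it yields $Z\langle\Psi,|x_N|\Psi\rangle>0.553\,N^{2/3}$, which is where the $3Z^{1/3}$ term comes from). Second---and this is the crucial idea you are missing---the gain over $\tfrac12$ is not pointwise but \emph{collective}: after symmetrizing over all $N$ particles one is led to
\[
\beta\;=\;\inf_{\rho}\ \frac{\displaystyle\iint \frac{|x|^2+|y|^2}{2|x-y|}\,\dd\rho(x)\,\dd\rho(y)}{\displaystyle\int |x|\,\dd\rho(x)},
\]
the infimum over probability measures on $\R^3$, and one proves $\beta\ge 0.82$ (whence $\beta^{-1}\approx 1.22$). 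The point is that although a single pair can be antipodal, not all $\binom{N}{2}$ pairs can be simultaneously in the worst configuration; this many-body effect is what beats the triangle inequality. Your scheme, built on a pair-by-pair estimate, cannot see it.
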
 

\begin{proof}[Ideas of the proof] Multiplying Schr\"odinger's equation \eqref{eq:EN-eq}  with $|x_N|^2$ (instead of $|x_N|$) we have
\begin{align*}  
0&=\langle |x_N|^2 \Psi, (H_N -E_N) \Psi\rangle = \langle |x_N|^2 \Psi, (H_{N-1} -E_N) \Psi\rangle \\
&\quad +  \Re \langle |x_N|^2 \Psi, (-\Delta_{x_N}) \Psi\rangle - Z  \left\langle \Psi, |x_N| \Psi \right\rangle   + \frac{1}{2}\sum_{i=1}^{N-1} \left\langle \Psi, \frac{|x_N|^2+|x_i|^2}{|x_i-x_N|} \Psi \right\rangle. 
\end{align*}
The first term on the right hand side can be dropped for a lower bound as before. For the second term, by the IMS formula \eqref{eq:IMS} and Hardy's inequality \eqref{eq:Hardy}   we have
\begin{equation} \label{eq:IMS-x2}
\frac{|x|^2(-\Delta)+ (-\Delta)|x|^2}{2}= |x|(-\Delta) |x| - 1  \ge \frac{1}{4}-1= -\frac{3}{8}. 
\end{equation}
The error in \eqref{eq:IMS-x2} is small in comparison with the third term thanks to the bound 
$$
Z \left\langle \Psi, |x_N| \Psi \right\rangle > 0.553 N^{2/3} 
$$
which is a consequence of the Lieb-Thirring inequality \cite{LT75}. On the other hand, the last term on the right hand side, up to a symmetrization, can be estimated by 
$$
\beta \ge \mathop {\inf }\limits_{\{x_i\}_{i=1}^N\subset \mathbb{R}^3 } \frac{{\sum\limits_{1 \le i < j \le N} {\frac{{|x_i |^2  + |x_j |^2 }}
{{|x_i  - x_j |}}} }}
{{N(N-1)\sum\limits_{i = 1}^N {|x_i |} }} \ge  \beta -1.55 N^{-2/3}
$$
where $\beta$ is determined by a variational problem of infinitely many classical particles
$$
\beta:= \inf_{\substack{\rm \rho ~probability\\ \rm ~measure~in~\mathbb{R}^3}} \left\{ {\frac{{\iint\limits_{\mathbb{R}^3  \times \mathbb{R}^3 } {\frac{{x^2+y^2}}
{{2|x - y|}} {{\rm d}\rho} (x){{\rm d}\rho} (y)}}}
{{\int\limits_{\mathbb{R}^3 } {|x|  {{\rm d}\rho} (x)} } }} \right\}. 
$$
The key improvement comes from the fact that $\beta \ge 0.82$ (instead of $1/2$ by triangle inequality). This eventually implies that $N<1.22 \,Z + 3 Z^{1/3}$ (here $\beta^{-1} \approx 1.22$). 
\end{proof}

In 2013, Lenzmann and Lewin proved a stronger version of the nonexistence, where the absence of not only ground states but also all eigenfunctions  is concerned. 

\begin{theorem}[\cite{LL13}] \label{thm:LL13} If $N\ge 4Z+1$, then $H_N$ has no eigenvalue. 
\end{theorem}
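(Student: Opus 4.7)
The plan is to adapt the Benguria--Lieb argument from the proof of Theorem \ref{thm:L84} to an arbitrary eigenfunction $\Psi$ of $H_N$ with eigenvalue $E$, rather than a ground state. The essential new subtlety is that we no longer have $H_{N-1} \geq E$, since $E$ may be an excited or even embedded eigenvalue with $E > E_{N-1}$.

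First, I would repeat Lieb's computation verbatim on the eigenvalue equation $(H_N - E)\Psi = 0$: multiply by $|x_N|$, dispose of the $N$-th kinetic contribution by the operator inequality \eqref{eq:Lieb-ineq}, symmetrize the inter-electronic repulsion under the exchange $i \leftrightarrow N$, and bound the remaining $(N-1)$-particle term from below by $H_{N-1} \geq E_{N-1}$ instead of $H_{N-1} \geq E$. The resulting identity takes the form
\begin{equation*}
Z + (E - E_{N-1})\,\langle \Psi, |x_N|\Psi\rangle \ \geq\ \frac{N-1}{2}.
\end{equation*}
If $E \leq E_{N-1}$ (the standard case of a discrete eigenvalue strictly below the essential spectrum), the second term on the left is non-positive and this immediately yields $N < 2Z+1$, which is strictly stronger than the desired bound.

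The interesting case is that of an embedded eigenvalue, $E > E_{N-1}$. Here the identity above only forces $\langle \Psi, |x_N|\Psi\rangle$ to be \emph{large}, and by itself gives no contradiction. My plan for this case is to extract a second, complementary identity by multiplying $(H_N - E)\Psi = 0$ by a heavier or more refined weight --- for example $|x_N|^2$ in the spirit of Theorem \ref{thm:N12}, treated via \eqref{eq:IMS-x2}, or a pair weight such as $|x_N - x_{N-1}|$ for which the combined kinetic energy of the two electrons is controlled by the Lieb operator inequality applied in relative coordinates. The two identities can then be combined algebraically so as to eliminate the unknown moment $\langle \Psi, |x_N|\Psi\rangle$ between them. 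A factor-two loss in the constant, relative to the pure Lieb bound, is expected to appear naturally from the trade-off of constants in the combined estimate, producing the final bound $N < 4Z+1$.

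The principal obstacle I anticipate is not algebraic but analytic: for an embedded eigenvalue of the many-body operator $H_N$, the eigenfunction $\Psi$ is not automatically known to have finite polynomial moments $\langle \Psi, |x_N|^s \Psi\rangle$, so multiplying the equation by $|x_N|$ or $|x_N|^2$ is not a priori legitimate. In the subcritical case $E < E_{N-1}$, exponential decay of $\Psi$ takes care of this issue, but in the embedded regime one needs a genuine input. The cleanest route would be to appeal to a Froese--Herbst-type decay theorem for eigenfunctions of atomic Hamiltonians, guaranteeing an exponential bound and hence finiteness of every moment; alternatively, one must perform the whole argument with bounded cutoffs $\min(|x_N|, R)$, and justify passing $R \to \infty$ in the final inequality while controlling the boundary contributions that arise. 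Once this analytic point is secured, the algebraic combination in the embedded case should be a short calculation yielding the stated bound.
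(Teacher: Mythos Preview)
Your diagnosis of the difficulty is correct: for an embedded eigenvalue $E>E_{N-1}$ the term $\langle |x_N|\Psi,(H_{N-1}-E)\Psi\rangle$ has the wrong sign, and Lieb's argument as written only tells you that $\langle\Psi,|x_N|\Psi\rangle$ is large. But your proposed remedy does not work. Multiplying the equation by a second weight $w(x_N)$ (whether $|x_N|^2$, $|x_N-x_{N-1}|$, or anything else) produces a \emph{new} uncontrolled term $(E-E_{N-1})\langle\Psi,w(x_N)\Psi\rangle$ involving a \emph{different} moment. You now have two inequalities in two unknown moments with the same bad coefficient $(E-E_{N-1})>0$; there is no algebraic combination that eliminates both, and in fact each additional weight only introduces a fresh uncontrolled quantity. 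The ``factor-two loss'' you anticipate has no mechanism to emerge from this scheme.

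The paper's proof avoids this obstruction by a genuinely different idea: instead of multiplying $(H_N-E)\Psi=0$ by a weight, it uses the commutator identity
\[
0=\langle\Psi,\mathbf{i}[H_N,A_{x_N}]\Psi\rangle
\]
with $A=\mathbf{i}[\Delta,f(x)]$ and $f(x)=\tfrac{1}{3}|x|^3$. The point is that $A_{x_N}$ acts only on the $N$-th variable, so $[H_{N-1},A_{x_N}]=0$ and the $(N-1)$-body Hamiltonian \emph{disappears from the identity altogether}; neither $E$ nor $E_{N-1}$ enters. What remains is a double-commutator kinetic term $[\Delta,[\Delta,|x|^3]]\le 0$ and a pair term controlled by the pointwise inequality $\dfrac{(|x|x-|y|y)\cdot(x-y)}{|x-y|^3}\ge\tfrac{1}{2}$, which is where the constant $4$ (rather than $2$) arises. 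Your observation about decay is well taken; the paper handles it not via Froese--Herbst but by running the argument with the bounded regularization $f_R(x)=R^3 g(|x|/R)$, $g(r)=r-\arctan r$, and sending $R\to\infty$.
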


\begin{proof}[Ideas of the proof] If $\Psi$ is an eigenfunction of $H_N$, then for every one-body self-adjoint operator $A$ on $L^2(\R^3)$ we have 
\begin{equation} \label{eq:LL-proof-0-1}
0 = \langle \Psi,  {\bf i} [H_N, A_{x_N}] \Psi\rangle =  \left\langle \Psi, {\bf i} \left[-\Delta_{x_N} -\frac{Z}{|x_N|} + \sum_{j=1}^N \frac{1}{|x_j-x_N|}, A_{x_N} \right] \Psi \right\rangle 
\end{equation}
with ${\bf i}^2=-1$. In particular, choosing 
$$
A= {\bf i}[\Delta, f(x)]= ({\bf i} \nabla_{x})\cdot \nabla f(x) + \nabla f(x) \cdot ({\bf i} \nabla_{x}) 
$$
we find that
$$
0 =  \left\langle \Psi, [\Delta_{x_N},[\Delta_{x_N},f(x_N)]] \Psi \right\rangle + \left\langle \Psi, \nabla f (x_N)   \cdot \left( \frac{Z x_N}{|x_N|^3} -  \sum_{j=1}^{N-1}  \frac{x_N-x_j}{|x_N-x_j|^3}\right) \Psi \right\rangle.  
$$
With the special choice $f(x)=\frac{1}{3}|x|^3$ we have $\nabla f (x) = |x| x$, and hence 
$$
0 =  \frac{1}{3}\left\langle \Psi, [\Delta_{x_N},[\Delta_{x_N},|x_N|^3]] \Psi \right\rangle + Z -  \frac{1}{2}\left\langle \Psi,  \sum_{j=1}^{N-1} \frac{(|x_j| x_j - |x_N| x_N)\cdot (x_j-x_N) }{|x_j-x_N|^3} \Psi \right\rangle
$$
where we used the symmetrization for the interaction term. The latter identity is very similar to \eqref{eq:Lieb-proof-0}. Then the conclusion follows from two key ingredients: first 
$$
[\Delta,[\Delta,|x|^3]] \le 0 \text{ on }L^2(\R^3)
$$
which should be compared with  \eqref{eq:Lieb-ineq}, and second 
$$
\inf_{x\ne y\in \R^3}\frac{(|x|x- |y|y)\cdot (x-y)}{|x-y|^3} = \frac{1}{2}
$$
which should be compared with the triangle inequality. Thus $0>Z+\frac{N-1}{4}$, namely $N<4Z+1$. 

Strictly speaking, some decay condition on $\Psi$ is needed  to ensure that all relevant quantities are finite. However, this technical condition can be relaxed by choosing $f_R(x)=R^3 g(|x|/R)$ with $g(r)=r-\arctan r$ and then sending $R\to \infty$. 
\end{proof}

In the above proof, it is helpful to  interpret   \eqref{eq:LL-proof-0-1} as a stationary condition for the Schr\"odinger dynamics $\Psi(t)=e^{-{\bf i} tH_N} \Psi$: 
\begin{align*}
 0 = \frac{\d^2}{\d t^2} \langle \Psi(t), f(x_N) \Psi(t)\rangle =  \langle \Psi(t), -[H_N,[H_N, f(x_N)]] \Psi(t)\rangle,
\end{align*}
which explains the choice of $A= {\bf i} [\Delta, f(x)]$. This time-dependent technique goes back to the famous {\em Morawetz--Lin--Strauss estimate} for nonlinear Schr\"odinger equations (NLS). In the standard NLS, the choices $f(x)=|x|, |x|^2, |x|^4$ were used in \cite{M68,G77,T08}, respectively. The argument in \cite{LL13} shows that the new choice $f(x)=|x|^3$ corresponds to the time-dependent version of Lieb's proof in \cite{L84b,L84}. 

The above approach motivates a stronger version of Conjecture \ref{con2}. 

\begin{conjecture} \label{eq:Con-stronger}  There exists a universal constant $C>0$ such that if $N>Z+C$, then $H_N$ has no eigenvalue. 
\end{conjecture}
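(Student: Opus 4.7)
The plan is to push the Morawetz-type approach of Lenzmann--Lewin further by choosing a more flexible multiplier and by introducing a variational improvement analogous to that of Theorem~\ref{thm:N12}. For any eigenfunction $\Psi$ of $H_N$ and any smooth $f:\mathbb{R}^3\to\mathbb{R}$ with suitable decay of derivatives, the identity $\langle \Psi, [H_N,[H_N,f(x_N)]]\Psi\rangle=0$ expands, after symmetrizing in the electron labels, into three pieces: a kinetic double commutator $\mathcal{K}[f]$, a nuclear piece $-Z\langle \Psi, \nabla f(x_N)\cdot x_N/|x_N|^3\,\Psi\rangle$, and a two-body piece
\begin{equation*}
\mathcal{I}[f]:=\frac{1}{2}\sum_{j=1}^{N-1}\left\langle \Psi,\,\bigl(\nabla f(x_N)-\nabla f(x_j)\bigr)\cdot\frac{x_N-x_j}{|x_N-x_j|^3}\,\Psi\right\rangle.
\end{equation*}
The strategy is then to look for a radial $f(x)=g(|x|)$ such that (i) $\mathcal{K}[f]$ is non-positive, or at least negative only by a Lieb--Thirring-controllable amount, (ii) the nuclear integrand equals a prescribed positive weight $h(|x_N|)$, and (iii) the pair kernel is bounded below by $\alpha\bigl(h(|x|)+h(|y|)\bigr)$ for the largest possible $\alpha$. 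Summing would yield $0\ge (\alpha(N-1)-Z)\langle\Psi, h(|x_N|)\Psi\rangle$ up to a lower-order error, hence $N<Z/\alpha+O(1)$.

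The Lenzmann--Lewin choice $f(x)=|x|^3/3$ realizes $h(r)=r^2$ and $\alpha=1/4$, yielding $N<4Z+1$, while Lieb's ground-state argument corresponds morally to $h(r)=1$ and $\alpha=1/2$, yielding $N<2Z+1$. To aim at the conjectured $\alpha=1$, I would try multipliers of the form $f(x)=|x|^3 G(|x|)$, or combinations of several powers, selected so that $\nabla f$ is asymptotically aligned with the Coulomb force. Following the refinement of Theorem~\ref{thm:N12}, the pointwise lower bound in (iii) should then be replaced by a variational bound: after relating the symmetrized sum $\sum_{i\ne j}K(x_i,x_j)$ to a classical two-particle energy, one would compute
\begin{equation*}
\beta_G:=\inf_{\rho}\frac{\iint K(x,y)\,\dd\rho(x)\,\dd\rho(y)}{\int h\,\dd\rho},
\end{equation*}
over probability measures $\rho$ on $\mathbb{R}^3$, and optimize the pair $(g,h)$ so that $\beta_G$ is driven as close to $1$ as possible; the gap between the pointwise infimum (which gave Lenzmann--Lewin their $1/4$) and the variational one is precisely what produced the improvement from $2Z$ to $1.22\,Z$ in Theorem~\ref{thm:N12}.

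The main obstacle is that the double-commutator identity is intrinsically dictated by two-particle geometry, while the sharp constant $\alpha=1$ reflects genuine many-body screening, namely the fact that the outermost electron sees an effective nuclear charge $Z-N+1$ rather than $Z-1$. Purely commutator-based inputs seem to lose this screening and to be structurally bounded away from $\alpha=1$, much as Theorem~\ref{thm:N12} appears to be close to the natural limit of the pure Benguria--Lieb technique. A second, more technical difficulty is that embedded eigenvalues (which must lie above $\inf\sigma_{\mathrm{ess}}(H_N)=E_{N-1}$ when $N>Z$) may correspond to eigenfunctions with only very weak decay, so the commutator identity has to be justified via a truncation $f_R(x)=R^3 g(|x|/R)$ as in \cite{LL13}, and the limit $R\to\infty$ must be shown to preserve all the sharp constants produced by (i)--(iii). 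I expect that closing the remaining factor between $4Z+1$ and $Z+C$ will require coupling this Morawetz/Benguria--Lieb framework with HVZ-style localization, or with a concentration-compactness analysis of the electron escaping to infinity, and it is this coupling that I anticipate to be the truly hard step.
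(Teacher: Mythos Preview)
The statement you are addressing is a \emph{conjecture} in the paper, not a theorem: the paper offers no proof, and indeed this is an open problem (a strengthening of the ionization conjecture, which is itself open). There is therefore no paper proof to compare your proposal against.

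Your write-up is not a proof either, and you are candid about this: it is a research plan built on the Lenzmann--Lewin/Morawetz framework combined with the variational refinement of Theorem~\ref{thm:N12}, together with an honest assessment of why the approach is unlikely to reach $\alpha=1$. The diagnosis you give is accurate. The commutator/Benguria--Lieb machinery processes the interaction as a sum of two-body kernels evaluated on $|\Psi|^2$, and the resulting classical variational problems (your $\beta_G$) are insensitive to the many-body correlations that encode screening; this is exactly why Theorem~\ref{thm:N12} saturates near $1.22\,Z$ and Theorem~\ref{thm:LL13} at $4Z+1$, rather than at $Z+O(1)$. So the ``genuine gap'' is not a slip in your argument but the same gap that keeps the conjecture open: no known choice of multiplier $f$, nor any known optimization over $(g,h)$, produces $\alpha$ close to $1$, and there is no mechanism in the pure commutator identity to import the screening information that Solovej's bootstrap extracts in the HF setting. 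Your closing remark---that one would need to couple the Morawetz identity with HVZ-type localization or a concentration-compactness argument for the escaping electron---is a reasonable speculation, but it is a statement of what remains to be invented, not a proof step.

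In short: the paper has no proof of this statement, your proposal does not supply one, and the obstacles you flag are precisely the ones that make this a conjecture.
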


\section{Stability of bosonic atoms}

We have mentioned in the introduction that the ionization problem is strongly associated to Pauli's exclusion principle. However, we have not seen this subtle fact so far.   The heuristic idea supporting for Conjecture \ref{con2} relies only on an electrostatic argument which is purely classical. Hence, in principle it applies to not only  {\em anti-symmetric} wave functions $\Psi\in L^2(\R^{3N})$ but also  {\em totally symmetric} ones, namely
\begin{equation} \label{eq:Pauli-off}
\Psi(x_1,...,x_i,...,x_j,...,x_N)= \Psi(x_1,...,x_j, ..., x_i, ...,x_N), \quad \forall i\ne j.
\end{equation}
The latter case corresponds to the so-called ``bosonic atoms" where electrons are treated as if they were bosonic particles. 

Note that all of the HVZ theorem, Zhislin's theorem (Theorem \ref{thm:Zhislin}) and Lieb's theorem (Theorem \ref{thm:L84})  work equally well for bosonic atoms. In contrast,  in 1983, Benguria and Lieb proved a striking result  that Conjecture \ref{con2} {\em fails} in the bosonic case, thus firmly validating the importance of Pauli's exclusion principle   \eqref{eq:Pauli} in the ionization problem. 

\begin{theorem}[\cite{BL83}] \label{thm:BL83} If \eqref{eq:Pauli} is replaced by the bosonic symmetry \eqref{eq:Pauli-off}, then 
$$\liminf_{Z\to \infty} \frac{N_c(Z)}{Z} \ge t_c >1. 
$$
\end{theorem}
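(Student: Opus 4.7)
The natural approach is Hartree (mean-field) theory, which---unlike in the fermionic case---is asymptotically exact for bosons and admits binding well past $N = Z$.

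Introduce the reduced one-body functional
\begin{equation*}
\mathcal{F}_\lambda(\chi) := \int_{\R^3} |\nabla \chi|^2 - \int \frac{|\chi|^2}{|x|} + \frac{\lambda}{2}\iint \frac{|\chi(x)|^2|\chi(y)|^2}{|x-y|}\d x\d y,
\end{equation*}
and set $e(\lambda) := \inf\{\mathcal{F}_\lambda(\chi):\|\chi\|_{L^2} = 1\}$. A direct scaling calculation with the product trial state $\Psi = \phi^{\otimes N}$, $\phi(x) = Z^{3/2}\chi(Zx)$, yields $\langle\Psi, H_N\Psi\rangle = NZ^2\, \mathcal{F}_{(N-1)/Z}(\chi)$, hence immediately $E_N(Z) \leq NZ^2\,e((N-1)/Z)$. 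The nonperturbative input (essentially Benguria's earlier work \cite{B79}) is that there exists a critical parameter $\lambda_c > 1$ such that $e(\lambda)$ is attained by a smooth positive minimizer $\chi_\lambda$ solving the Hartree equation
\begin{equation*}
\Bigl(-\Delta - \tfrac{1}{|x|} + \lambda\,|\chi_\lambda|^2 * \tfrac{1}{|\cdot|}\Bigr) \chi_\lambda = \mu(\lambda)\chi_\lambda,\qquad \mu(\lambda)<0,
\end{equation*}
for every $\lambda \in (0, \lambda_c]$. The strict inequality $\lambda_c > 1$ is the crucial quantitative ingredient: an explicit variational trial state at $\lambda = 1$ produces a one-body bound state with $\mu(1) < 0$, and continuity extends this to an open interval past the ``neutrality'' value.

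Next I would establish the matching asymptotic lower bound
\begin{equation*}
E_N(Z) \geq NZ^2 \,e((N-1)/Z) + o(Z^2) \qquad\text{as } Z \to \infty,\ N/Z \to \lambda \in (0, \lambda_c),
\end{equation*}
by a quantitative mean-field analysis exploiting bosonic symmetry: any approximate ground state undergoes complete Bose--Einstein condensation onto the Hartree minimizer, and the $N$-body energy may then be compared to its mean-field value through coherent-state localization together with a quantum de Finetti estimate (alternatively via a Hoffmann--Ostenhof-type one-density bound). Combining with the upper bound and the envelope identity $\tfrac{d}{d\lambda}[\lambda\, e(\lambda)] = \mu(\lambda)$ (obtained by multiplying the Hartree equation by $\chi_\lambda$ and integrating) gives
\begin{equation*}
E_N(Z) - E_{N-1}(Z) = Z^2\,\mu\bigl((N-1)/Z\bigr) + o(Z^2) < 0
\end{equation*}
for $(N-1)/Z \in (0, \lambda_c)$ and $Z$ large. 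By the bosonic HVZ theorem, this strict binding forces $E_N$ to be attained in the symmetric sector, so $N_c(Z) \geq N$. Choosing $N = \lfloor tZ\rfloor$ with any $t \in (1, \lambda_c)$ and letting $t \nearrow \lambda_c$ yields $\liminf_{Z\to\infty} N_c(Z)/Z \geq \lambda_c > 1$.

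The decisive step is the matching lower bound with the required $o(Z^2)$ precision. The product-state upper bound and Benguria's one-body variational analysis are comparatively soft, but detecting strict binding at the $Z^2$ scale calls for genuinely quantitative control of the fluctuations around the condensate; the fact that the error must be $o(Z^2)$ rather than merely $o(Z^3)$ is the technical heart of the argument.
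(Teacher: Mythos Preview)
Your approach has a genuine gap at the step you yourself flag as decisive. The lower bound $E_N(Z) \ge NZ^2\,e((N-1)/Z) + o(Z^2)$ is not delivered by any of the tools you invoke: the Hoffmann--Ostenhof plus Lieb--Oxford route controls the energy only up to $O(Z^{7/3})$, and the quantum de~Finetti / coherent-state machinery likewise yields errors far larger than $o(Z^2)$ in this regime. Worse, the bound is expected to be \emph{false}: after rescaling $x\mapsto x/Z$ the bosonic atom becomes a genuine mean-field system with $N\sim Z$ particles and coupling $1/Z$, whose ground state energy carries a Bogoliubov correction of order $Z^2$ \emph{below} the Hartree value. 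Thus the Hartree upper bound for $E_N$ and any Hartree-type lower bound for $E_{N-1}$ are separated by a gap of the same order $Z^2$ as the binding you want to detect, and the comparison $E_N < E_{N-1}$ cannot be closed this way.

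The paper avoids this entirely by an argument that needs only leading-order precision $o(Z^3)$. The key observation is that $E_N = E_{N_c}$ for every $N\ge N_c$ (by HVZ and the definition of $N_c$), so one never has to resolve the difference $E_N-E_{N-1}$. Instead one assumes, for contradiction, that $N_c \le t_c Z$ and compares
\[
Z^3\Bigl(e(N_c/Z) - CZ^{-2/3}\Bigr) \;\le\; E_{N_c} \;=\; E_{\lfloor t_c Z\rfloor+1} \;\le\; E^{\rm H}_{\lfloor t_c Z\rfloor+1} \;\le\; Z^3\, e(t_c) + o(Z^3),
\]
the left inequality being Hoffmann--Ostenhof plus Lieb--Oxford and the right the product trial state. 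Passing to the limit gives $\limsup_{Z\to\infty} e(N_c/Z) \le e(t_c)$, and the \emph{strict} monotonicity of the Hartree curve $e(t)$ on $[0,t_c]$ then forces $\liminf N_c/Z \ge t_c$. The whole comparison lives at the $Z^3$ scale; the $Z^2$ scale you are aiming at is never touched.
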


As we will see below, the constant $t_c$ is taken from Hartree's theory, which is known numerically $t_c \approx 1.21$ \cite{B84}. In 1990, Solovej \cite{S90b} proved the optimality of Theorem \ref{thm:BL83}  by providing a matching asymptotic upper bound, namely $N_c/Z\to t_c$. 

In principle, it is also natural to consider the variational problem \eqref{eq:EN} without any symmetry condition, but this results in the same problem with the bosonic symmetry \eqref{eq:Pauli-off}, see e.g. \cite[Corollary 3.1]{LS09}.

%

\begin{proof}[Proof of Theorem \ref{thm:BL83}] 

The main principle of the proof is that the many-body energy $E_N$ in \eqref{eq:EN} with the symmetry condition \eqref{eq:Pauli-off}  can be approximated by {Hartree's theory} where one restricts to the uncorrelated ansatz
\begin{align}\label{eq:Hartree}
\Psi(x_1,...,x_N)= (u^{\otimes N})(x_1,...,x_N) = u(x_1)... u(x_N) 
\end{align}
for a normalized function $u\in L^2(\R^3)$. This leads to Hartree's energy   
\begin{align} \label{eq:eH}
E^{\rm H}_N=&\inf_{\|u\|_{L^2}=1} \langle u^{\otimes N}, H_N u^{\otimes N}\rangle \nonumber\\
&=  \inf_{\|u\|_{L^2}=1}  N \int_{\R^3} \left(|\nabla u(x)|^2 - \frac{Z |u(x)|^2}{|x|} + \frac{N-1}{2} |u(x)|^2 \left (|u|^2 * \frac{1}{|x|} \right) \right) \d x.  
\end{align}
Note that by rescaling $u(x)= t^{-1/2} Z^{-3}v(x/Z)$ with $t=(N-1)/Z$ we can write 
\begin{align} \label{eq:eH}
E^{\rm H}_N = \frac{N Z^3}{N-1} e(t)
\end{align}
where
\begin{align} \label{eq:eHt}
e(t)= \inf_{\|v\|^2_{L^2}=t}  \int_{\R^3} \left(|\nabla v(x)|^2 - \frac{|v(x)|^2}{|x|} + \frac{1}{2} |v(x)|^2 \left (|v|^2 * \frac{1}{|x|} \right) \right) \d x .
\end{align}
The existence/nonexistence within Hartree's theory is fairly easy to handle since the functional on the right hand side of \eqref{eq:eHt}  is convex in $|v|^2$. In particular, it is well-known that $e(t)$ has a minimizer if and only if $t\le t_c$ for a constant $t_c>1$; moreover $e(t)$ is negative and strictly decreasing when $t\le t_c$ while $e(t)=e(t_c)$ for all $t\ge t_c$ (see \cite[Lemma 13]{BBL81} and \cite[Theorem 7.16]{L81}).

The main difficulty here is to justify Hartree's approximation. The upper bound $E_N\le E_N^{\rm H}$ follows directly by the variational principle, but obtaining a good  lower bound is not obvious. This follows from the {Hoffmann--Ostenhof inequality} \cite{HO77} 
$$
K=\left\langle \Psi, \sum_{i=1}^N (-\Delta_{x_i})\Psi \right\rangle \ge \int_{\R^3} |\nabla \sqrt{\rho_{\Psi}}|^2 
$$
and the {Lieb--Oxford inequality} \cite{LO81} 
\begin{align} \label{eq:LO}
\left\langle \Psi, \sum_{1\le i<j\le N} \frac{1}{|x_i-x_j|} \Psi \right\rangle \ge \frac{1}{2}\int_{\R^3} \frac{\rho_\Psi(x)\rho_\Psi(y)}{|x-y|} \d x \d y - 1.68 \int_{\R^3}\rho_\Psi^{4/3}
\end{align}
where $\rho_\Psi$ is the one-body density of the $N$-body wave function $\Psi$,
\begin{align} \label{eq:rhoN}
\rho_\Psi(x) = N \int_{\R^{3(N-1)}} |\Psi(x,x_2,...,x_N)| \d x_2 ... \d x_N.
\end{align}
Note that $\rho\ge 0$ and $\int_{\R^3} \rho=N$ since $\Psi$ is normalized. The error term in \eqref{eq:LO} can be controlled further by H\"older's and Sobolev's inequalities 
$$
\int_{\R^3} \rho_\Psi^{4/3} \le \left( \int_{\R^3} \rho_\Psi \right)^{5/6} \left( \int_{\R^3} \rho_\Psi^3 \right)^{1/6} \le C N^{5/6} K^{1/2}.  
$$
By Hardy's inequality \eqref{eq:Hardy} it is easy to show that $E_N \ge - CN Z^{2}$. Consequently, for a ground state $\Psi$ of $H_N$, the kinetic energy is controlled as $K\le CNZ^{2}$. Moreover, thanks to Lieb's theorem (Theorem \ref{thm:L84}), the existence of minimizer implies that $N\le 2Z+1$. All this gives 
\begin{align} \label{eq:EN-boson-lowerbound}
E_N &\ge   \int_{\R^3} \left( |\nabla \sqrt{\rho_{\Psi}}|^2  - \frac{Z  \rho_\Psi(x)}{|x|} + \frac{1}{2} \rho_\Psi(x) \left ( \rho_\Psi * \frac{1}{|x|} \right) \right) \d x - CN^{5/6} (NZ^2)^{1/2}  \noindent \nonumber\\
&\ge Z^3 \Big( e(N/Z)  - CZ^{-2/3}\Big). 
\end{align}

In summary, if $N_c \le t_c Z+1$, then from \eqref{eq:EN-ENc},  \eqref{eq:eH},  and \eqref{eq:EN-boson-lowerbound} we find that  
$$Z^3 \Big( e(N_c/Z)  - CZ^{-2/3}\Big) \le E_{N_c}=E_{t_c Z+1} \le Z^3 e(t_c),$$
and hence
$$
\limsup_{Z\to \infty} e(N_c/Z) \le e(t_c). 
$$
Since $e(t)$ is strictly decreasing when $t\le t_c$, we obtain the desired result 
$$\liminf_{Z\to \infty} N_c/Z = t_c.$$  
\end{proof}

In fact, the influence of Benguria and Lieb's argument in \cite{BL83} goes far beyond the context of the ionization problem. It has inspired several works dealing with the justification of  Hartree's theory from  many-body bosonic systems, which is  particularly relevant to the description of the Bose--Einstein condensation for interacting Bose gases. We refer to \cite{LNR14}  for further discussions.

Now let us focus on Hartree's theory. Note that the bound $t_c<2$ follows easily from the Benguria--Lieb argument as in  Theorem \ref{thm:L84}. However, it is not easy to improve. Very recently,  Benguria and Tubino \cite{BT22} successfully proved that $t_c<1.5211$ (thus approaching closer to the numerical value $t_c \approx 1.21$ \cite{B84}). Their proof strategy is similar to that of Theorem \ref{thm:N12}, but they were able to replace the use of the Lieb--Thirring inequality (which works only for fermions) by a clever application of the virial theorem to control the kinetic energy.

Finally let us mention the following analogue of Conjecture \ref{eq:Con-stronger} for Hartree's equation, which is essentially taken from \cite{LL13}. 

\begin{conjecture} \label{con:Hartree-hard} If Hartree's equation 
$$
(-\Delta - |x|^{-1} + |u|^2* |x|^{-1}) u= \lambda u 
$$
has a solution $u\in H^1(\R^3)$ with a constant $\lambda\in \mathbb{R}$, then 
$\int_{\R^3}|u|^2 \le t_c.$
\end{conjecture}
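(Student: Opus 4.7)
The plan is to extend the time-dependent Morawetz-type argument behind Theorem \ref{thm:LL13} to the Hartree setting. Since every multiplication operator $f(x)$ commutes with both the external potential $-|x|^{-1}$ and the Hartree potential $V_u=|u|^2*|x|^{-1}$, any solution $u\in H^1(\R^3)$ of $\mathcal{H}_u u=\lambda u$, with $\mathcal{H}_u=-\Delta-|x|^{-1}+V_u$, should satisfy $\langle u,[\mathcal{H}_u,[\mathcal{H}_u,f]]u\rangle=0$. Expanding $[\mathcal{H}_u,f]=[-\Delta,f]$ and symmetrizing the nonlinear term in the two integration variables yields the virial identity
\begin{align*}
0
&= \langle u,[-\Delta,[-\Delta,f]]u\rangle + 2\int_{\R^3} |u(x)|^2\,\nabla f(x)\cdot\frac{x}{|x|^3}\,\d x \\
&\quad -\iint_{\R^3\times\R^3}|u(x)|^2|u(y)|^2\bigl(\nabla f(x)-\nabla f(y)\bigr)\cdot\frac{x-y}{|x-y|^3}\,\d x\,\d y.
\end{align*}
As in \cite{LL13}, the a priori integrability issues will be handled by first using the regularization $f_R(x)=R^3 g(|x|/R)$ with $g(r)=r-\arctan r$ and then sending $R\to\infty$.

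Inserting $f(x)=\tfrac{1}{3}|x|^3$ and invoking the two Lenzmann--Lewin ingredients --- $[\Delta,[\Delta,|x|^3]]\le 0$ on $L^2(\R^3)$ and $(|x|x-|y|y)\cdot(x-y)\ge \tfrac{1}{2}|x-y|^3$ --- immediately yields $\int|u|^2<4$. A sharper bound $\int|u|^2<2$ follows from the direct Benguria--Lieb argument applied to Hartree's equation: multiply it by $|x|\bar u$, take real parts, and combine \eqref{eq:Lieb-ineq} with the triangle inequality. Here one uses that $\lambda<0$, which holds automatically, since the essential spectrum of $\mathcal{H}_u$ lies in $[0,\infty)$ and Kato-type results preclude positive embedded eigenvalues for $-\Delta$ perturbed by a decaying Coulomb-type potential. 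These two bounds are the Hartree analogues of Theorems \ref{thm:L84} and \ref{thm:LL13}, and form the first milestones of the plan.

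The main obstacle is closing the gap between the geometric constant $\tfrac{1}{2}$ used above and the sharp threshold $t_c\approx 1.21$, which is defined only implicitly through the variational problem \eqref{eq:eHt}. A first refinement will mirror the strategy of Theorem \ref{thm:N12} and its bosonic adaptation by Benguria--Tubino \cite{BT22}: multiply the Hartree equation by $|x|^2\bar u$, control the resulting weighted kinetic term via the virial identity for Hartree (as in \cite{BT22}, in place of the Lieb--Thirring inequality, which is unavailable for bosons), and optimise the geometric constant $\beta$ appearing in the proof of Theorem \ref{thm:N12}. This should push the threshold down towards the Benguria--Tubino value $1.5211$, but it seems unlikely to reach $t_c$. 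The decisive step, which I expect to be the genuine difficulty, is to build a test function $f$ adapted to $u$ itself --- for instance through the screened Hartree potential $V_u-|x|^{-1}$ --- so that the resulting inequality encodes the convexity of the Hartree functional in $|u|^2$ and the constancy of $e(t)$ on $[t_c,\infty)$. Without such an input from the variational structure of the Hartree theory, purely commutator-based methods appear fundamentally stuck above the threshold $2$.
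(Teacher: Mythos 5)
You are being asked about Conjecture \ref{con:Hartree-hard}, which the paper does not prove: it is explicitly recorded as open, the only known results being the $4t_c$-type bound of Lenzmann and Lewin \cite{LL13} (proved even for the stronger dynamical version, Conjecture \ref{con:Hartree-hard-time}) and the elementary Benguria--Lieb bound $t_c<2$. Your proposal is accordingly not a proof and does not claim to be one. The two ``milestones'' you outline --- the regularized Morawetz/commutator identity with $f(x)=\tfrac13|x|^3$, and the multiplication of the equation by $|x|\bar u$ --- reproduce precisely these two known partial results, and the ``decisive step'' from the constant $2$ down to the sharp threshold $t_c\approx 1.21$ is exactly the content of the conjecture. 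Your own closing assessment, that purely commutator-based methods appear stuck above $2$ without an input from the variational structure of $e(t)$, is consistent with the paper, which offers no mechanism for closing this gap either. So the genuine gap is the conjecture itself.

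Two remarks on the soundness of the partial steps. The commutator identity is legitimate for a \emph{stationary} solution, since then the Hartree potential $V_u=|u|^2*|x|^{-1}$ is a fixed bounded multiplication operator and $u$ is a true eigenfunction of the linear operator $\mathcal{H}_u$; with the $f_R$ regularization this part goes through as in \cite{LL13}. The weak point is your claim that $\lambda<0$ ``holds automatically.'' Kato's theorem on the absence of positive eigenvalues requires the potential to be $o(|x|^{-1})$ at infinity, whereas $-|x|^{-1}+V_u$ behaves like $\big(\int|u|^2-1\big)|x|^{-1}$ there; one must instead invoke Agmon--Froese--Herbst type results for Coulomb tails, and the threshold case $\lambda=0$ needs a separate exclusion. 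This is why the paper states the bound $t_c<2$ only for the \emph{minimization} problem (where $\lambda\le 0$ comes from the variational characterization), while for arbitrary $H^1$ eigenfunctions with arbitrary $\lambda\in\R$ --- the setting of the conjecture --- only the weaker $4t_c$-type constant is known. If you intend the $\int|u|^2<2$ step to apply to all solutions as in the conjecture, you must supply that spectral argument rather than assert it.
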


%
%

In \cite{LL13}, Lenzmann and Lewin proved a weaker bound with $4t_c$ instead of  $t_c$, using  the same proof strategy of Theorem \ref{thm:LL13}. In fact, they considered the following dynamical version of Conjecture \ref{con:Hartree-hard} and proved the bound $4t_c$ in this stronger sense. 

\begin{conjecture} \label{con:Hartree-hard-time} Consider the time-dependent Hartree's equation 
$$
{\bf i} \partial_t u= (-\Delta - |x|^{-1} + |u|^2* |x|^{-1}) u. 
$$
Then for every initial state $u_0\in H^1(\R^3)$,  we have
$$
\limsup_{T\to \infty}\frac{1}{T} \int_0^T \int_{|x|\le R} |u(x,t)|^2 \d x \d t \le t_c, \quad \forall R>0. 
$$
\end{conjecture}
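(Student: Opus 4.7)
My plan is to pursue the time-averaged virial strategy of Lenzmann--Lewin, already sketched after Theorem~\ref{thm:LL13}, and known from \cite{LL13} to yield the weaker conclusion with $4t_c$ in place of $t_c$. Concretely, introduce the truncated moments
\begin{equation*}
M_R(t) := \int_{\R^3} f_R(x)\,|u(t,x)|^2\,\d x, \qquad f_R(x) := R^3\, g(|x|/R), \qquad g(r):=r-\arctan r.
\end{equation*}
The function $f_R$ agrees with $\tfrac{1}{3}|x|^3$ to leading order near the origin and satisfies $|\nabla f_R|\le C R^2$ uniformly, so that for $u_0\in H^1(\R^3)$ the global well-posedness of Hartree's equation together with mass and energy conservation implies $M_R$ is well-defined with $|M_R'(t)| \le C R^2 \|u_0\|_{H^1}^2$ uniformly in $t$. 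The regularization is the dynamical counterpart of the $f_R(x)=R^3 g(|x|/R)$ trick used in the proof of Theorem~\ref{thm:LL13}.

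Since the self-consistent potential $W_{u(t)}:=|u(t)|^2*|x|^{-1}$ is a multiplication operator (so $[W_{u(t)},f_R]=0$), a direct computation based on Hartree's equation yields
\begin{equation*}
\frac{\d^2}{\d t^2} M_R(t) = K_R(t) + 2\!\int \frac{R^2\,|u(t,x)|^2}{R^2+|x|^2}\,\d x - \int\!\!\int \frac{(\nabla f_R(x)-\nabla f_R(y))\cdot(x-y)}{|x-y|^3}\,|u(t,x)|^2|u(t,y)|^2\,\d x\,\d y,
\end{equation*}
where $K_R(t):=\langle u(t),[\Delta,[\Delta,f_R]]\,u(t)\rangle$. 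Near the origin $f_R$ coincides with $|x|^3/3$, so the Lenzmann--Lewin one-body inequality $[\Delta,[\Delta,|x|^3]]\le 0$ gives $K_R(t)\le 0$ modulo an error from the transition region $|x|\sim R$ that is controlled by $\|u\|_{H^1}$ and vanishes as $R\to\infty$. The interaction kernel is non-negative and, on $\{|x|,|y|\le R\}$, is bounded below by $\tfrac{1}{2}$ via the pointwise inequality $(|x|x-|y|y)\cdot(x-y)\ge\tfrac{1}{2}|x-y|^3$.

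Integrating from $0$ to $T$ and dividing by $T$, the left-hand side becomes $(M_R'(T)-M_R'(0))/T = O(1/T)$, which vanishes as $T\to\infty$. Rearranging and discarding the non-positive kinetic term produces, in the limit, an estimate of the shape
\begin{equation*}
\overline{\rho_R^2}\,\le\, 4\,\overline{\rho_R}+o_R(1), \qquad \rho_R(t):=\!\int_{|x|\le R}|u(t)|^2\,\d x,\quad \overline{(\cdot)}:=\limsup_{T\to\infty}\frac{1}{T}\!\int_0^T(\cdot)\,\d t.
\end{equation*}
Jensen's inequality $\overline{\rho_R^2}\ge(\overline{\rho_R})^2$ then yields $\overline{\rho_R}\le 4$, which, after the finer book-keeping of constants performed in \cite{LL13}, reproduces Conjecture~\ref{con:Hartree-hard-time} with the weaker constant $4t_c$.

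\textbf{Main obstacle.} The loss by a factor of four originates from the pointwise lower bound $\tfrac{1}{2}$ on the interaction kernel, which is the exact analogue of the gap between Theorem~\ref{thm:L84} and Theorem~\ref{thm:N12} in the many-body setting. To reach the conjectured $t_c$ one must replace this pointwise bound by the sharper \emph{functional} bound supplied by the classical variational problem defining $t_c$ (cf.\ the constant $\beta\approx 0.82$ in the proof of Theorem~\ref{thm:N12}, or the optimal $\beta_c=t_c^{-1}$ associated with the Hartree minimizer), applied to the time-averaged density $\bar\rho_T(x):=\tfrac{1}{T}\int_0^T|u(t,x)|^2\,\d t$. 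The serious difficulty is that $\bar\rho_T$ need not be anywhere near a Hartree minimizer: mass can shuttle between a trapped component and a radiating one, and the non-convex Coulomb interaction couples the two. Resolving this appears to require an asymptotic-completeness-type profile decomposition of $u(t)$ into localized bound states (whose instantaneous mass is controlled by Conjecture~\ref{con:Hartree-hard}) plus a dispersive remainder contributing $o(1)$ to $\rho_R$ in time average. Carrying this out rigorously is, in my view, of comparable depth to the conjecture itself and is the principal obstacle to the full statement.
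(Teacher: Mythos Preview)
The statement is a \emph{conjecture}; the paper does not prove it and explicitly records it as open, noting only that Lenzmann--Lewin \cite{LL13} established the weaker bound $4t_c$ via the virial method behind Theorem~\ref{thm:LL13}. There is thus no proof in the paper to compare your proposal against. Your proposal does not claim to prove the conjecture either: you sketch exactly the \cite{LL13} argument the paper references (the regularized cubic moment $f_R=R^3g(|x|/R)$ with $g(r)=r-\arctan r$, the one-body bound $[\Delta,[\Delta,|x|^3]]\le0$, the pointwise interaction lower bound $\tfrac12$, and time averaging), arrive at a suboptimal constant, and then isolate the obstruction. This matches the paper's own account of the state of the art.

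One caveat on the sketch itself: your displayed inequality $\overline{\rho_R^2}\le 4\,\overline{\rho_R}$ is schematic rather than literal, since the attraction term $\int R^2(R^2+|x|^2)^{-1}|u|^2\,\d x$ is controlled by the conserved total mass rather than by $\rho_R$, so turning the virial identity into a closed inequality for $\rho_R$ alone requires additional work that you are outsourcing to \cite{LL13}. Your diagnosis of the main obstacle---that upgrading the pointwise bound $\tfrac12$ to the sharp functional bound encoded by $t_c$ would require a bound-state/radiation decomposition of the long-range Hartree flow---is in line with the paper's closing remark that Conjecture~\ref{con:Hartree-hard-time} is tied to scattering theory with long-range potentials, and accurately conveys why the full statement remains open.
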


The idea here is that even if we start from an initial state with arbitrarily large mass, when the time becomes large, there is only at most $t_c$ mass staying in every bounded domain.  Conjecture \ref{con:Hartree-hard-time} is closely related to the scattering theory of dispersive PDEs with long-range interaction potentials, which is a very interesting topic in its own right.

\section{Asymptotic neutrality}

Now we come back to the ionization problem with Pauli's exclusion principle \eqref{eq:Pauli}. A fundamental step towards the ionization conjecture is the following result of Lieb, Sigal, Simon and Thirring in 1984. 

\begin{theorem}[\cite{LSST84,LSST88}] \label{thm:asympneu}We have
$$
\lim_{Z\to \infty} \frac{N_c(Z)}{Z} = 1. 
$$
\end{theorem}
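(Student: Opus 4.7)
The lower bound $\liminf_{Z\to\infty} N_c(Z)/Z \ge 1$ is immediate from Zhislin's theorem (Theorem~\ref{thm:Zhislin}): for every positive integer $N \le Z$ the energy $E_N$ has a minimizer, so $N_c(Z) \ge Z$. The content of the theorem is therefore the upper bound $\limsup_{Z\to\infty} N_c(Z)/Z \le 1$. My plan is to pursue a quantitative refinement of the Benguria--Lieb multiplication argument used in the proof of Theorem~\ref{thm:L84}, one that captures the electrostatic screening of the nucleus by the inner electrons.

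Fix a normalized ground state $\Psi$ of $H_N$ with $E_N < E_{N-1}$, and denote by $\rho_\Psi$ its one-body density. Multiplying $(H_N - E_N)\Psi = 0$ by $|x_N|$, symmetrizing, and using the operator inequality \eqref{eq:Lieb-ineq} as in Theorem~\ref{thm:L84}, one arrives at
\[
0 \;\ge\; -Z + \frac{1}{2}\sum_{i=1}^{N-1}\left\langle \Psi, \frac{|x_N|+|x_i|}{|x_i-x_N|}\Psi\right\rangle.
\]
Lieb's proof applies only the triangle inequality to the last term, which is wasteful: by Newton's theorem, when $|x_i| < |x_N|$ the quotient $|x_N|/|x_N - x_i|$ behaves like $1$ rather than $1/2$, so an outer electron sees the inner ones as an almost full screening charge concentrated near the origin. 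The idea is therefore to rewrite the symmetrized two-body term through the one-body density $\rho_\Psi$, apply the Lieb--Oxford inequality \eqref{eq:LO} together with Newton's theorem, and show that the interaction contribution is bounded below by $(N-1)\bigl(1-o(1)\bigr)$ for $|x_N|$ at the atomic scale. Combined with the $-Z$ term this yields $N \le Z(1+o(1))$, and hence $N_c(Z)/Z \to 1$.

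To make this rigorous one must quantitatively localize $\rho_\Psi$: for $r$ slightly larger than the Thomas--Fermi radius $Z^{-1/3}$ one needs
\[
\int_{|y|>r} \rho_\Psi(y)\,\dd y \;=\; o(Z) \qquad \text{as } Z\to\infty.
\]
This is the main obstacle and the genuine technical core of the argument. I would attack it by comparing the many-body ground state with the Thomas--Fermi minimizer, using the asymptotics $E_N \sim -c\, Z^{7/3}$ together with the Lieb--Thirring inequality to control the kinetic energy and the tail of $\rho_\Psi$ by a Thomas--Fermi profile, combined with an IMS-type localization of the outermost electron that reduces the problem to an inductive comparison with the $(N-1)$-body energy. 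Pauli's principle enters here precisely through the Lieb--Thirring kinetic bound, which forces the inner density to spread on the Thomas--Fermi scale rather than to concentrate at a single point; this is ultimately the mechanism that distinguishes the fermionic case from the bosonic analogue (Theorem~\ref{thm:BL83}), whose limit is strictly larger than one.
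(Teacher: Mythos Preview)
Your route is genuinely different from the paper's, and it has a real gap at the step you treat as routine rather than the one you flag as hard. The paper (following Lieb--Sigal--Simon--Thirring) does not refine the Benguria--Lieb multiplier at all. It builds a partition of unity $\{J_a\}_{a=0}^N$ on $(\R^3)^N$ from the potential-theoretic fact that for every $\eps>0$ and $N$ large, every configuration $\{x_i\}$ contains an index $j$ with $\sum_{i\ne j}|x_i-x_j|^{-1}\ge (1-\eps)N/|x_j|$; for $a\ne 0$ the piece $J_a$ localizes to regions where electron $a$ feels enough repulsion to be pushed out, while $J_0$ traps all electrons in a ball where the fermionic bound $\sum_i(-\Delta_{x_i}-Z/|x_i|)\ge -CZ^{7/3}$ wins. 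Pauli enters through this hydrogen-sum estimate, not through a Lieb--Thirring tail bound on $\rho_\Psi$.

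The gap in your scheme is the claim that the interaction contribution can be pushed up to $(N-1)(1-o(1))$. By the symmetry of $|\Psi|^2$,
\[
\frac{1}{2}\sum_{i\ne N}\Big\langle\Psi,\frac{|x_N|+|x_i|}{|x_i-x_N|}\Psi\Big\rangle
=\frac{1}{N}\sum_{i<j}\Big\langle\Psi,\frac{|x_i|+|x_j|}{|x_i-x_j|}\Psi\Big\rangle,
\]
and even after angular averaging one only has $\tfrac{|x|+|y|}{\max(|x|,|y|)}\le 2$, with equality only when $|x|=|y|$. Reaching your target would force $\rho_\Psi$ to live on a thin spherical shell, which the Thomas--Fermi profile certainly does not. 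The tail estimate $\int_{|y|>r}\rho_\Psi=o(Z)$ that you identify as the ``main obstacle'' does not repair this: in the multiplied equation $x_N$ is a \emph{generic} electron, not the outermost one, and the global $|x_N|$-multiplier has no mechanism to single it out. This limitation is intrinsic --- the best many-body constant obtainable by global multipliers is the $1.22$ of Theorem~\ref{thm:N12}, fixed by a classical variational problem, and the paper explicitly remarks (just after the alternative TF proof via Newton's theorem) that it is unclear whether that argument can yield asymptotic neutrality in the many-body setting. The TF comparison and outer-electron localization you gesture toward are indeed what Fefferman--Seco and Seco--Sigal--Solovej use to get $N_c\le Z+O(Z^{5/7})$, but those proofs require a genuine exterior IMS cut with control of the kinetic commutator errors, not a sharpening of the global multiplier identity.
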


\begin{proof}[Ideas of the proof] The general strategy goes back to the geometric localization method used in Sigal's proof of $N_c(Z)<\infty$ in \cite{S82}. The idea is that by introducing a suitable partition of unity of $(\R^{3})^{N}$, the quantum problem can be reduced to a classical problem. In \cite{S84}, Sigal derived the asymptotic bound
\begin{equation} \label{eq:Sigal-0} \limsup_{Z\to \infty}\frac{N_c(Z)}{Z}\le 2
\end{equation}
using its classical analogue
\begin{equation} \label{eq:Sigal}
\mathop {\max }\limits_{1\le j\le N} \left\{ {\sum\limits_{1 \le i \le N,i \ne j} {\frac{1}
{{|x_i  - x_j |}}}  - \frac{Z}
{{|x_j |}}} \right\} \ge  0
\end{equation}
which follows easily from the triangle inequality. The key ingredient in \cite{LSST84} is the following improvement of \eqref{eq:Sigal}: for every $\eps>0$, $N\ge N_\eps$, and $\{x_i\}_{i=1}^N\subset \R^3$ we have
\begin{equation} \label{eq:Sigal-improved}
\mathop {\max }\limits_{1\le j\le N} \left\{ {\sum\limits_{i \ne j} {\frac{1}
{{|x_i  - x_j |}}}  - \frac{(1-\eps)N}
{{|x_j |}}} \right\} \ge 0.
\end{equation}
By a contradiction argument, \eqref{eq:Sigal-improved} can be deduced from its continuum analogue which is a nice result in potential  theory: for any probability measure $\mu\ne \delta_0(x)$ in $\R^3$ and for any $\eps>0$, there exists a point $x \in {\rm supp} (\mu)\backslash\{0\}$ such that 
\begin{equation} \label{eq:Sigal-improved-2}
f(x)=\int_{\R^3} \frac{1}{|x-y|} \d \mu(y) -  \frac{1-\eps}{|x|} \ge 0. 
\end{equation}
In the simple case, if ${\rm supp} (\mu)$ is bounded and does not contain $0$, then $f$ is harmonic outside ${\rm supp} (\mu)$ and vanishing at infinity. Hence, if we assume further that $f$ is continuous, then $f$ must be nonnegative somewhere in ${\rm supp} (\mu)$ by the maximum principle. For the general case see \cite{LSST84,LSST88} for details.

For every $\eps>0$, $N\ge N_\eps$, and $R>0$, the inequality \eqref{eq:Sigal-improved} and its refinements allow to construct a partition of unity $\{J_a\}_{a=0}^N$ of $C^\infty$ functions in $(\R^{3})^N$ so that the following hold. 
\begin{itemize}
\item[(i)]  $J_a \ge 0$ for all $a$ and $\sum_{a=0}^N J_a^2(X)=1$ for all $X= \{x_b\}_{b=1}^{N} \subset (\R^{3})^N$. 

\item[(ii)] Denoting $|X|_\infty =\max_{1\le b\le N} |x_b|$ we have
$$L=\sum_{a=0}^N |\nabla_{\R^{3N}}J_a(X)|^2 \le C_\eps \frac{N^{1/2}\log(N)^2}{R |X|_\infty}.$$ 

\item[(iii)] $J_0$ is totally symmetric in $\{x_b\}_{b=1}^{N}$ and 
$${\rm supp}\, J_0 \subset \left\{ X= \{x_b\}_{b=1}^{N} \,|\,  |X|_\infty  \le R \right\}.$$ 

\item[(iv)] If $a\ne 0$, then $J_a$ is symmetric in $\{x_b\}_{b\ne a}$ and 
$${\rm supp}\, J_a \subset \left\{ X =\{x_b\}_{b=1}^{N} \,|\,   \sum_{b\ne a} \frac{1}{|x_b-x_a|} \ge \frac{(1-\eps)N}{|x_a|} \right\}.$$
\end{itemize}

Now let us conclude the proof of Theorem \ref{thm:asympneu}. Assume that 
$$N\in [(1-2\eps )^{-1}Z,2Z+1]$$ for some $\eps>0$ small (independent of $Z$). We choose 
$$
Z^{-1/2} \log(Z)^2 \ll R \ll Z^{-1/3}.
$$
By the IMS localization formula (c.f. \eqref{eq:IMS}) we can decompose $H_N$ in \eqref{eq:HN} as
\begin{equation}\label{eq:HN-decompose}
H_N = \sum_{a=0}^N \frac{1}{2} (J_a^2 H_N + H_N J_a^2) = \sum_{a=0}^N J_a H_N J_a  - L =  \sum_{a=0}^N J_a (H_N -L) J_a. 
\end{equation}
When $a=N$, by (ii) we have 
$$
L \le C_\eps \frac{Z^{1/2}\log(Z)^2}{R |X|_\infty} \ll \frac{\eps N}{|x_N|}. 
$$
Moreover, by the property of the support of $J_N$ in (iv) 
$$
J_N \sum_{i=1}^{N-1} \frac{1}{|x_i-x_N|} J_N \ge J_a   \frac{(1-\eps)N}{|x_N|} J_a.
$$
By decomposing 
$$
H_N -L = H_{N_1} -\Delta_{x_N} -\frac{Z}{|x_N|} + \sum_{i=1}^{N-1} \frac{1}{|x_i-x_N|} - L,
$$
then using $H_{N_1}\ge E_{N-1}$ and $-\Delta_{x_N}\ge 0$ we get 
$$
J_N (H_N - L) J_N \ge J_N \left( E_{N-1} +  \frac{(1-2\eps) N - Z}{|x_N|} \right) J_N \ge J_N^2 E_{N-1}. 
$$
Similarly,  $J_a (H_N - L) J_a\ge J_a^2 E_{N-1}$ for all $a\ne 0$. For $a=0$, we use (ii) again for $L$ and use the triangle inequality to control the interaction energy. This gives  
$$
J_0^2 \left( \sum_{1\le i<j\le N} \frac{1}{|x_i-x_j|} - L \right)  \ge J_0^2  \left(  \frac{N(N-1)}{2|X|_\infty} - C_\eps \frac{Z^{1/2}\log(Z)^2}{R |X|_\infty} \right)  \ge J_0 \frac{Z^2}{4R}  
$$
where  in the last inequality we also used the  property of the support of $J_0$ in (iii). On the other hand, by summing the first $N$ eigenvalues of the hydrogen atom, we have
$$
\sum_{i=1}^N \Big( -\Delta_{x_i} - \frac{Z}{|x_i|} \Big) \ge - C Z^{7/3}. 
$$
Here Pauli's exclusion principle is crucial. Thus 
$$
J_0 (H_N-L)J_0 \ge J_0^2 \left( \frac{Z^2}{4R}  - C Z^{7/3}  \right) \ge 0 \ge J_0^2 E_{N-1}.  
$$
In summary, from \eqref{eq:HN-decompose} we deduce that 
\begin{equation}\label{eq:HN-decompose}
H_N =  \sum_{a=0}^N J_a (H_N -L) J_a \ge  \sum_{a=0}^N J_a^2 E_{N-1} = E_{N-1} \ge E_N. 
\end{equation}
Therefore, if $H_N \Psi=E_N \Psi$ for a ground state $\Psi$, then we must have 
$$E_N=\langle \Psi, H_N \Psi\rangle=E_{N-1}.$$ A careful reconsideration of the proof of the above estimates shows that this equality cannot hold for any eigenfunction. Thus $N\in [(1-2\eps )^{-1}Z,2Z+1]$ fails, which implies the desired result. 
 \end{proof}
 
The convergence result in Theorem \ref{thm:asympneu} is not quantitative. In 1990, Fefferman and Seco \cite{FS90} and Seco, Sigal and Solovej \cite{SSS90} offered different proofs for the quantitative bound 
 $$
 N_c(Z) \le Z + O(Z^{5/7})
 $$
 which remains the best known asymptotic result. The proofs in \cite{FS90,SSS90} use certain information of the minimizing wave function obtained via a careful comparison with the Thomas--Fermi (TF) theory which will be revisited below.

%
%
%
%
%
%
%
%

\section{The TF theory}

Since the many-body Schr\"odinger equation is too complicated, for practical computations one often replaces the wave function $\Psi$ by its one-body density $\rho_\Psi$ defined in \eqref{eq:rhoN}, resulting in density functional theories. A famous example is the TF theory, in which the ground state energy $E_N$ in \eqref{eq:EN} is replaced by its semiclassical approximation  
\begin{align}\label{eq:TF-theory}
E^{\rm TF}(N)= \inf_{\rho\ge 0, \int \rho=N}   \int_{\mathbb{R}^3} \Big( C^{\rm TF} \rho^{5/3}(x) - \frac{Z}{|x|} \rho(x) + \frac{1}{2} \rho(x) \Big(\rho*\frac{1}{|x|} \Big) \Big) \d x 
\end{align}
with a constant $C^{\rm TF}>0$. Here $N,Z>0$ are not necessarily integers.

The mathematical properties of the TF theory were studied in full detail by Lieb and Simon in 1973 \cite{LS73,LS77}.   In particular, concerning the ionization problem we have the following theorem.

\begin{theorem}[\cite{LS77}] For all $Z>0$, $E^{\rm TF}(N)$ has a minimizer if and only if $N\le Z$.  
\end{theorem}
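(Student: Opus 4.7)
The plan is to use distinct techniques for the two directions: a direct variational method for the existence when $N\le Z$, and a potential-theoretic argument based on the Euler--Lagrange equation, the maximum principle, and Newton's theorem for the non-existence when $N>Z$. The functional $\mathcal{E}^{\mathrm{TF}}[\rho]$ on the right-hand side of \eqref{eq:TF-theory} is strictly convex in $\rho$ (strict convexity of $\rho\mapsto\rho^{5/3}$ plus convexity of the Coulomb self-energy), so a minimizer is unique whenever it exists, and $N\mapsto E^{\mathrm{TF}}(N)$ is convex and non-increasing.

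For existence at $N\le Z$, take a minimizing sequence $\rho_n\ge 0$ with $\int\rho_n=N$. The attractive term $Z\int\rho_n/|x|$ can be absorbed into a small fraction of the $L^{5/3}$ and Coulomb terms (via Hardy or Hardy--Littlewood--Sobolev type bounds), yielding uniform control of $\|\rho_n\|_{L^{5/3}}$ and of the Coulomb self-energy. Extract a subsequence converging weakly (in $L^{5/3}$ and in Coulomb norm) to some $\rho_\infty\ge 0$ with $M:=\int\rho_\infty\le N$. Weak lower semicontinuity of each constituent of $\mathcal{E}^{\mathrm{TF}}$ (with the attractive term treated via the same Hardy-type estimate) gives $\mathcal{E}^{\mathrm{TF}}[\rho_\infty]\le E^{\mathrm{TF}}(N)$. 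To rule out mass loss $M<N$, I would establish the strict binding inequality $E^{\mathrm{TF}}(N)<E^{\mathrm{TF}}(M)$ for all $0\le M<N\le Z$: place a thin diffuse shell $\sigma$ of mass $N-M$ at a large radius $R$ around a near-optimal density of mass $M$; by Newton's theorem the shell sees the bulk as a point of charge $M$ at the origin, so the leading contribution to the energy change is $(N-M)\bigl[(M+N)/2-Z\bigr]/R<0$, which beats the vanishing kinetic cost of a diffuse shell. Combined with $\mathcal{E}^{\mathrm{TF}}[\rho_\infty]\ge E^{\mathrm{TF}}(M)$, this forces $M=N$, so $\rho_\infty$ is a minimizer.

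For non-existence at $N>Z$, suppose by contradiction that a minimizer $\rho$ exists with $\int\rho=N>Z$. The Euler--Lagrange equation reads
\begin{equation*}
\tfrac{5}{3}C^{\mathrm{TF}}\rho(x)^{2/3} = \bigl[\Phi(x)-\mu\bigr]_+, \qquad \Phi(x):=\frac{Z}{|x|}-\int_{\R^3}\frac{\rho(y)}{|x-y|}\d y,
\end{equation*}
with Lagrange multiplier $\mu\ge 0$: indeed, if $\mu<0$ then, since $\Phi(x)\to 0$ as $|x|\to\infty$, the equation would force $\rho(x)^{2/3}\to \frac{3|\mu|}{5C^{\mathrm{TF}}}>0$ at infinity, contradicting $\rho\in L^1$. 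The key step is to show $\Phi\ge 0$ on $\R^3\setminus\{0\}$: on $\{\rho>0\}$ the equation directly gives $\Phi\ge\mu\ge 0$, while on $\{\rho=0\}$ the function $\Phi$ is harmonic (since $-\Delta\Phi=4\pi(Z\delta_0-\rho)$ distributionally), vanishes at infinity, equals $\mu\ge 0$ on the common boundary, and blows up to $+\infty$ at the origin, so the minimum principle applied on each connected component yields $\Phi\ge 0$ throughout. Newton's theorem now provides
\begin{equation*}
R\,\overline{\Phi}(R) = Z-\int_{|y|\le R}\rho(y)\d y - R\int_{|y|>R}\frac{\rho(y)}{|y|}\d y \xrightarrow[R\to\infty]{} Z-N,
\end{equation*}
using $\rho\in L^1(\R^3)$. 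Since $\overline{\Phi}(R)\ge 0$ for every $R$, the limit is non-negative, forcing $N\le Z$ and yielding the desired contradiction. The main obstacle lies on the existence side: abstract weak compactness only delivers a candidate of possibly smaller mass, and closing the gap requires the strict binding inequality, whose proof demands an explicit test-density construction and a careful calculation of the leading $1/R$ correction, rather than an abstract argument.
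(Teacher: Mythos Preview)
Your proposal is correct and follows essentially the same approach as the paper: the existence uses compactness plus the observation that placing mass far away lowers the energy when $N\le Z$ (the paper packages this as the \emph{relaxation method}, minimizing over $\int\rho\le N$ rather than proving a strict binding inequality, but the substance is identical), and the nonexistence uses the TF equation, harmonicity of $\Phi$ where $\rho=0$, and the maximum principle. The only cosmetic difference is that the paper argues by contradiction directly on the set $\{\Phi<0\}$, whereas you first establish $\Phi\ge0$ globally and then read off $N\le Z$ from the large-$|x|$ asymptotics via Newton's theorem.
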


\begin{proof}[Ideas of the proof]  A very useful argument introduced in  \cite{LS77} is the {\em relaxation method} which relates the variational problem \eqref{eq:TF-theory} to the ``unconstrained  problem"
\begin{align}\label{eq:TF-theory<=}
E^{\rm TF}_{\le}( N)= \inf_{\rho\ge 0, \int \rho \le N}   \int_{\mathbb{R}^3} \Big( C^{\rm TF} \rho^{5/3}(x) - \frac{Z}{|x|} \rho(x) + \frac{1}{2} \rho(x) \Big(\rho*\frac{1}{|x|} \Big) \Big) \d x. 
\end{align}
The advantage of \eqref{eq:TF-theory<=} is that the set of states is {\em convex}, and hence the existence of minimizers of \eqref{eq:TF-theory<=} follows easily by  the direct method in the calculus of variations (in particular, the set of states is stable under the weak convergence in $L^{5/3}$). Moreover, as explained in  \cite{LS77}, by standard rearrangement inequalities it is easy to see that the functional on the right hand side of \eqref{eq:TF-theory} is strictly convex. This implies that the unconstrained minimizer $\rho$ of \eqref{eq:TF-theory<=} is unique with $\int_{\R^3}\rho \le N$. Thus the existence  of a minimizer of the original problem \eqref{eq:TF-theory} is equivalent to $\int_{\R^3}\rho =N$. 

The existence part is rather standard: if $\int_{\R^3}\rho < N\le Z$, then we can put some positive mass at infinity to lower the energy and obtain a contradiction. 

Let us focus on the nonexistence part which is more challenging. As proved in \cite{LS77}, the unconstrained minimizer solves the TF equation 
$$
\frac{5}{3}C^{\rm TF}\rho(x)^{2/3} = [\Phi(x)]_+, \quad \Phi(x)= Z|x|^{-1}-\rho*|x|^{-1} -\mu 
$$
for some chemical potential $\mu \ge 0$. We prove that $\int_{\R^3}\rho \le Z$ for all $N>0$, which implies the nonexistence when $N>Z$. Assume by contradiction that $\int_{\R^3}\rho > Z$. Then the TF potential satisfies 
$$
|x| \Phi(x) \le Z -|x| \Big( \rho* \frac{1}{|x|}\Big) \to Z -\int_{\R^3} \rho<0
$$
as $|x|\to \infty$ (we used $\mu\ge 0$). Therefore, $A=\{x: \Phi(x)<0\}$ is non empty. Moreover, since $\Phi$ is continuous on $\R^3\backslash\{0\}$ and $\Phi(x)\to \infty$ as $|x|\to 0$, we find that $A$ is open and $0\ne A$. To conclude,  using $\Delta |x|^{-1} = 4\pi \delta_0(x)$ and the TF equation, we find that 
$$\Delta \Phi(x)= 4\pi \rho(x) = 0 \text { in } A.$$
Thus $\Phi$ is harmonic in $A$, $\Phi<0$ in $A$ and $\Phi=0$ on the boundary of $A$. All this leads to a contradiction to the maximum principle. Thus $\int_{\R^3} \rho\le Z$. 
\end{proof}

In fact, by a variant of the Benguria--Lieb argument, the nonexistence part can be also proved differently, in which we only use harmonic analysis via {\em Newton's theorem}. 

\begin{proof}[Another proof of $N\le Z$  \cite{N13}] Let $\rho$ be the unconstrained minimizer. Integrating the TF equation with $|x|^k\rho(x)$, $k\ge 1$, we get 
$$
0 \le \frac 5 3 C^{\rm TF} \int_{\R^3} \rho(x)^{5/3} |x|^k \d x =  \int_{\R^3} \Big(Z|x|^{-1}-\rho*|x|^{-1} -\mu \Big) \rho(x) |x|^k \d x. 
$$
The contribution associated with $\mu \ge 0$ can be ignored for an upper bound. Since the TF functional is rotation invariant and $\rho$ is unique, it must be radial. Hence, by Newton's theorem (see e.g. \cite[Theorem 5.2]{LS09}) we have
$$
\rho*|x|^{-1}= \int_{\mathbb{R}^3}  \frac{\rho(y)}{\max(|x|, |y|)} \d y. 
$$
Consequently,
\begin{align*}
Z \int_{\R^3} |x|^{k-1} \rho(x) &\ge  \int_{\R^3} |x|^k \rho(x) (\rho*|x|^{-1}) \d x = \frac{1}{2} \iint  \frac{ (|x|^k+|y|^k) \rho(x) \rho(y)}{\max(|x|, |y|)} \d x \d y. 
\end{align*}
Thanks to the AM-GM inequality,  
\[
\frac{{|x{|^k} + |y{|^k}}}{{\max( |x|,|y|) }} \ge \left( {1 - \frac{1}{k}} \right)\left( {|x{|^{k - 1}} + |y{|^{k - 1}}} \right), 
\]
and hence  
$$
Z \int_{\R^3} |x|^{k-1} \rho(x) \d x  \ge \left(1-\frac{1}{k}\right) \Big(  \int  |x|^{k-1} \rho(x) \d x \Big) \Big( {\int  \rho(y) {\rm d}y }\Big).$$
Thus $Z\ge (1-k^{-1})N$. The conclusion follows by taking $k\to \infty$. 

Strictly speaking, in the above proof we need $\int_{\R^3} |x|^{k-1} \rho(x) \d x$ to be finite, which is not true if $k$ becomes large. However, this issue can be fixed by integrating the TF equation in $\{|x|\le R\}$ and then sending $R\to \infty$ at the end. 
\end{proof}

It is unclear whether the above proof can be used to derive the asymptotic neutrality in Theorem \ref{thm:asympneu}.   

To end this section, let us mention the following analogue of  Conjecture \ref{con:Hartree-hard-time} for  the Thomas--Fermi theory.

\begin{conjecture} \label{con:TF-time} Consider the time-dependent Thomas--Fermi theory 
$$
\begin{cases}
\partial_t \varphi &= \frac{1}{2} (\nabla \varphi)^2 + c_0 \rho^{2/3} - Z|x|^{-1} + \rho* |x|^{-1},\\
\partial_t \rho &= \nabla (\rho \nabla \varphi)
\end{cases}
$$
with a fixed constant $c_0>0$. Then for all initial state $(\varphi_0,\rho_0)$ satisfying 
$0\le \rho_0\in L^1(\R^3)\cap L^{5/3}(\R^3)$ and $\sqrt{\rho_0} |\nabla \varphi_0| \in L^2(\R^3)$, we have 
$$
\limsup_{T\to \infty}\frac{1}{T} \int_0^T \int_{|x|\le R} \rho (x,t) \d x \d t \le Z, \quad \forall R>0. 
$$
\end{conjecture}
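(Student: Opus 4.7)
My plan is to adapt the second proof of the static bound $\int\rho\le Z$ given above---multiplication of the TF equation by $|x|^k\rho$, Newton's theorem, and AM--GM in the exponent $k\to\infty$---to the time-dependent hydrodynamic system, in the spirit of Lenzmann and Lewin's derivation of Conjecture \ref{con:Hartree-hard-time}. Testing the Hamilton--Jacobi equation for $\varphi$ against $|x|^k\rho(x,t)$ with $k\ge 1$, using the continuity equation $\partial_t\rho = \nabla\!\cdot\!(\rho\nabla\varphi)$ to rewrite $\int|x|^k\rho\,\partial_t\varphi\,\d x = \tfrac{\d}{\d t}\int|x|^k\rho\varphi\,\d x - \int|x|^k\varphi\,\partial_t\rho\,\d x$, and integrating the second piece by parts, I obtain the Morawetz-type identity
\begin{equation*}
\begin{split}
& \tfrac{\d}{\d t}\!\!\int\!|x|^k\rho\varphi\,\d x + \tfrac12\!\!\int\!|x|^k\rho|\nabla\varphi|^2\,\d x + k\!\!\int\!|x|^{k-2}\varphi(x\cdot\nabla\varphi)\rho\,\d x + Z\!\!\int\!|x|^{k-1}\rho\,\d x \\
&\qquad =\; c_0\!\!\int\!|x|^k\rho^{5/3}\,\d x + \iint\frac{|x|^k\rho(x,t)\rho(y,t)}{|x-y|}\,\d x\,\d y,
\end{split}
\end{equation*}
which is the direct dynamical counterpart of the static Benguria--Lieb identity.

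Time-averaging over $[0,T]$, dividing by $T$, and handling the Coulomb double integral exactly as in the static proof (symmetrizing, applying Newton's theorem to the spherically averaged density, and then using $\tfrac{|x|^k+|y|^k}{\max(|x|,|y|)} \ge (1-\tfrac1k)(|x|^{k-1}+|y|^{k-1})$) yields
\begin{equation*}
\overline{\iint\frac{|x|^k\rho(x)\rho(y)}{|x-y|}\,\d x\,\d y}\;\ge\;\Big(1-\tfrac{1}{k}\Big)\,M\,\overline{\int|x|^{k-1}\rho\,\d x},
\end{equation*}
where $M = \int\rho$ is the conserved total mass and $\overline{(\cdot)}$ denotes time-average on $[0,T]$; convexity of $\rho\mapsto\iint|x|^k\rho(x)\rho(y)/|x-y|$ in $\rho$ lets Jensen's inequality carry this estimate through the time-average. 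To target only the mass in $\{|x|\le R\}$ in place of the possibly much larger total $M$, I would replace the weight $|x|^k$ by $|x|^k\chi_R(|x|)$ with a smooth radial cutoff; the commutator terms are supported in an annulus near $|x|=R$ and, up to corrections that vanish as $T\to\infty$, replace $M$ by the localized time-average $M_R(T)=\tfrac1T\int_0^T\!\int_{|x|\le R}\rho\,\d x\,\d t$.

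Discarding the nonnegative quantities $c_0\int|x|^k\rho^{5/3}$ and $\tfrac12\int|x|^k\rho|\nabla\varphi|^2$, sending $T\to\infty$, then $k\to\infty$, then $R\to\infty$ would deliver the claim $Z\ge\limsup_T M_R(T)$ for every $R>0$. The main obstacle---and the reason this conjecture is hard---is the indefinite-sign cross term $k\int|x|^{k-2}\varphi(x\cdot\nabla\varphi)\rho\,\d x$ together with the boundary term $\tfrac1T[\int|x|^k\rho\varphi\,\d x]_0^T$: the Hamilton--Jacobi structure generically makes $\varphi(x,t)$ grow linearly in $t$ on the support of $\rho$, so neither term is small for trivial reasons, and controlling both requires a quantitative dispersive statement that mass in excess of $Z$ actually radiates to infinity. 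This is exactly the difficulty that forced Lenzmann and Lewin to accept $4t_c$ rather than $t_c$ for the Hartree analog, and I would expect a direct implementation of the strategy above to produce only $2Z$ or $4Z$ in place of the sharp $Z$. Closing the gap to the conjectured constant plausibly demands a genuinely new input---either a Lyapunov-type argument exploiting the strict convexity of the static TF functional in $\rho$ to show that the trapped mass is asymptotically non-increasing, or a concentration-compactness decomposition separating a static TF minimizer of mass $\le Z$ from dispersing hydrodynamic radiation at infinity.
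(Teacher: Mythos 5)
There is a genuine gap, and it starts with the status of the statement itself: this is Conjecture \ref{con:TF-time}, an \emph{open problem}; the paper offers no proof of it and only records that Chen and Siedentop obtained the weaker bound $4Z$ in place of $Z$ by adapting the Lenzmann--Lewin strategy of Theorem \ref{thm:LL13}. Your proposal, by its own admission, does not close this gap: you explicitly concede that the boundary term $\tfrac1T\bigl[\int |x|^k\rho\varphi\,\d x\bigr]_0^T$ and the cross term $k\int |x|^{k-2}\varphi\,(x\cdot\nabla\varphi)\,\rho\,\d x$ are of indefinite sign and not controllable by the identity you write down, and that a direct implementation would at best reproduce a constant like $2Z$ or $4Z$. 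A strategy sketch whose decisive steps are acknowledged to be missing is not a proof of the conjectured sharp constant $Z$, so the proposal cannot be accepted as establishing the statement.

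Beyond the admitted obstructions, two specific steps in the sketch would fail as written. First, the use of Newton's theorem in the static argument was legitimate only because the unconstrained TF minimizer is \emph{unique}, hence radial; a solution $\rho(\cdot,t)$ of the time-dependent system has no reason to be radial for general (even radial-energy-class) data, so the passage from $\iint |x|^k\rho(x)\rho(y)|x-y|^{-1}$ to the $\max(|x|,|y|)$ kernel, and with it the AM--GM bound producing the factor $(1-\tfrac1k)M$, is unjustified except for radially symmetric initial data. Second, the $k\to\infty$ limit that made the static proof work is in tension with the dynamics: the uncontrolled cross term carries an explicit factor $k$ and the weight $|x|^k$ requires increasingly strong (and unproved) spatial decay of $\rho(\cdot,t)$ uniformly in time, so the limit $T\to\infty$ followed by $k\to\infty$ cannot be taken in the order you propose without exactly the quantitative dispersive information whose absence is the content of the conjecture. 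This is consistent with the known literature: the dynamical multiplier that does yield a rigorous bound is the Lenzmann--Lewin choice $f(x)=|x|^3$ (equivalently $\nabla f=|x|x$), and it gives $4Z$, not $Z$; your closing remarks about needing a Lyapunov or concentration-compactness input correctly identify where a genuinely new idea would be required, but they are not supplied here.
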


In 2018, Chen and Siedentop \cite{CS18} proved the weaker bound $4Z$ instead of $Z$, using a strategy similar to that of \cite{LL13}. Their analysis also covers the  Vlasov equation.

\section{The Thomas--Fermi-von Weizs\"acker  (TFW) theory}

In principle, the TF theory is purely semiclassical and it is only good to describe the bulk of electrons at distance $O(Z^{-1/3})$ to the nucleus. For physical and chemical applications, it is important to capture additional contributions of the {\em innermost} and {\em outermost} electrons, the ones at distances $O(Z^{-1})$ and $O(1)$ to the nucleus, respectively. We refer to Lieb's review \cite{L81} for a pedagogical introduction to several refined versions of the TF theory. 

In this section, we focus on the first refinement: the  TFW  theory  where the ground state energy is given by 
\begin{align}\label{eq:TFW-theory}
E^{\rm TFW}(N)= \inf_{\|u\|_{L^2}^2=N}   \int_{\mathbb{R}^3} \Big( C^{\rm TF} |u|^{10/3} + C^{\rm W} |\nabla u|^2 - \frac{Z|u|^2}{|x|} + \frac{1}{2} |u|^2 \Big(|u|^2*\frac{1}{|x|} \Big) \Big) \d x. 
\end{align}
Here $|u|^2$ plays the role of the electron density and the von Weizs\"acker correction term $C^{\rm W} |\nabla u|^2$, with a constant $C^{\rm W}>0$, corresponds to the contribution of the innermost electrons.  In the context of the ionization conjecture, we have the following theorem. 

\begin{theorem}[\cite{B79,BBL81,BL85}] The variational problem $E^{\rm TFW}(N)$ has a minimizer if and only if $N\le N_c(Z)$ for some critical value $Z<N_c(Z)\le Z+C$.
\end{theorem}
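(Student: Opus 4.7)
The proof naturally decomposes into three parts: existence of a minimizer for all $N$ up to a critical value $N_c(Z)$, the strict lower bound $N_c(Z) > Z$, and the ionization-type upper bound $N_c(Z) \le Z + C$. I would first apply a relaxation argument à la Lieb--Simon. Introduce $E^{\rm TFW}_{\le}(N)$ defined as in \eqref{eq:TFW-theory} but over $\|u\|_{L^2}^2 \le N$. Writing $\rho = |u|^2$, the kinetic term becomes $C^{\rm W}\int|\nabla\sqrt{\rho}|^2$, which is jointly convex in $\rho$ (Hoffmann--Ostenhof), while $\rho \mapsto \int \rho^{5/3}$ is strictly convex and the Coulomb direct term is convex; thus the functional is strictly convex in $\rho$. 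Coercivity in $H^1 \cap L^{10/3}$ combined with weak lower semicontinuity yields, by the direct method, a unique relaxed minimizer $\rho_N$, and the original constrained problem \eqref{eq:TFW-theory} has a minimizer if and only if $\int \rho_N = N$. Convexity and uniqueness then identify the set of such $N$ with an interval $(0, N_c(Z)]$ on which $E^{\rm TFW}$ is strictly decreasing.

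For the bound $N_c(Z) \ge Z$ I would argue exactly as in the TF case: if $\int \rho_N < N$ with $N \le Z$, then the Euler--Lagrange multiplier $\mu$ vanishes, the effective potential $Z/|x| - \rho_N * 1/|x|$ behaves like $(Z - \int\rho_N)/|x| > 0$ at infinity, and placing a small test function at large radius strictly decreases the energy, a contradiction. The new and more delicate point in TFW theory (absent in TF, where $N_c = Z$ exactly) is the strict inequality $N_c(Z) > Z$, which is driven by the von Weizsäcker term. One shows that the chemical potential $\mu(Z)$ of the neutral minimizer $u_Z$ is strictly positive; equivalently, the right-derivative of $E^{\rm TFW}$ at $N = Z$ is $-\mu(Z) < 0$, so strict binding persists in a neighborhood past $N = Z$. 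This step uses that at $N = Z$ with $\mu = 0$, the asymptotic behavior of the Euler--Lagrange equation combined with the decay $V_Z(x) = o(1/|x|)$ forced by Newton's theorem and neutrality cannot be reconciled with a bound state of mass exactly $Z$; this is the content of the analysis in \cite{BBL81}.

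The hardest step, and where I expect the real obstacle, is the upper bound $N_c(Z) \le Z + C$ with $C$ \emph{independent of $Z$}. The natural first move is a Benguria--Lieb multiplication as in the proof of Theorem \ref{thm:L84}: multiply the TFW Euler--Lagrange equation
\begin{equation*}
 -C^{\rm W}\Delta u + \tfrac{5}{3}C^{\rm TF} u^{7/3} + \Bigl(u^2 * \tfrac{1}{|x|} - \tfrac{Z}{|x|}\Bigr) u = -\mu u, \qquad \mu \ge 0,
\end{equation*}
by $|x| u$ and integrate. The kinetic contribution is nonnegative by the one-body inequality \eqref{eq:Lieb-ineq}, the $u^{10/3}$ and $\mu$ terms are nonnegative and may be dropped for a lower bound, and the Coulomb self-energy is symmetrized via $|x|+|y| \ge |x-y|$, giving $\tfrac{1}{2}N^2 \le ZN$, i.e.\ $N \le 2Z$. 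To sharpen this to $N \le Z + C$, one must exploit that the TFW density $u^2$ decays exponentially at infinity, $u(x) \lesssim e^{-c\sqrt{\mu(N)}|x|}$, by standard Agmon-type subsolution comparison with the EL equation at large $|x|$. This pointwise decay permits splitting the density into an essentially neutral bulk of mass $\approx Z$ concentrated at scale $Z^{-1/3}$ and an outer tail of bounded mass, and refining the symmetrization of the Coulomb term so that the outer electrons feel only the screened net charge $O(1)$ rather than $Z$. Executing this quantitative screening estimate is precisely the delicate core of \cite{BL85}, and is what produces the universal constant $C$.
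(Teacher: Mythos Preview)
Your existence framework via relaxation and strict convexity is essentially correct and matches the approach of \cite{BBL81}. The two substantive inequalities, however, diverge from the paper.

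For $N_c(Z) > Z$, your chemical-potential sketch points in the right direction but is too vague to be a proof. The paper's argument (from \cite{B79}) is concrete and different in flavor: assuming $N_c \le Z$, Newton's theorem gives $\Phi \ge 0$, so the TFW equation yields $-\Delta u + c_1 u^{5/3} \ge 0$ for $x \ne 0$. One then compares $u$ with the explicit function $\widetilde{u}(x) = c_2|x|^{-3/2}$, which for small $c_2$ satisfies the reverse differential inequality on $\{|x|\ge 1\}$ and lies below $u$ on $\{|x|=1\}$; the maximum principle forces $u \ge \widetilde{u}$ for $|x|\ge 1$, contradicting $u \in L^2(\R^3)$. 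This comparison-function step is what your outline is missing.

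The genuine gap is in the upper bound $N_c \le Z + C$. Your proposal---Benguria--Lieb multiplication to reach $N \le 2Z$, followed by an exponential-decay-plus-screening refinement---is \emph{not} the method of \cite{BL85}, and you have not actually supplied the screening argument, only asserted that it should work. The argument in \cite{BL85} is entirely different and much shorter: one checks that the auxiliary function
\[
p(x) = \bigl(4\pi C^{\rm W} u^2(x) + \Phi(x)^2\bigr)^{1/2}
\]
is subharmonic on $\R^3\setminus\{0\}$ and tends to zero at infinity, so that $|x|p(x)$ is decreasing in $|x|$. Since $|x|\Phi(x)\to -Q(Z)$ and $u$ decays fast, one gets $Q(Z)=\lim_{|y|\to\infty}|y|p(y)\le |x_0|p(x_0)$ for every $x_0\ne 0$; choosing $|x_0|\sim 1$ with $u(x_0)\le C$ (from the equation) and $\Phi(x_0)<0$ gives $Q(Z)\le C$ directly. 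Your screening strategy is closer in spirit to Solovej's later bootstrap for Hartree--Fock \cite{S03} than to the TFW proof, and would face a real obstacle you do not address: the Agmon decay rate $\sqrt{\mu(N)}$ degenerates precisely as $N\uparrow N_c$, which is the regime you need to control.
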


The general framework of the existence, uniqueness and  properties of TFW minimizers was discussed in great detail by Benguria, Brezis and Lieb in 1981 \cite{BBL81}. Although the functional on the right hand side of \eqref{eq:TFW-theory} is still convex in $u$, the TFW theory is significantly more complicated than the TF theory. The analysis in \cite{BBL81} contains several steps; with the starting point being the study of a ``fully unconstrained problem" (namely a version of \eqref{eq:TFW-theory} without any mass constraint of $u$), which is of the same spirit of the above analysis of the Thomas--Fermi theory. The unique fully unconstrained minimizer is a positive, radial solution to the TFW equation
$$
\left( \frac{5}{3}C^{\rm TF} u^{2/3} + C^{\rm W} (-\Delta) - \Phi(x)\right) u (x)=0, \quad \Phi(x)=\frac{Z}{|x|} -  |u|^2*\frac{1}{|x|}
$$ 
and moreover $\int_{\R^3} |u|^2=N_c(Z)$. 

The strict lower bound $N_c(Z)>Z$ shows that the  von Weizs\"acker correction really improves the binding ability of atoms. This remarkable result was proved by Benguria in his 1979 PhD thesis under Lieb's supervision \cite{B79}. The nonexistence part, namely $N_c(Z)\le Z+C$ was proved by Benguria and Lieb in 1984 \cite{BL85}. Let us quickly explain these proofs below. 

\begin{proof}[Proof of $N_c(Z)>Z$ in TFW theory \cite{B79}] Assume that $N_c(Z)=\int_{\R^3} |u|^2 \le Z$. By Newton's theorem, the TFW potential 
$$
\Phi(x)=\frac{Z}{|x|} -  |u|^2*\frac{1}{|x|} = \frac{Z}{|x|} - \int_{\R^3} \frac{|u(y)|^2}{\max(|x|,|y|)} \d y 
$$
is nonnegative. Therefore, from the TFW equation we have
$$
-\Delta u(x) + c_1 u^{5/3}(x) \ge 0 \quad \text{ for all }x\ne 0
$$ 
with a constant $c_1\ge 0$. Consider the function $\widetilde{u}(x)= c_2 |x|^{-3/2}$ with $c_2>0$ sufficiently small such that
\begin{align*}
-\Delta \widetilde{u}(x) + c_1 \widetilde{u} ^{5/3}(x) &\le  0, \quad \forall |x|\ge 1,\\
\widetilde{u}(x) &\le u(x), \quad \forall |x|=1. 
\end{align*}
If the open set $A=\{|x|>1, \widetilde{u}(x) > u(x) \}$ is non empty, then  $\widetilde{u}-u$ is subharmonic and positive in $A$, but vanishes on the boundary of $A$, which is a contradiction to the maximum principle. Thus $u(x)\ge \widetilde u(x)$ for all $|x|\ge 1$, but this contradicts to the fact that $u\in L^2(\R^3)$. Thus $N_c(Z)>Z$.
\end{proof}

\begin{proof}[Proof of $N_c\le Z+C$ in TFW theory \cite{BL85}] The main idea is that the function 
$$ p(x)= (4\pi C^{\rm W} u^2(x) + \Phi^2(x))^{1/2}$$
is subharmonic for $|x|>0$ and $p(x)\to 0$ as $|x|\to \infty$. This implies that $|x|p(x)$ is convex and decreasing in $|x|$. On the other hand, when $|x|\to \infty$, the TFW minimizer $u$ decays faster than any polynomial while the TFW potential satisfies $|x|\Phi(x)\to  -Q(Z)<0$ where $Q(Z)=N_c(Z)-Z$. Therefore, we conclude that 
$$
Q(Z)= \lim_{|y|\to \infty} |y|p(y) \le |x| p(x), \quad \forall |x|>0. 
$$
We can choose $|x_0|\sim O(1)$ such that $\Phi(x_0)<0$ (this follows from $Q(Z)>0$) and $u(x_0)\le C$ (this follows from the TFW equation). Thus $Q(Z) \le C$ as desired. 
\end{proof}

Further important results on the TFW theory were established later by Solovej in his 1989 PhD thesis under Lieb's supervision. In particular, Solovej introduced the  {\em universality} concept, namely some relevant quantities not only are bounded uniformly but also have limits when $Z\to \infty$.  In particular, he proved the following theorem. 

\begin{theorem}[\cite{S90}] \label{thm:S90}The TFW unconstrained minimizer $u_Z$ and the TFW potential $\Phi_Z(x)= Z|x|^{-1}- |u_Z|^2*|x|^{-1}$ have limits when $Z\to \infty$
$$
\lim_{Z\to \infty}u_Z(x) = u_\infty(x) , \quad \lim_{Z\to \infty} \Phi_Z(x)=\Phi_\infty (x),  \quad \forall x\ne 0.
$$
Consequently, the maximum ionization $Q_Z= N_c(Z)-Z$ also has a limit 
$$
\lim_{Z\to \infty} Q_Z = Q_\infty = - \lim_{|x|\to \infty} |x| \Phi_\infty(x). 
$$
\end{theorem}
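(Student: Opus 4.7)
The plan is to prove universality in four stages: (i) derive uniform a priori decay bounds on the unconstrained minimizer $u_Z$ and potential $\Phi_Z$ away from the nucleus; (ii) extract subsequential pointwise limits by elliptic regularity; (iii) upgrade these to genuine limits via a monotonicity-in-$Z$ argument; (iv) pass to the limit in the asymptotic identity $|x|\Phi_Z(x) \to -Q_Z$ as $|x| \to \infty$ to identify $Q_\infty$.

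For step (i), I would exploit the subharmonicity of $p_Z(x) = (4\pi C^{\rm W} u_Z^2(x) + \Phi_Z^2(x))^{1/2}$ already used in the Benguria--Lieb proof of $N_c(Z) \leq Z + C$. Because $|x|p_Z(x)$ is decreasing in $|x|$ with limit $Q_Z \leq C$ at infinity, and because a bound at a single shell $|x| = r_0$ follows from the TFW equation and the subharmonicity, one obtains the $Z$-uniform estimate
\[
u_Z(x)^2 + \Phi_Z(x)^2 \leq \frac{C(r_0)}{|x|^2} \quad \text{for all } |x| \geq r_0 > 0.
\]
Feeding this into the TFW equation $-C^{\rm W}\Delta u_Z + \tfrac{5}{3}C^{\rm TF} u_Z^{5/3} = \Phi_Z u_Z$ and bootstrapping with Schauder estimates gives uniform local $C^{1,\alpha}$ bounds on $u_Z$ and $\Phi_Z$ on compact subsets of $\R^3 \setminus \{0\}$. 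A diagonal Arzel\`a--Ascoli extraction then produces a sequence $Z_n \to \infty$ along which $u_{Z_n} \to u_\infty$ and $\Phi_{Z_n} \to \Phi_\infty$ locally uniformly on $\R^3 \setminus \{0\}$, the limits inheriting the TFW equation.

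The crux is step (iii): promoting subsequential convergence to convergence of the full family. I would prove the monotonicity relations $u_{Z_1}(x) \leq u_{Z_2}(x)$ and $\Phi_{Z_1}(x) \leq \Phi_{Z_2}(x)$ for all $Z_1 < Z_2$ and $x \neq 0$. Writing $w = \Phi_{Z_2} - \Phi_{Z_1}$,
\[
-\Delta w = 4\pi (Z_2 - Z_1)\delta_0 - 4\pi (u_{Z_2}^2 - u_{Z_1}^2),
\]
so $w$ is superharmonic in a neighborhood of the origin and vanishes at infinity. On the open set $\{w < 0\}$, which is therefore bounded away from $0$, a pointwise inversion of the TFW equation forces $u_{Z_2} \leq u_{Z_1}$, hence $-\Delta w \geq 0$ on $\{w < 0\}$; the maximum principle then collapses $\{w < 0\}$ to $\emptyset$. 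Combined with the upper bound from step (i), this monotonicity yields pointwise convergence of the full family $u_Z, \Phi_Z$, not merely along subsequences.

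Finally, in step (iv), the inequality $Q_Z \leq |x| p_Z(x) \leq C(r_0)/|x|$ together with the pointwise convergence $p_Z \to p_\infty$ permits a monotone/dominated-convergence interchange of the $Z \to \infty$ and $|x| \to \infty$ limits, yielding $Q_Z \to Q_\infty := -\lim_{|x|\to\infty}|x|\Phi_\infty(x)$ with $Q_\infty \leq C$. The main obstacle is the monotonicity in step (iii): the nonlocal Coulomb coupling between $u$ and $\Phi$ makes the pointwise inversion of the TFW equation on $\{w < 0\}$ delicate, so a rigorous implementation likely requires either a variational coupling of the two unconstrained minimization problems (testing a modification of $u_{Z_1}$ in the $Z_2$-functional and vice versa) or a monotone fixed-point iteration scheme for the TFW equation in which the monotonicity in $Z$ is manifest at every iterate and transfers to the unique fully unconstrained minimizer in the limit.
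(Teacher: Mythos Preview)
The paper does not supply a proof of this theorem; it is stated and attributed to Solovej \cite{S90} without argument, so there is no in-paper proof to compare your proposal against.

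That said, your outline has a substantive gap already in step~(i). The Benguria--Lieb subharmonicity gives that $r\mapsto r\,p_Z(r)$ is nonincreasing with limit $Q_Z\le C$ at infinity, but near the origin $|x|\Phi_Z(x)\to Z$, so $r\,p_Z(r)$ starts at order $Z$ and decreases to $Q_Z$. Nothing in that argument tells you at which radius it first drops below a $Z$-independent constant; in particular it does \emph{not} deliver $u_Z(r_0)^2+\Phi_Z(r_0)^2\le C(r_0)$ uniformly in $Z$ at any fixed $r_0>0$. The missing input is a Sommerfeld-type estimate $\Phi_Z(x)\le C|x|^{-4}$ for $|x|$ beyond the TF length $Z^{-1/3}$, and in Solovej's work this comes from a quantitative comparison with the TF minimizer, not from the subharmonicity of $p_Z$ alone. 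Without it your compactness step has no uniform input and the whole scheme stalls.

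Step~(iii) is also not sound as written, and your own caveat is warranted: on the set $\{\Phi_{Z_2}<\Phi_{Z_1}\}$ there is no ``pointwise inversion'' of the TFW equation forcing $u_{Z_2}\le u_{Z_1}$, since the equation is a second-order PDE rather than an algebraic relation between $u$ and $\Phi$. Solovej's actual route does not go through monotonicity in $Z$. Once $Z$-uniform bounds are available, any subsequential limit $(u_\infty,\Phi_\infty)$ solves a limiting TFW-type system on $\R^3\setminus\{0\}$ in which the nuclear term $Z|x|^{-1}$ has been replaced by a Sommerfeld boundary condition at the origin; the full convergence then follows from a \emph{uniqueness} theorem for that limiting problem, not from any ordering of the $u_Z$.
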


Recall that the TF theory describes the bulk of the electrons at distance $O(Z^{-1/3})$ to the nucleus, and hence the rescaled function $Z^{-1}u_Z(Z^{-1/3}x)$ has a limit when $Z\to \infty$ which is given by the TF minimizer. However, the universality in Theorem \ref{thm:S90} is much deeper since it describes the outermost electrons at distance $O(1)$ to the nucleus which are responsible for chemical binding. In the level of the many-body Schr\"odinger theory, the convergence of the one-body density $Z^{-2} \rho_Z(Z^{-1/3}x)$ was already proved by Lieb and Simon \cite{LS77}, but the universality remains open. 

\begin{conjecture}[Universality] \label{conj-uni} In the many-body Schr\"odinger theory \eqref{eq:EN}, the one-body density $\rho_Z$ of the ground state with $N=N_c(Z)$ has a limit up to subsequences $Z=Z_n\to \infty$, namely 
$$
\lim_{Z_n\to \infty} \rho_{Z_n}(x) = \rho_\infty(x), \quad \forall x\ne 0.
$$
\end{conjecture}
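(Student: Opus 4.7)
The plan is to deduce Conjecture \ref{conj-uni} from a quantitative comparison with Solovej's TFW universality (Theorem \ref{thm:S90}). Writing $u_Z$ for the TFW unconstrained minimizer, I would aim to show that on every compact set $K \subset \R^3 \setminus \{0\}$,
\[
\sup_{x\in K}\bigl|\rho_Z(x) - |u_Z(x)|^2\bigr| \longrightarrow 0 \quad\text{as } Z \to \infty.
\]
Combined with Theorem \ref{thm:S90}, this yields $\rho_Z(x) \to \rho_\infty(x) := |u_\infty(x)|^2$ pointwise on $\R^3\setminus\{0\}$, with no subsequence required, which is in fact stronger than the conjecture. The density $\rho_\infty$ is therefore identified with the TFW density at infinity.

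First, I would establish uniform local regularity. By the Hoffmann--Ostenhof inequality, $\sqrt{\rho_Z} \in H^1(\R^3)$ and satisfies an effective one-body equation of the form
\[
(-\Delta - \Phi_Z)\sqrt{\rho_Z} + R_Z \sqrt{\rho_Z} = \mu_Z \sqrt{\rho_Z},
\]
where $\Phi_Z(x) = Z/|x| - \rho_Z * |x|^{-1}$, $\mu_Z = E_{N_c(Z)} - E_{N_c(Z)-1} \in [-C,0]$, and $R_Z$ is a remainder encoding exchange--correlation and kinetic--correlation effects. Because $N_c(Z)/Z \to 1$ by Theorem \ref{thm:asympneu}, the effective external-plus-Hartree potential in any outer region $\{|x| \ge r_0\}$ has bounded strength, so HVZ-type geometric localization yields exponential decay of $\Psi_Z$ outside a ball of fixed radius $C$, uniformly in $Z$. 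Standard elliptic regularity applied to the effective equation then provides uniform $C^\alpha(K)$ bounds on $\sqrt{\rho_Z}$ for every compact $K\subset \R^3\setminus\{0\}$, enough for Arzel\`a--Ascoli-type precompactness.

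Second, I would run a TFW comparison. The TFW minimizer satisfies
\[
(-\Delta - \Phi^{\rm TFW}_Z + c_1 u_Z^{4/3})u_Z = 0, \qquad \Phi^{\rm TFW}_Z(x) = Z/|x| - |u_Z|^2 * |x|^{-1}.
\]
Setting $\delta_Z = \sqrt{\rho_Z} - u_Z$ and subtracting from the many-body effective equation, one obtains an elliptic equation for $\delta_Z$ with source terms controlled by (i) the mean-field difference $\Phi_Z - \Phi^{\rm TFW}_Z$, bounded in terms of $\|\rho_Z - |u_Z|^2\|_{L^1}$ times $|x|^{-1}$; (ii) the TFW correction $c_1 u_Z^{4/3}$, small on $K$ by Solovej's decay estimates for $u_Z$; and (iii) the many-body correlation remainder $R_Z$. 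A maximum-principle/subsolution argument modeled on Solovej's proof of Theorem \ref{thm:S90} should then close the loop, provided (iii) can be shown to be pointwise small on $K$.

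The main obstacle will be step~(iii): pointwise control of $R_Z$ on compact sets away from the nucleus. The Lieb--Oxford bound \eqref{eq:LO} gives only integrated smallness, which is enough at the Thomas--Fermi scale $|x|\sim Z^{-1/3}$ treated by Lieb--Simon \cite{LS77}, but on the $O(1)$ shell the density remains of fixed order while the expected number of electrons in $K$ is uniformly bounded, so the usual semiclassical sharp bounds degenerate. A plausible route is to combine the refined coupling-constant analysis of Seco--Sigal--Solovej \cite{SSS90} with Solovej's TFW machinery to extract a genuinely outer-region exchange--correlation estimate, perhaps by exploiting the fact that the two-body reduced density in $K$ is itself close, in a suitable sense, to $|u_Z(x)|^2 |u_Z(y)|^2$. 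This step appears to be of difficulty comparable to the ionization conjecture itself, and is arguably the reason why universality at the many-body level remains open.
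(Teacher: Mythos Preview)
The statement is a \emph{Conjecture} in the paper and is explicitly described there as open (``the universality remains open''; ``the boundedness of these quantities is unknown''). There is no proof in the paper to compare against, so your proposal is an attack on an open problem, not a reconstruction of an existing argument.

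The proposal has genuine gaps beyond the one you flag at the end. First, the ``effective one-body equation'' for $\sqrt{\rho_Z}$ is not a consequence of the Hoffmann--Ostenhof inequality: that result yields only a kinetic-energy lower bound (or, in its differential form, a one-sided subsolution inequality), so writing an equality with a remainder $R_Z$ merely \emph{defines} $R_Z$ and transfers the entire difficulty onto it from the outset. Second, the assertion $\mu_Z = E_{N_c(Z)} - E_{N_c(Z)-1} \in [-C,0]$ presupposes a uniform bound on the ionization energy, which is itself one of the open problems listed alongside universality in \cite{S00,LS09}; you cannot invoke it as input. Third, uniform-in-$Z$ exponential decay of $\Psi_Z$ outside a \emph{fixed} ball does not follow from asymptotic neutrality (Theorem~\ref{thm:asympneu}) by any standard HVZ/Agmon argument, since the decay rate one extracts is governed precisely by the ionization energy just mentioned.

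Your final paragraph is the accurate assessment: pointwise control of exchange--correlation at the $O(1)$ scale is exactly the missing ingredient, and it is not supplied by Lieb--Oxford, by \cite{SSS90}, or by any known technique. The earlier steps of the outline do not reduce the problem in a substantive way, because each of them already consumes an unproved input of comparable depth. The proposal is thus a reasonable wish list for what a proof would need, but it is not a proof, and the paper does not claim one.
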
 

Here different subsequences $Z_n\to \infty$ may lead to different limits, which corresponds to the existence of different groups in an ``infinite periodic table". In principle, if Conjecture \ref{conj-uni} holds true, then we should be able to extract the convergence of the maximum ionization as well as the radius of atoms. Currently, the boundedness of these quantities is unknown \cite{S00,LS09}. We refer to a recent paper of Solovej \cite{S17} for further discussions on the universality of large atoms and molecules. 
 
\section{The Hartree--Fock (HF) theory}

In computational physics and chemistry, not only the electron density, but also the electron orbitals are of the fundamental interest. 
One of the most popular methods in this direction is the Hartree--Fock (HF) theory in which the many-body wave functions are restricted to the Slater determinants
\begin{align}\label{eq:Slater}
\Psi(x_1,...,x_N)= (u_1\wedge ... \wedge  u_N)(x_1,...,x_N)= \frac{1}{\sqrt{N!}} \det [ u_i(x_j)]_{1\le i,j\le N}
\end{align}
where $\{u_i\}_{i=1}^N$ is an orthonormal family in $L^2(\R^3)$. In principle, the Slater determinants are very similar to the Hartree states in \eqref{eq:Hartree}, except that the anti-symmetric tensor has to be taken in \eqref{eq:Slater} to ensure Pauli's exclusion principle \eqref{eq:Pauli}. The Hartree--Fock energy is defined by 
\begin{align}\label{eq:HF}
E^{\rm HF}(N)= \inf_{\Psi \in SD_N} \langle \Psi, H_N \Psi\rangle
\end{align}
where $SD_N$ is the set of $N$-body Slater determinants. Here $N\in \mathbb{N}$ and $Z>0$ is not necessarily an integer.

The analysis of the HF theory is an important subject of mathematical physics. In the context of the ionization problem, the existence of HF miminizers for $N\le Z$ is much harder than that in the many-body Schr\"odinger theory since the set of states is very nonlinear due to the orthogonality of the orbitals $\{u_i\}_{i=1}^N$. This issue was settled in a seminal paper of Lieb and Simon in 1977. 


\begin{theorem}[\cite{LS77b}] \label{thm:LS77b} For every $N<Z+1$, the HF minimization problem \eqref{eq:HF} has a minimizer. Moreover, the minimizing orbitals $\{u_i\}_{i=1}^N$ are the $N$ lowest eigenfunctions of the one-body operator 
$$
h = -\Delta - Z|x|^{-1}  + U_\Psi(x) - K_\Psi
$$ 
with the multiplication operator $U_\Psi(x)= \sum_{i=1}^N |u_i|^2*|x|^{-1}$ and the Hilbert--Schmidt operator $K_\Psi$ with kernel $K_\Psi(x,y)=\sum_{i=1}^N u_i(x) \overline{u_i(y)} |x-y|^{-1}$. In fact, $h$ has infinitely many negative eigenvalues; in particular $h u_i =\eps_i u_i$ with $\eps_i< 0$ for all $i$. 
\end{theorem}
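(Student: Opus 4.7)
My plan is to prove existence via the direct method in the calculus of variations, then derive the Euler--Lagrange equations and analyze the one-body operator $h$. Take a minimizing sequence of Slater determinants $\Psi^{(n)} = u_1^{(n)} \wedge \cdots \wedge u_N^{(n)}$. Combining Hardy's inequality \eqref{eq:Hardy} with the coarse bound $E^{\rm HF}(N) \ge E_N \ge -CNZ^2$ lets us absorb the nuclear attraction into a fraction of the kinetic energy, so each $u_i^{(n)}$ is uniformly bounded in $H^1(\R^3)$. Along a subsequence, $u_i^{(n)} \rightharpoonup u_i^\infty$ weakly in $H^1$ and a.e.; orthonormality passes to the limit only up to possible mass loss, $\|u_i^\infty\|_{L^2}^2 \leq 1$. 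By weak lower semicontinuity of the kinetic and direct Coulomb terms, local compactness of the nuclear attraction, and the Hilbert--Schmidt nature of $K_\Psi$, the only obstacle to $\Psi^\infty$ being a minimizer is mass escaping to infinity.

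The key tool to exclude mass loss is the strict binding inequality $E^{\rm HF}(N) < E^{\rm HF}(N-1)$ under the hypothesis $N-1 < Z$. Mirroring Zhislin's strategy from Theorem \ref{thm:Zhislin}, I would take a near-minimizer $\{v_1,\dots,v_{N-1}\}$ for $E^{\rm HF}(N-1)$ and form the trial state $v_1 \wedge \cdots \wedge v_{N-1} \wedge \phi_R$, where $\phi_R(x) = \phi(x - R e_1)$ is a hydrogenic $1s$-orbital tuned to effective charge $Z-(N-1)>0$ and Gram--Schmidt-orthogonalized against the $v_i$'s. The orthogonalization error and the Coulomb cross-couplings with the core orbitals are exponentially small in $R$, since the $v_i$ decay exponentially (they solve the $(N-1)$-particle HF eigenequations with strictly negative top eigenvalue). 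The added orbital sees the asymptotic mean-field $-(Z-N+1)/|x-Re_1|$ and contributes an energy $\leq -(Z-N+1)^2/4 + o(1)$ as $R\to\infty$, producing the strict inequality.

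To convert mass loss into a contradiction with strict binding, I pass to the one-body density-matrix formulation $\gamma_\Psi = \sum_i |u_i\rangle\langle u_i|$, which absorbs the orbital rotation freedom, and apply an IMS decomposition $\chi_R^2 + \eta_R^2 = 1$ with $\chi_R, \eta_R$ supported in $\{|x|\leq R\}$ and $\{|x|\geq R-1\}$. Writing $\gamma^{(n)} \approx \chi_R \gamma^{(n)} \chi_R + \eta_R \gamma^{(n)} \eta_R$ modulo a localization error vanishing as $R\to\infty$ (controlled by the uniform $H^1$ bound on orbitals), and sending first $n \to \infty$ then $R \to \infty$, I obtain a splitting $E^{\rm HF}(N) \ge \mathcal{E}^{\rm HF}[\gamma^\infty] + \mathcal{E}^{\rm HF}_0[\widetilde{\gamma}^\infty]$, where $\widetilde{\gamma}^\infty$ carries the escaped mass and the subscript $0$ denotes the nuclear-free HF functional, which is nonnegative by Hardy. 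Since $\mathcal{E}^{\rm HF}[\gamma^\infty] \geq E^{\rm HF}(M)$ with $M = \tr \gamma^\infty < N$, strict monotonicity of $E^{\rm HF}(\cdot)$ (extended by density-matrix relaxation to non-integer mass) contradicts the binding inequality and forces $M = N$, yielding the desired minimizer.

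For the Euler--Lagrange analysis, stationarity under variations respecting orthonormality gives $h\,u_i = \sum_j \Lambda_{ij} u_j$ with a Hermitian multiplier matrix $\Lambda$, which can be diagonalized using the unitary invariance of both the Slater wedge and the HF functional, yielding $h u_i = \eps_i u_i$. A swap argument confirms the Aufbau principle: if some $\eps_i$ exceeded an eigenvalue $\mu$ of $h$ with eigenfunction $w \perp u_1,\dots,u_N$, then replacing $u_i$ by $w$ would strictly decrease the HF energy, contradicting minimality. The top multiplier satisfies $\eps_N = E^{\rm HF}(N) - E^{\rm HF}(N-1) < 0$ by strict binding, hence $\eps_i < 0$ for all $i$. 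Finally, by multipole expansion $U_\Psi(x) = N|x|^{-1} + O(|x|^{-2})$ and the kernel of $K_\Psi$ inherits the exponential decay of the orbitals, so $h \sim -\Delta - (Z-N)/|x|$ at infinity; the attractive Coulomb tail then yields the standard hydrogenic accumulation of eigenvalues at $0 = \inf \sigma_{\rm ess}(h)$. The main obstacle throughout is the concentration-compactness step combined with the strict binding chain: the nonlinear direct and nonlocal exchange terms intertwine the retained and escaped pieces of mass in a way that must be disentangled without breaking orthonormality, which is what makes the HF problem substantially more delicate than its Schr\"odinger counterpart.
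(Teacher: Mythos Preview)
Your outline is essentially correct but takes a genuinely different route from the paper. The paper's argument is a \emph{relaxation method}: instead of attacking compactness for Slater determinants directly, one enlarges the variational set to allow sub-normalized orbitals (equivalently, density matrices $0\le\gamma\le 1$ with $\tr\gamma\le N$), making the constraint set convex and weak-$*$ closed so that existence of a relaxed minimizer is immediate. The Euler--Lagrange condition then reads $\tr(h(\widetilde\gamma-\gamma))\ge 0$ for all admissible $\widetilde\gamma$, and the fact that $h$ has infinitely many negative eigenvalues when $N<Z+1$ forces the relaxed minimizer to saturate $\tr\gamma=N$; Lieb's variational principle (Theorem~\ref{thm:L81b}) then returns a genuine Slater determinant. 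No concentration--compactness, no strict binding chain, no IMS splitting is needed. Your approach---inductive strict binding $E^{\rm HF}(N)<E^{\rm HF}(N-1)$ combined with a Lions-type dichotomy to rule out mass loss---is the ``generic'' variational route and does work, but is considerably heavier; moreover, when you localize $\gamma^{(n)}$ by $\chi_R,\eta_R$ the pieces are no longer projections, so you are implicitly invoking the same density-matrix relaxation the paper uses up front.

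Two small corrections. First, the identity $\eps_N = E^{\rm HF}(N)-E^{\rm HF}(N-1)$ is Koopmans' theorem and is \emph{not} exact; what is true (and suffices) is the inequality $\eps_N \le E^{\rm HF}(N)-E^{\rm HF}(N-1)$, obtained by noting that deleting $u_N$ produces an admissible $(N-1)$-body trial state whose energy differs from $E^{\rm HF}(N)$ by exactly $\eps_N$ (the self-interaction of $u_N$ cancels between the direct and exchange terms). Second, your swap argument for the Aufbau principle is correct but incomplete as stated: replacing $u_i$ by $w$ changes the mean field, and one must check that the second-order correction $-[D(|w|^2,|u_i|^2)-X(w,u_i)]$ has the right sign. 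It does, since $D(|w|^2,|u_i|^2)-X(w,u_i)=\tfrac12\iint |w(x)u_i(y)-u_i(x)w(y)|^2/|x-y|\,\d x\,\d y\ge 0$, but this should be said.
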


\begin{proof} The general strategy is to use  the relaxation method in the same spirit of the TF theory. In the HF case, the corresponding ``unconstrained problem" is 
\begin{align}\label{eq:HF-ext}
\inf_{\Psi \in \widetilde{SD}_N} \langle \Psi, H_N \Psi\rangle
\end{align}
where $\widetilde{SD}_N$ contains all $\Psi=u_1\wedge ... \wedge  u_N$ such that the $N\times N$ matrix  
$$M=(\langle u_i, u_j\rangle)_{1\le i,j\le N}$$
 satisfies $0\le M\le 1$. Note that $M=1$ if $\Psi$ is a Slater determinant. The extension to $0\le M\le 1$ makes the set $\widetilde{SD}_N$ stable under the weak convergence in $L^2(\R^3)$ of the orbitals $\{u_i\}_{i=1}^N$, thus ensuring the existence of minimizers of \eqref{eq:HF-ext} by the direct method in the calculus of variations. 

In order to go back to the original problem \eqref{eq:HF}, three key ingredients are needed. First, since the energy $\langle \Psi, H_N \Psi\rangle$ with $\Psi=u_1\wedge ... \wedge  u_N$ is invariant under changing $\{u_i\}_{i=1}^N$ to $\{A u_i\}_{i=1}^N$ with any $N\times N$ unitary matrix $A$, we can assume that $M$ is diagonal, namely 
$$\langle u_i, u_j\rangle = \lambda_{i} \delta_{ij}\quad \text { with } 0\le \lambda_i \le 1.$$ 
Second, note that for each $i$, the function $u=u_i$ is the minimizer of the functional 
$$\Psi= u_1\wedge ... \wedge u_{i-1} \wedge u \wedge u_{i+1}\wedge... \wedge u_N \mapsto \langle \Psi, H_N \Psi\rangle={\rm const}+ \langle u, h u\rangle$$
subject to the constraints
$$\langle u, u_j\rangle=0 \text{ for all }j\ne i, \quad \|u\|_{L^2}\le 1.$$
Therefore, $u_i$ must be a linear combination of the $N$ smallest eigenfunctions of $h$. Up to a further unitary transformation, we can assume that all $u_i$ are eigenfunctions of $h$. Third, when $N<Z+1$, $h$ has infinitely many eigenvalues below its essential spectrum $[0,\infty)$. This  fact can be proved by the min-max principle, using radial trial states with disjoint supports.  Thus for all $i$, we have $hu_i=\eps_i u_i$ with $\eps_i<0$, and hence the minimizing $u_i\mapsto \langle u_i, h_i u_i\rangle = \eps_i \|u_i\|_{L^2}^2$ under the constraint $\|u_i\|_{L^2}\le 1$ must satisfy $\|u_i\|_{L^2}=1$. 
\end{proof}

Note that a Slater determinant can be encoded fully in terms of its {\em one-body density matrix}. Recall that for every $N$-body wave function $\Psi$, the one-body density matrix $\gamma_\Psi$ is a trace class operator on $L^2(\R^3)$ with kernel  
\begin{align} \label{eq:1pdm}
\gamma_\Psi(x,y)=N \int_{\R^{3(N-1)}}\int_{\R^3} \Psi(x,x_2,...,x_N)\overline{\Psi(y,x_2,...,x_N)} \d x_2 ... \d x_N.
\end{align}
In particular, if $\Psi$ is given in \eqref{eq:Slater}, then $\gamma_\Psi$ is the rank-$N$ orthogonal projection 
$$
\gamma_\Psi = \sum_{i=1}^N |u_i\rangle \langle u_i|.  
$$
The one-body density $\rho_\Psi$ defined in \eqref{eq:rhoN} is given equivalently by $\rho_\Psi(x)=\rho_\gamma(x)=\gamma(x,x)$. Using these notations, the energy of a Slater determinant $\Psi$ is given by  
$$
\langle \Psi, H_N \Psi\rangle = \mathcal{E}^{\rm HF}(\gamma_\Psi) 
$$
where 
$$
\mathcal{E}^{\rm HF}(\gamma)= {\rm Tr} ((-\Delta - Z|x|^{-1} )\gamma) +  \frac{1}{2} \iint  \frac{\rho_\gamma(x)\rho_\gamma(y) - |\gamma(x,y)|^2}{|x-y|} \d x \d y.
 $$
 Consequently, the   Hartree--Fock energy in \eqref{eq:HF} can be written equivalently as  
\begin{align}\label{eq:HF-gamma}
E^{\rm HF}(N)=\inf_{\substack {0\le \gamma=\gamma^2 \le 1\\ {\rm Tr} \gamma =N} } \mathcal{E}^{\rm HF}(\gamma) .
\end{align}
In this direction, the relaxation method suggests to relate \eqref{eq:HF-gamma} to the ``unconstrained problem" 
\begin{align}\label{eq:HF-gamma-0-1}
E^{\rm HF}_{\le}(N)=\inf_{\substack {0\le \gamma \le 1\\ {\rm Tr} \gamma =N} } \mathcal{E}^{\rm HF}(\gamma).  
\end{align}
Here we drop the projection  condition $\gamma=\gamma^2$ in \eqref{eq:HF-gamma-0-1} in order to make the set of states convex. Thus in principle, the unconstrained energy $E^{\rm HF}_{\le}(N)$ is much easier to compute than the original energy $E^{\rm HF}(N)$. On the other hand, while $E^{\rm HF}(N)$ is an obvious upper bound to the full many-body $E_N$ in \eqref{eq:EN}, it is unclear if the unconstrained energy $E^{\rm HF}_{\le}(N)$ has this nice property or not. This conceptual difficulty was removed completely in 1981 by Lieb. 

\begin{theorem}[Lieb's variational principle \cite{L81b}] \label{thm:L81b} Let $0\le \gamma \le 1$ and ${\rm Tr} \gamma=N$. Then there exist an $N$-body Slater determinant $\Psi$  and an $N$-body mixed state $\Gamma$ such that its one-body density matrix is $\Gamma^{(1)}=\gamma$ and  
$$
\langle \Psi, H_N \Psi\rangle \le {\rm Tr}(H_N \Gamma) \le \mathcal{E}^{\rm HF}(\gamma).
$$
\end{theorem}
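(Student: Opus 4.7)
The plan is to build both $\Gamma$ and $\Psi$ from an iterative convex-splitting of $\gamma$ carried out in its spectral basis, and the whole argument rests on a one-dimensional concavity computation. First diagonalize $\gamma = \sum_i \lambda_i |u_i\rangle\langle u_i|$ with $\lambda_i \in [0,1]$ and $\sum_i \lambda_i = N$. For any pair of indices $i \neq j$ with $\lambda_i,\lambda_j \in (0,1)$, consider the trace-preserving one-parameter family
\begin{equation*}
\gamma(t) = \gamma + t\bigl(|u_i\rangle\langle u_i| - |u_j\rangle\langle u_j|\bigr).
\end{equation*}
The map $t \mapsto \mathcal{E}^{\rm HF}(\gamma(t))$ is a quadratic polynomial in $t$ whose leading coefficient equals
\begin{equation*}
\frac{1}{2}\iint_{\R^3\times\R^3} \frac{(|u_i(x)|^2-|u_j(x)|^2)(|u_i(y)|^2-|u_j(y)|^2) - |u_i(x)\overline{u_i(y)} - u_j(x)\overline{u_j(y)}|^2}{|x-y|}\, \d x \, \d y.
\end{equation*}
The key step is to show that the numerator is pointwise $\le 0$, using the elementary AM-GM inequality $2|u_i(x)u_j(x)u_i(y)u_j(y)| \le |u_i(x)|^2|u_j(y)|^2 + |u_j(x)|^2|u_i(y)|^2$; hence $\mathcal{E}^{\rm HF}(\gamma(t))$ is concave in $t$.

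Next I iterate the splitting in the spirit of Bach. Let $[t_-, t_+]$ be the maximal interval on which $\gamma(t)$ remains in $[0,1]$ as an operator; since $\lambda_i,\lambda_j \in (0,1)$, one has $t_- < 0 < t_+$. Writing $0 = \alpha t_- + \beta t_+$ with $\alpha, \beta > 0$ and $\alpha + \beta = 1$, concavity yields
\begin{equation*}
\mathcal{E}^{\rm HF}(\gamma) \ge \alpha \mathcal{E}^{\rm HF}(\gamma(t_-)) + \beta \mathcal{E}^{\rm HF}(\gamma(t_+)).
\end{equation*}
At each endpoint one more eigenvalue has been forced into $\{0, 1\}$, so recursing on both branches in a binary-tree fashion, and passing to a limit if the spectrum of $\gamma$ is infinite, produces a convex decomposition
\begin{equation*}
\gamma = \sum_k p_k P_k
\end{equation*}
into rank-$N$ orthogonal projections $P_k$, all sharing the eigenbasis $\{u_i\}$, together with the energy estimate $\sum_k p_k \mathcal{E}^{\rm HF}(P_k) \le \mathcal{E}^{\rm HF}(\gamma)$.

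Each projection $P_k$ is the one-body density matrix of a unique Slater determinant $\Psi_k$ built from its occupied orbitals, and for such a Slater determinant one has $\langle \Psi_k, H_N \Psi_k\rangle = \mathcal{E}^{\rm HF}(P_k)$. Setting $\Gamma = \sum_k p_k |\Psi_k\rangle\langle \Psi_k|$ defines an $N$-body mixed state with $\Gamma^{(1)} = \sum_k p_k P_k = \gamma$ and energy
\begin{equation*}
{\rm Tr}(H_N \Gamma) = \sum_k p_k \mathcal{E}^{\rm HF}(P_k) \le \mathcal{E}^{\rm HF}(\gamma).
\end{equation*}
Finally, pick $\Psi = \Psi_{k^\ast}$ to be the Slater determinant of minimal energy in the decomposition; then trivially $\langle \Psi, H_N\Psi\rangle \le \sum_k p_k \langle \Psi_k, H_N \Psi_k\rangle = {\rm Tr}(H_N\Gamma)$, which completes the chain of inequalities.

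The main obstacle is the concavity in the first step: the HF functional decomposes as a convex direct Coulomb term plus a concave exchange term, so it is neither convex nor concave on the set $\{0 \le \gamma \le 1,\, {\rm Tr}\gamma = N\}$, and one cannot hope for a global concavity statement. What saves the argument is that along the specific rank-two direction $|u_i\rangle\langle u_i| - |u_j\rangle\langle u_j|$ the exchange integrand pointwise dominates the direct integrand; along more general directions such as $|u_1\rangle\langle u_1| + |u_2\rangle\langle u_2| - |u_3\rangle\langle u_3| - |u_4\rangle\langle u_4|$ this pointwise domination breaks down, so the splitting direction has to be chosen very carefully. A secondary issue is passing to the limit when $\gamma$ has infinitely many eigenvalues in $(0,1)$, which is handled by standard truncation and weak-compactness arguments.
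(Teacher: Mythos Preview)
The paper does not supply its own proof of this theorem; it merely states the result and directs the reader to Lieb \cite{L81b} and to Bach \cite{B92} for ``a simplified proof of this result''. Your proposal is precisely Bach's argument: the rank-two deformation $\gamma(t)=\gamma+t(|u_i\rangle\langle u_i|-|u_j\rangle\langle u_j|)$, the pointwise AM-GM inequality yielding concavity of $t\mapsto\mathcal{E}^{\rm HF}(\gamma(t))$, and the recursive binary splitting that drives one eigenvalue at a time into $\{0,1\}$ until only rank-$N$ projections remain. The concavity computation is correct as written (the numerator expands to $-|u_i(x)|^2|u_j(y)|^2-|u_j(x)|^2|u_i(y)|^2+2\,{\rm Re}\,u_i(x)\overline{u_i(y)}\overline{u_j(x)}u_j(y)\le 0$), and your closing observation that the pointwise exchange-vs-direct domination fails along higher-rank directions is exactly the reason the splitting must be done two orbitals at a time. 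So you have reproduced the very proof the paper cites, and it is correct.
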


Here $\Gamma$ is an $N$-body mixed state if $\Gamma= \sum_{i} \lambda_i |\Psi_i\rangle \langle \Psi_i|$ with $N$-body orthonormal functions $\{\Psi_i\}$ and nonnegative numbers $\{\lambda_i\}$ satisfying $\sum_i \lambda_i=1$. In terms of the one-body density matrices,  we have $\Gamma^{(1)}=\sum_i \lambda_i \gamma_{\Psi}$ where $\gamma_\Psi$ is defined in 
\eqref{eq:1pdm}.

A direct consequence of Lieb's theorem is that $E^{\rm HF}_{\le}(N)=E^{\rm HF}(N)$, which makes the formulation \eqref{eq:HF-gamma-0-1} extremely helpful to compute an energy upper bound of $E_N$ by the trial state argument. 

As mentioned in \cite{L81b}, Theorem \ref{thm:L81b} holds for any two-body interaction which is positive semidefinite. We refer to Bach's paper \cite{B92} for a simplified proof of this result. Theorem \ref{thm:L81b} is one of the main tools in Bach's proof that the HF energy agrees with the best known expansion of the quantum energy, namely  
$$
E_N = c_1 Z^{7/3} + c_2 Z^{2} + c_3 Z^{5/3} + o(Z^{5/3}) = E_N^{\rm HF}+o(Z^{5/3}), \quad \forall N\in [Z,N_c(Z)]
$$
where the first equality was established previously by Fefferman and Seco \cite{FS90b}.

Using the concept of one-body density matrices, we can rewrite the proof of  Theorem \ref{thm:LS77b} as follows. 

\begin{proof}[A shorter proof of Theorem \ref{thm:LS77b}] Consider the variational problem
\begin{equation}\label{eq:EHF-rel-rel} \widetilde{E}^{\rm HF}_{\le}(N) = \inf_{\substack {0\le \gamma \le 1\\ {\rm Tr} \gamma \le N} } \mathcal{E}^{\rm HF}(\gamma).
\end{equation}
The existence of a minimizer $\gamma$ of \eqref{eq:EHF-rel-rel} can be proved by  the direct method in the calculus of variations (the set of states is stable under the weak-*  convergence in trace class). For every $0\le \widetilde{\gamma}\le 1$ with ${\rm Tr}\widetilde{\gamma} \le N$, the function $t\mapsto \mathcal{E}^{\rm HF}( (1-t)  \gamma + t\widetilde \gamma)$ with $t\in [0,1]$ attains its minimum at $t=0$,  and hence
$$
0 \le \frac{d}{dt} \mathcal{E}^{\rm HF}( (1-t)  \gamma + t\widetilde \gamma)|_{t=0}= {\rm Tr} (h (\widetilde{\gamma}-\gamma))
$$
where $h=-\Delta - Z|x|^{-1}+\rho_\gamma*|x|^{-1}-K_\gamma$ with $K_\gamma(x,y)=\gamma(x,y)/|x-y|$. 

If $N<Z+1$, then $h$ has infinitely many negative eigenvalues $\eps_1\le \eps_2\le ...<0$ (this was already explained in the previous proof).  Consequently, we can choose $\widetilde \gamma$ to be the projection on the lowest $N$ eigenfunctions of $h$, so that
$${\rm Tr}(h\gamma)\le {\rm Tr} (h \widetilde \gamma)=\sum_{i=1}^N \eps_i.$$
Since $0\le \gamma \le 1$ and ${\rm Tr}\gamma\le N$, this implies that ${\rm Tr}\gamma=N$. Thus $\gamma$ is also a minimizer for $E^{\rm HF}_{\le}(N)$, and by Theorem \ref{thm:L81b} the existence of minimizers of $E^{\rm HF}(N)$ follows. 
\end{proof}


Now let us turn to the nonexistence in the HF theory. All non-asymptotic bounds in Section \ref{sec:2} extend to the HF case without significant modifications; in particular Lieb's Theorem \ref{thm:L84} ensures that $N_c(Z)<2Z+1$. However, the conjecture bound $N_c(Z)\le Z +C$ had been open for a long time until solved by Solovej in 2003.  

\begin{theorem}[\cite{S03}] $N_c(Z)\le Z+C$ in the Hartree--Fock theory. 
\end{theorem}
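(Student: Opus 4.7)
The plan is to adapt the Benguria--Lieb strategy from the TFW theory (as in \cite{BL85}) to the Hartree--Fock setting, following Solovej's iterative comparison with the Thomas--Fermi theory. Let $\gamma = \sum_{i=1}^{N_c} |u_i\rangle\langle u_i|$ denote the HF minimizer (which exists for $N \le N_c$ by Theorem \ref{thm:LS77b}), with mean-field potential $\Phi(x) = Z|x|^{-1} - \rho_\gamma * |x|^{-1}$, and let $Q(Z) = N_c(Z) - Z$ be the excess charge I wish to bound uniformly in $Z$. The goal is to show that $-|x|\Phi(x) \to Q(Z)$ at infinity with $Q(Z) \le C$.

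First, I would invoke the Lieb--Simon semiclassical estimates for the TF theory, which guarantee that on the bulk length scale $Z^{-1/3}$ the rescaled density $Z^{-2}\rho_\gamma(Z^{-1/3}x)$ is close to the TF minimizer. Since the TF minimizer has total mass exactly $Z$, this already gives $\int_{|x|\le R_0 Z^{-1/3}}\rho_\gamma \ge Z - o(Z)$ for a suitable constant $R_0$, so any excess charge must sit in a thin outer region. By itself, however, this bulk estimate is far too weak to control $Q(Z)$, and propagating it outward is where the main work lies.

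The heart of the argument is an \emph{iterative comparison with screened TFW-type atomic models on a sequence of annular shells}. I would construct radii $r_0 < r_1 < r_2 < \cdots$ extending from the TF bulk scale $Z^{-1/3}$ up to a $Z$-independent outer scale $r_*$, and on each annulus $A_k = \{r_k \le |x| \le r_{k+1}\}$ approximate the restriction of the HF problem by a TFW model with effective nuclear charge $Z_k$ defined through $|x|\Phi(x) = Z_k$ at $|x| = r_k$. For the upper bound on the HF density I would use Lieb's variational principle (Theorem \ref{thm:L81b}) to compare with trial density matrices built from the TFW minimizer; for the matching lower bound I would use the Lieb--Oxford inequality \eqref{eq:LO} to replace the exchange contribution by a local $\rho^{4/3}$ error, together with Lieb--Thirring kinetic estimates to control that error. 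The universality of the TFW theory (Theorem \ref{thm:S90}) then feeds back into the next annulus, giving a contracting recursion $|Z_{k+1} - Z_k| \le \eta_k$ with $\sum_k \eta_k < \infty$ uniformly in $Z$.

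Once the comparison has been propagated out to $r_*$ one obtains $Z_k \le Z + C$ for all $k$, so the outermost electrons see an effectively bounded nuclear charge. At that point I would conclude exactly in the spirit of \cite{BL85}: construct an HF-adapted subharmonic function of the form $p(x) = \bigl(c\,\rho_\gamma(x)^{2/3} + [\Phi(x)]_+^2\bigr)^{1/2}$, verify that $|x|p(x)$ is decreasing on $\{|x| > r_*\}$ (using $\Delta|x|^{-1} = 4\pi\delta_0$ and the HF equation), and evaluate at $|x| = r_*$ to get $Q(Z) = \lim_{|x|\to\infty}(-|x|\Phi(x)) \le r_* p(r_*) \le C$. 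The \textbf{main obstacle}, and the genuinely new ingredient of Solovej's work, is the iterative step: the nonlocal exchange operator $K_\gamma$ and the orbital (rather than pure density) structure of HF preclude a direct pointwise maximum-principle argument, so the TF/TFW comparison must be carried out in an integrated semiclassical sense and with errors that do \emph{not} accumulate geometrically over the logarithmically many scales separating $Z^{-1/3}$ from $O(1)$.
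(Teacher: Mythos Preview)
Your proposal captures the right \emph{architecture}---an iterative comparison with a simpler atomic model, propagated from the TF bulk scale out to an $O(1)$ scale, followed by a Benguria--Lieb type endgame---but two of the specific ingredients differ from Solovej's actual proof as described in the paper, and one of them is a genuine gap.

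First, the comparison model is Thomas--Fermi, not TFW. The central technical estimate in \cite{S03} (see \eqref{eq:HF-TF} in the paper) is
\[
\bigl|\Phi_Z^{\rm HF}(x) - \Phi_Z^{\rm TF}(x)\bigr| \le C\bigl(1+|x|^{-4+\varepsilon}\bigr),
\]
where $\Phi_Z^{\rm HF}(x)=Z|x|^{-1}-\int_{|y|\le |x|}\rho^{\rm HF}(y)|x-y|^{-1}\,dy$ is the \emph{screened} potential (integrating only over $|y|\le|x|$, not the full convolution you wrote). The iteration exploits the Sommerfeld asymptotics $\Phi_Z^{\rm TF}(x)\sim c|x|^{-4}$, valid uniformly in $Z$, rather than TFW universality. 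This is a difference of route rather than an error, but it matters for how the inside/outside splitting is done: the paper emphasizes that this splitting relies on the unconstrained formulation \eqref{eq:HF-gamma-0-1} and Lieb's variational principle (Theorem~\ref{thm:L81b}).

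Second, and more seriously, your endgame is problematic. You propose to conclude via a subharmonic function $p(x)=\bigl(c\rho_\gamma(x)^{2/3}+[\Phi(x)]_+^2\bigr)^{1/2}$ as in the TFW proof of \cite{BL85}. That argument works in TFW precisely because $u$ satisfies a single nonlinear elliptic PDE, so $\Delta p$ can be computed pointwise. In HF the density $\rho_\gamma=\sum_i|u_i|^2$ does not satisfy any such equation; the orbitals solve coupled equations involving the nonlocal exchange operator $K_\gamma$, and there is no reason for $p$ to be subharmonic. Solovej's actual endgame is instead the Benguria--Lieb \emph{multiplication} argument (as in Theorem~\ref{thm:L84}), applied only to the $m$ outermost electrons: once the bootstrap shows that the effective charge $Z'=Z-(N-m)$ seen from outside radius $r_*$ is $O(1)$, multiplying the equation by $|x|$ gives $m<2Z'+1=O(1)$, hence $N_c-Z=O(1)$. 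You should replace the maximum-principle step by this exterior Benguria--Lieb argument.
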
 

The proof in \cite{S03} is based on a clever use of the Benguria--Lieb method, but only for outermost electrons. More precisely, assuming that we have an efficient method to separate $m$ outermost electrons from the rest of the system, which is of the effective charge $Z'= Z - (N-m)$, then the  Benguria--Lieb method gives $m < 2Z' + 1$. Since $Z'$ is smaller than $Z$, the loss of the factor 2 becomes less serious. Solovej's idea is to propose a rigorous bootstrap argument to bring $Z'$ down to order 1 after finitely many steps. On the technical level, the key tool in \cite{S03} is a rigorous comparison between the HF potential 
$$
\Phi^{\rm HF}_Z(x) = \frac{Z}{|x|} - \int_{|y|\le |x|} \frac{\rho^{\rm HF}(y)}{|x-y|} dy.  
$$ 
and the corresponding TF potential $\Phi^{\rm TF}_Z(x)$, namely 
\begin{align} \label{eq:HF-TF}
|\Phi_Z^{\rm HF}(x)  - \Phi^{\rm TF}_Z(x)| \le C (1+ |x|^{-4+\varepsilon}), \quad \forall x\ne 0
\end{align}
for some universal constants $C>0$, $\varepsilon>0$. Note that $\Phi^{\rm TF}_Z(x)$ behaves as $|x|^{-4}$ for $|x|\gg Z^{-1/3}$. The significance of \eqref{eq:HF-TF} is that the TF theory captures correctly the HF theory, at least in terms of the potentials, up to a length scale of order 1. This is highly remarkable since due to its semiclassical nature the TF theory is supposed to be good only for $|x|\sim Z^{-1/3}$. This property suggests that the universality in Conjecture \ref{conj-uni} should hold also in the HF theory, but a rigorous proof is still missing.

In Solovej's strategy,  the main conceptual difficulty is the splitting of ``problem from outside" from the ``problem from inside". In the HF theory, this can be done  using the unconstrained formulation \eqref{eq:HF-gamma-0-1} and Lieb's Theorem \ref{thm:L81b}. Unfortunately, this technique is not available on the level of the many-body Schr\"odinger theory. It seems that a completely new many-body localization technique which be needed to solve the ionization conjecture.

\section{Liquid drop model}

In 1928, Gamow proposed a theory to describe a nucleus using only the number of nucleons (protons and neutrons) and the electrostatic energy of protons. This problem has gained renewed interest from many mathematicians \cite{CMT17}. To be precise, the liquid drop model is associated to the minimization problem 
\begin{align} \label{eq:EG}
E^{\rm G}(m)= \inf_{|\Omega|=m} \mathcal{E}(\Omega) 
\end{align}
where
$$
\mathcal{E}(\Omega)  =  {\rm Per}(\Omega) + D(\Omega) = {\rm Per}(\Omega) + \frac{1}{2}\int_{\Omega} \int_{\Omega}\frac{1}{|x-y|} dx dy.
$$
Here $\Omega\subset \R^3$ stands for the nucleus and ${\rm Per}(\Omega)$ is the perimeter in the sense of De Giorgi (which is the surface area of $\Omega$ when the boundary is smooth). 

It is generally assumed in the physics literature that if a minimizer exists, then it is a ball. Consequently, by comparing the energy of a ball of volume $m$ with the energy of a union of two balls of volume $m/2$, one expects the nonexistence of minimizers if $m>m_*$ with
$$m_*=5\frac{2-2^{2/3}}{2^{2/3}-1}\approx 3.518.$$

\begin{conjecture}[\cite{CP11}] \label{conj:li}$E^{\rm G}(m)$ has a minimizer if and only if $m\le m_*$.  Moreover, if a minimizer exists, then it is a ball. 
\end{conjecture}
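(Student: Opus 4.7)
The plan is to address this conjecture by combining variational compactness at small mass with a ball-rigidity result, glued together by the subadditivity of $E^{\rm G}$. The structural input I would rely on throughout is that translating two disjoint competitors to infinity yields $E^{\rm G}(m_1+m_2)\le E^{\rm G}(m_1)+E^{\rm G}(m_2)$, with equality signalling that a minimizing sequence splits and escapes to infinity. The whole problem thus reduces to quantifying this dichotomy and identifying the exact mass at which it first becomes saturated.

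For existence when $m\le m_*$, I would take a minimizing sequence $\{\Omega_n\}$ of sets of finite perimeter with $|\Omega_n|=m$, translate so that mass does not drift, and apply a concentration-compactness argument \`a la Lions. The goal is to rule out the dichotomy case $E^{\rm G}(m)=E^{\rm G}(m_1)+E^{\rm G}(m-m_1)$ with $m_1\in(0,m)$. By induction on $m$ and using that below the threshold a single ball beats any decomposition, one needs the strict inequality $\mathcal{E}(B_m)<\mathcal{E}(B_{m_1})+\mathcal{E}(B_{m-m_1})$ for $m\le m_*$, which is a short calculation based on the explicit scalings $\mathcal{E}(B_m)=c_1 m^{2/3}+c_2 m^{5/3}$ of perimeter and Coulomb energies. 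Once compactness is secured, lower semicontinuity of the perimeter in $BV$ and weak continuity of $D$ on $L^1$-converging sequences produce a minimizer $\Omega^*$, and the standard regularity theory of almost-minimizers of the perimeter shows that $\partial\Omega^*$ is $C^{1,\alpha}$ away from a small singular set.

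To prove that $\Omega^*$ must be a ball, I would invoke the quantitative isoperimetric inequality of Fusco--Maggi--Pratelli, $\mathrm{Per}(\Omega)-\mathrm{Per}(B_{|\Omega|})\ge c\,A(\Omega)^2$ in terms of the Fraenkel asymmetry $A$, and combine it with the fact that the Coulomb correction is subordinate to the perimeter at small $m$. A perturbative computation of the second variation of $\mathcal{E}$ at a critical ball gives that $B_m$ is a strict local minimizer for every $m>0$; a selection-principle argument of Cicalese--Leonardi type should then upgrade local ball-rigidity to global ball-rigidity on a maximal interval $[0,m_c]$. Combined with the strict subadditivity $2\mathcal{E}(B_{m/2})<\mathcal{E}(B_m)$ for $m>m_*$, this yields nonexistence above $m_*$ and at the same time forces the identification $m_c=m_*$.

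The hard part, and the reason the conjecture remains open, is precisely the extension of ball-rigidity from the perturbative small-$m$ regime all the way up to $m_*$. For small $m$ the perimeter dominates and the quantitative isoperimetric inequality suffices, but as $m$ approaches $m_*$ the Coulomb term becomes comparable to the perimeter, and non-spherical critical points of $\mathcal{E}$ cannot be excluded by second-variation arguments alone. A genuinely global rigidity mechanism, perhaps a moving-plane or calibration argument adapted to the long-range nonlocal interaction, would be needed. Without such a new idea, one can at best prove existence together with ball-rigidity on a strict subinterval of $[0,m_*]$, which is indeed the current state of the art in the works of Kn\"upfer--Muratov, Julin, and their followers.
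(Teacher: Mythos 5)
The statement you are asked about is an open conjecture, and the paper does not prove it: what is actually known is existence for all $0<m\le m_*$ (Theorem~\ref{thm:FL21}, due to Frank--Nam \cite{FN21}), nonexistence only for $m>8$ \cite{FKN16}, and ball-rigidity only for sufficiently small $m$ \cite{KM14}. Your proposal, which ends by conceding that the key rigidity step is missing, is therefore a research plan rather than a proof; but beyond that honest admission there are two concrete logical problems. First, in the existence part you reduce the strict binding inequality \eqref{eq:FL-strict-ineq} to the ball-energy comparison $\mathcal{E}(B_m)<\mathcal{E}(B_{m_1})+\mathcal{E}(B_{m-m_1})$. That reduction is circular: comparing energies of balls says nothing about $E^{\rm G}(m_1)+E^{\rm G}(m-m_1)$ unless you already know the minimizers (or near-minimizers) at masses $m_1$ and $m-m_1$ are balls, which is exactly what is unknown. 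The paper's route avoids this entirely: for an arbitrary competitor $\Omega$ with $|\Omega|=m_1$ one rescales, $\Omega\mapsto s^{-1/3}\Omega$ with $s=m_1/m$, uses the different homogeneities of perimeter and Coulomb energy together with the isoperimetric inequality to get $E^{\rm G}(m_1)\ge s^{5/3}E^{\rm G}(m)+s^{2/3}(1-s)m^{2/3}{\rm Per}\,B_1$, and the threshold $m_*$ then falls out of minimizing an explicit function $f(s)$; compactness is supplied by the Frank--Lieb lemma (Theorem~\ref{thm:FL15}) rather than by a from-scratch concentration-compactness analysis.

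Second, and more seriously, your nonexistence step is a non sequitur: the inequality $2\mathcal{E}(B_{m/2})<\mathcal{E}(B_m)$ for $m>m_*$ only shows that \emph{the ball} is not a minimizer above $m_*$; it does not exclude a non-spherical minimizer. This is precisely the objection the paper raises against the heuristic defining $m_*$ --- balls minimize perimeter but \emph{maximize} the Coulomb self-energy among sets of given volume, so there is no a priori reason a minimizer should be a ball, and nonexistence for $m_*<m\le 8$ remains open. Relatedly, your logical architecture runs in the wrong direction compared with what is actually provable: in \cite{FN21} it is shown that \emph{if} nonexistence above $m_*$ holds, \emph{then} the minimizer below $m_*$ is a unique ball --- i.e.\ ball-rigidity is a consequence of nonexistence, whereas you propose to derive nonexistence from ball-rigidity. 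Since the rigidity up to $m_*$ is exactly the step you cannot supply, the chain does not close, and the conjecture remains open.
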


The question here is somewhat similar to that of the ionization conjecture of atoms. As we will see, some ideas from the liquid drop model turn out to be helpful for the ionization problem.

 On the mathematical side, among all measurable sets of a given volume, although a ball minimizes the perimeter (by the isoperimetric inequality \cite{DG58}), it does maximize the Coulomb self-interaction energy (by the Riesz rearrangement inequality \cite{R30}). Therefore, it is unclear why balls should be the minimizers. Consequently, the argument predicting the threshold $m_*$ is questionable. 
 
 In 2014,  Kn\"upfer and Muratov \cite{KM14} proved that if $m>0$ is sufficiently small, then $E^{\rm G}(m)$ has a unique minimizer which is a ball. The proof in \cite{KM14} uses deep techniques in geometric measure theory, including a quantitative
isoperimetric inequality of Fusco, Maggi, and Pratelli \cite{FMP08}. 

On the other hand, it is desirable to develop a non-perturbative approach to handle larger masses. In 2015, Frank and Lieb proved the following result, which serves as a basic tool to analyze the existence question for all $m>0$. 

\begin{theorem}[\cite{FL15}] \label{thm:FL15} If for a given $m>0$, one has the strict binding inequality
\begin{align}\label{eq:FL-strict-ineq}
E^{\rm G}(m)< E^{\rm G}(m-m') + E^{\rm G}(m'), \quad \forall 0<m'<m,
\end{align}
then $E^{\rm G}(m)$ has a minimizer. 
\end{theorem}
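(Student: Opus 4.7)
The plan is to run a concentration-compactness analysis \`a la Lions for a minimizing sequence $\{\Omega_n\}$ with $|\Omega_n|=m$ and $\mathcal{E}(\Omega_n)\to E^{\rm G}(m)$. Since $\text{Per}(\Omega_n)$ is bounded, the indicators $\1_{\Omega_n}$ form a bounded family in $BV(\R^3)$, hence are locally precompact in $L^1$. Up to translations and extraction, one of three alternatives must hold for these densities: \emph{compactness}, in which $\1_{\Omega_n}\to\1_\Omega$ in $L^1(\R^3)$ with $|\Omega|=m$; \emph{vanishing}, in which $\sup_{y\in\R^3}|\Omega_n\cap B(y,R)|\to 0$ for every fixed $R$; or \emph{dichotomy}, in which there is some $0<m'<m$ such that, for each $\varepsilon>0$, pieces $\Omega_n^1,\Omega_n^2\subset\Omega_n$ can be found with $\text{dist}(\Omega_n^1,\Omega_n^2)\to\infty$, $\bigl||\Omega_n^i|-m_i\bigr|<\varepsilon$ (with $m_1=m'$, $m_2=m-m'$), and $|\Omega_n\setminus(\Omega_n^1\cup\Omega_n^2)|<\varepsilon$.

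Vanishing is ruled out by mass conservation combined with the relative isoperimetric inequality on unit balls: uniform smallness on every ball forces the total volume to vanish, contradicting $|\Omega_n|=m>0$. Ruling out dichotomy is the heart of the argument and is where the hypothesis \eqref{eq:FL-strict-ineq} enters. The key technical step is a clean cut: applying a coarea/Chebyshev argument to an annular region separating $\Omega_n^1$ and $\Omega_n^2$, one selects a slicing radius so that the induced cross section has arbitrarily small surface area, giving $\text{Per}(\Omega_n)\ge\text{Per}(\Omega_n^1)+\text{Per}(\Omega_n^2)-\varepsilon$. For the Coulomb energy, the additivity $D(\Omega_n)=D(\Omega_n^1)+D(\Omega_n^2)+D(\Omega_n^1,\Omega_n^2)+O(\varepsilon)$ holds, and the cross term $D(\Omega_n^1,\Omega_n^2)$ tends to $0$ because the separation diverges. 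After a small volume adjustment to bring the two pieces to exact masses $m'$ and $m-m'$ (costing only $O(\varepsilon)$ in energy), I obtain $E^{\rm G}(m)\ge E^{\rm G}(m')+E^{\rm G}(m-m')-O(\varepsilon)$, contradicting \eqref{eq:FL-strict-ineq} once $\varepsilon$ is small enough.

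Hence compactness must hold and, after translation, $\1_{\Omega_n}\to\1_\Omega$ in $L^1(\R^3)$ with $|\Omega|=m$. Lower semicontinuity of the perimeter under $L^1$ convergence gives $\text{Per}(\Omega)\le\liminf\text{Per}(\Omega_n)$, while the Coulomb functional is continuous under such convergence (the contribution outside a large ball is uniformly small thanks to the compactness regime together with the Hardy--Littlewood--Sobolev type bound $D(A)\le C|A|^{5/3}$). Therefore $\mathcal{E}(\Omega)\le\liminf\mathcal{E}(\Omega_n)=E^{\rm G}(m)$, and $\Omega$ is a minimizer. The principal obstacle throughout is the dichotomy step: producing a genuine partition of $\Omega_n$ into two well-separated pieces whose perimeters essentially add up requires a careful coarea selection of the cutting radius, and the small leftover mass must be redistributed without spoiling either the perimeter or the Coulomb bookkeeping; once this is done, the strict binding inequality \eqref{eq:FL-strict-ineq} makes the contradiction immediate.
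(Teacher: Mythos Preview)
Your concentration--compactness argument is correct, and the individual steps (ruling out vanishing via the relative isoperimetric inequality and the perimeter bound, cutting in the dichotomy case via coarea/Chebyshev, adjusting masses, and passing to the limit via lower semicontinuity of the perimeter and $L^{6/5}$--continuity of $D$) are all sound. Note, however, that the paper does not reproduce a proof; it only records that the argument in \cite{FL15} proceeds by Lieb's \emph{method of the missing mass} from \cite{L83}, which is organized a bit differently from the Lions trichotomy you invoke. In that approach one does not set up compactness/vanishing/dichotomy explicitly: instead one uses a concentration lemma to translate the minimizing sequence so that a nontrivial $L^1$--weak limit $\Omega$ with $|\Omega|=m_1\in(0,m]$ exists, and then shows directly that the escaping part of the sequence carries energy at least $E^{\rm G}(m-m_1)$, yielding $E^{\rm G}(m)\ge \mathcal E(\Omega)+E^{\rm G}(m-m_1)\ge E^{\rm G}(m_1)+E^{\rm G}(m-m_1)$; the strict binding inequality then forces $m_1=m$. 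The two routes share the same core mechanism (strict subadditivity forbids loss of mass), but the missing--mass version avoids the separate vanishing/dichotomy bookkeeping and the delicate ``clean cut'' you describe, at the price of a slightly more careful analysis of what the weak limit captures. Either way, the hypothesis \eqref{eq:FL-strict-ineq} enters at exactly the same point.
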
 

Theorem \ref{thm:FL15} can be interpreted in the same spirit of the strict binding inequality $E(N)<E(N-1)$ in the context of the ionization problem. The proof in  \cite{FL15} is based on the ``method of the missing mass", which goes back to Lieb's 1983 work on sharp Hardy--Littlewood--Sobolev and related inequalities \cite{L83}. 

Very recently, Frank and Nam \cite{FN21}  used Theorem \ref{thm:FL15} to establish the {\em optimal existence} in Conjecture \ref{conj:li}. 

\begin{theorem}[\cite{FN21}] \label{thm:FL21} $E^{\rm G}(m)$ has a minimizer for every $0<m\le m_*$.
\end{theorem}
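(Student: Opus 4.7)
By Theorem \ref{thm:FL15}, it suffices to establish the strict binding inequality
\[
E^{\rm G}(m) \;<\; E^{\rm G}(m') + E^{\rm G}(m - m'), \qquad 0 < m' < m,
\]
for every $m \in (0, m_*]$. The basic ingredient is the ball test $E^{\rm G}(m) \le E^{\rm ball}(m) := A m^{2/3} + B m^{5/3}$ with explicit $A, B > 0$, combined with the defining property of $m_*$: an elementary calculus check shows that $s \mapsto E^{\rm ball}(s) + E^{\rm ball}(m-s)$ on $(0,m)$ attains its unique minimum at $s = m/2$ with value $2 E^{\rm ball}(m/2)$, and by definition $E^{\rm ball}(m_*) = 2 E^{\rm ball}(m_*/2)$. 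Hence at the ball level, $E^{\rm ball}(m) < E^{\rm ball}(s) + E^{\rm ball}(m-s)$ holds strictly for every $m < m_*$ and every $s \in (0, m)$, while at $m = m_*$ the inequality degenerates to equality only at $s = m_*/2$.

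\textbf{Bootstrap in mass.} Knüpfer--Muratov \cite{KM14} provide existence and ball-rigidity of minimizers on an initial interval $(0, m_0]$, so that $E^{\rm G}(m) = E^{\rm ball}(m)$ there. Let $M \in [m_0, m_*]$ be the supremum of masses such that $E^{\rm G}(\tilde m)$ admits a minimizer throughout $(0, M]$ and moreover $E^{\rm G}(\cdot) = E^{\rm ball}(\cdot)$ on that interval. I would argue $M \ge m_*$ by a continuity-in-mass argument: assuming $M < m_*$ and taking $m$ slightly above $M$ with $m < m_*$, for any split $s \in (0, m)$ with $s, m - s \le M$ (the remaining ``large--large'' splits arise only once $m > 2M$ and can be handled by iterating the construction), combine $E^{\rm G}(m) \le E^{\rm ball}(m)$, the strict ball-level subadditivity from the previous paragraph, and the identity $E^{\rm G}(\tilde m) = E^{\rm ball}(\tilde m)$ for $\tilde m \le M$, to obtain strict binding at $m$. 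Theorem \ref{thm:FL15} then produces a minimizer at $m$; a rigidity step---a quantitative isoperimetric inequality of Fusco--Maggi--Pratelli type together with a perturbative stability argument---identifies this minimizer as a ball, extending $E^{\rm G} = E^{\rm ball}$ past $M$ and contradicting maximality.

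\textbf{The endpoint $m = m_*$.} At the threshold, the ball-level bound becomes an equality for the split $s = m_*/2$, so strict binding need not hold there. However, the bootstrap already gives $E^{\rm G}(m_*/2) = E^{\rm ball}(m_*/2)$ and therefore
\[
E^{\rm G}(m_*) \;\le\; E^{\rm ball}(m_*) \;=\; 2\,E^{\rm ball}(m_*/2) \;=\; 2\,E^{\rm G}(m_*/2).
\]
If this chain of inequalities is strict on the left, Theorem \ref{thm:FL15} concludes; if it is an equality throughout, then $E^{\rm G}(m_*) = E^{\rm ball}(m_*)$, so the ball of volume $m_*$ itself attains the infimum and is a minimizer. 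Either way, existence at $m_*$ follows.

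\textbf{Main obstacle.} The genuinely hard step is the ball-rigidity propagation inside the bootstrap: Theorem \ref{thm:FL15} yields existence but no geometric information about the minimizer, so an independent rigidity input is required at each stage to identify the produced minimizer as a ball and thus lock in the identity $E^{\rm G} = E^{\rm ball}$. Pushing this rigidity all the way up to the sharp threshold $m_*$---well beyond the perturbative small-mass regime handled by Knüpfer--Muratov---is what makes Theorem \ref{thm:FL21} substantially deeper than a direct application of Theorem \ref{thm:FL15}.
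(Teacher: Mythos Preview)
Your approach has a genuine gap: the ``rigidity step'' in your bootstrap is precisely the open part of Conjecture~\ref{conj:li}. You need to propagate the identity $E^{\rm G}=E^{\rm ball}$ from the Kn\"upfer--Muratov small-mass regime all the way up to $m_*$, and for this you must show that every minimizer produced by Theorem~\ref{thm:FL15} at mass $m$ is a ball. But no such non-perturbative rigidity result is known; the quantitative isoperimetric inequality of Fusco--Maggi--Pratelli controls shapes that are \emph{close} to balls, not shapes that merely have the same energy as balls, and there is no mechanism in your argument that keeps the minimizer in the perturbative regime as $m$ increases. Indeed, the paper notes just after Theorem~\ref{thm:FL21} that ball-uniqueness for $m<m_*$ is proved in \cite{FN21} only \emph{conditionally} on the nonexistence half of Conjecture~\ref{conj:li}. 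Without rigidity, the chain $E^{\rm G}(s)+E^{\rm G}(m-s)=E^{\rm ball}(s)+E^{\rm ball}(m-s)>E^{\rm ball}(m)\ge E^{\rm G}(m)$ collapses at the first equality, since the trivial bound $E^{\rm G}\le E^{\rm ball}$ goes the wrong way.

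The paper avoids this entirely by a direct argument that never identifies any minimizer. The key observation is a \emph{lower} bound on $E^{\rm G}(m_1)$ in terms of $E^{\rm G}(m)$: for any competitor $\Omega$ with $|\Omega|=m_1$, the rescaled set $s^{-1/3}\Omega$ (with $s=m_1/m$) has volume $m$, so $E^{\rm G}(m)\le s^{-2/3}\,{\rm Per}\,\Omega+s^{-5/3}D(\Omega)$; rearranging and applying only the isoperimetric inequality ${\rm Per}\,\Omega\ge s^{2/3}m^{2/3}\,{\rm Per}\,B_1$ yields
\[
E^{\rm G}(m_1)\ \ge\ s^{5/3}E^{\rm G}(m)+s^{2/3}(1-s)\,m^{2/3}\,{\rm Per}\,B_1.
\]
Adding the analogous bound for $E^{\rm G}(m-m_1)$ and using only the trivial upper bound $E^{\rm G}(m)\le \mathcal E(m^{1/3}B_1)$ gives strict binding for all $m<m_*$ after an elementary computation with the function $f(s)=\frac{s^{2/3}+(1-s)^{2/3}-1}{1-s^{5/3}-(1-s)^{5/3}}$. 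The endpoint $m=m_*$ is then handled by the continuity argument of \cite[Theorem~3.4]{FL15}. No bootstrap, no rigidity, no information about the shape of any minimizer is used. (As a minor aside, your claim that $s\mapsto E^{\rm ball}(s)+E^{\rm ball}(m-s)$ has its minimum at $s=m/2$ is also incorrect: since $E^{\rm ball}(0)=0$, the infimum over $(0,m)$ is $E^{\rm ball}(m)$, attained only in the limit $s\to 0^+$.)
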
 

\begin{proof} Let us prove the strict binding inequality \eqref{eq:FL-strict-ineq} for all $0<m<m_*$. Let $0<m_1<m$ and  $s= m_1/m \in (0,1)$. As a first step, we observe that if $|\Omega|=m_1$, then $|s^{-1/3} \Omega|=m$, and hence by the variational principle
\begin{align*}
E^{\rm G}(m) \le \mathcal E (s^{-1/3} \Omega) = s^{-2/3}  {\rm Per} \Omega + s^{-5/3} D (\Omega) = s^{-5/3}  \mathcal E(\Omega) - s^{-5/3}( 1  - s) {\rm Per} \ \Omega.
\end{align*}
On the other hand, by the isoperimetric inequality
$$
{\rm Per} \ \Omega \ge m_1^{2/3} {\rm Per} B_1 = s^{2/3}  m^{2/3} {\rm Per} B_1
$$
where $B_1$ is the ball of volume 1 in $\R^3$ . Inserting this in the above inequality, optimizing over $\Omega$, and rearranging terms we find that
$$
E(m_1) \ge s^{5/3} E(m) +   s^{2/3} (1-s)  m^{2/3} {\rm Per} B_1.
$$
Similarly,
$$
E(m-m_1) \ge (1-s)^{5/3} E(m) +  (1-s)^{2/3}  s  m^{2/3} {\rm Per} B_1.
$$
Therefore,
 \begin{align*} 
& E(m_1)+  E(m-m_1) - E(m)  \\
& \ge (  s^{5/3} +  (1-s)^{5/3} - 1) E(m) + \Big( s^{2/3} (1-s) + (1-s)^{2/3}  s \Big) m^{2/3} {\rm Per} \ B_1.
 \end{align*}
Using $s^{5/3} +  (1-s)^{5/3} - 1<0$ and $E(m)\le  \mathcal E(m^{1/3}B_1)$ we find that 
  \begin{align*} 
& E(m_1)+  E(m-m_1)- E(m)  \nonumber \\
&\ge  \Big(  s^{5/3} + (1-s)^{5/3} - 1  \Big)  \Big( D(B_1)  m  -  f(s) {\rm Per} \ B_1   \Big) m^{2/3}  
 \end{align*}
 with 
\begin{equation*} 
 f(s):= \frac{s^{2/3} +  (1-s)^{2/3} -1  }{1-s^{5/3} - (1-s)^{5/3} }. 
\end{equation*}
Therefore, $E(m_1)+  E(m-m_1)- E(m)>0$  if  
\begin{equation} \label{eq:gs>m}
m< \frac{{\rm Per} \ B_1}{D(B_1)} \min_{s\in [0,1]} f(s).
\end{equation}
A direct computation shows that the right hand side of \eqref{eq:gs>m} coincides with $m_*$. 

Thus the existence of minimizers for every $0<m < m_*$ follows immediately from Theorem \ref{thm:FL15}. The existence can be extended to $m=m^*$ by a continuity argument from \cite[Theorem 3.4]{FL15}. 
 \end{proof}

In \cite{FN21}, we also proved that if the nonexistence in Conjecture \ref{conj:li} holds,  then the above proof can be refined to show that the minimizer for $m<m_*$ is unique and it is a ball. Thus only the (optimal) nonexistence part is missing.  

There are some partial nonexistence results. Kn\"upfer and Muratov \cite{KM14} proved that $E^{\rm G}(m)$ has no minimizer if  $m$ is sufficiently large. The same result was proved by Lu and Otto \cite{LO14} by a different method.  This result is comparable to the bound $N_c(Z)<\infty$ in the ionization problem. In 2016, Frank,  Killip and Nam \cite{FKN16} proved the nonexistence for all $m>8$, which is somewhat comparable to Lieb's bound $N_c(Z)<2Z+1$ in the ionization problem. 

 \begin{proof}[Proof of nonexistence for $m>8$ \cite{FKN16}.] Assume that $E^{\rm G}(m)$ has a minimizer $\Omega$. We split $\Omega$ into two parts, $
\Omega= \Omega^+ \cup \Omega^- 
$, by  a hyperplane $H$ and then move $\Omega^-$  to infinity by translations. Since $\Omega$ is a minimizer, we obtain the binding inequality 
 \begin{align*}
{\rm Per}(\Omega) + \int_{\Omega} \int_{\Omega}\frac{1}{|x-y|} dx dy &\le {\rm Per}(\Omega^+) + \int_{\Omega^+} \int_{\Omega^+}\frac{1}{|x-y|} dx dy \\
&\quad +{\rm Per}(\Omega^-) + \int_{\Omega^-} \int_{\Omega^-}\frac{1}{|x-y|} dx dy
\end{align*}
which is equivalent to
$$
2 \mathcal{H}^2 (\Omega\cap H) \ge  \int_{\Omega^+} \int_{\Omega^-}\frac{1}{|x-y|} dx dy.  
$$
Here $\mathcal{H}^2$ is the two-dimensional Hausdorff measure. Next, we parameterize: 
$$
H= H_{\nu,\ell} = \{ x\in \mathbb{R}^3: \, x \cdot \nu =\ell\}
$$
with $\nu\in S^2$, $\ell\in \mathbb{R}$. The above inequality becomes
$$
2 \mathcal{H}^2 (\Omega\cap H_{\nu,\ell}) \ge   \int_{\Omega} \int_{\Omega}\frac{\chi(\nu\cdot x > \ell > \nu \cdot y)}{|x-y|} dx dy.
$$
Integrating over $\ell\in \mathbb{R}$ and using Fubini's theorem we get
$$
2 |\Omega|  \ge   \int_{\Omega} \int_{\Omega}\frac{[\nu\cdot (x-y)]_+}{|x-y|} dx dy.
$$
Finally, averaging over $\nu \in S^2$ and using 
$$
\int [\nu \cdot z]_{+} \frac{d\nu}{4\pi} = \frac{|z|}{2} \int_0^{\pi/2} \cos \theta \sin \theta d\theta =\frac{|z|}{4} 
$$
with $z=x-y$, we conclude that $2|\Omega|\ge \frac 1 4 |\Omega|^2$, namely $|\Omega| \le 8$.
\end{proof}

It is interesting that the above cutting argument can be used to replace the Benguria--Lieb argument in the ionization problem in various situations. In 2018, Frank, Nam and Van Den Bosch \cite{FNV18} used this technique to establish the ionization conjecture in the Thomas--Fermi--Dirac-von Weis\"acker theory. In this model, the standard  Benguria--Lieb method does not apply due to Dirac's correction term to the exchange energy, but a modification of the above cutting argument gives an efficient control of number of particles ``outside" in terms of particles ``inside", thus enabling us to employ Solovej's bootstrap argument as in the HF theory. 

In  \cite{FNV18b}, we extended the nonexistence $N_c(Z)\le Z+C$ to the M\"uller density-matrix-functional theory. In this model, the existence for $N\le Z$ was proved by Frank, Lieb, Seiringer and Siedentop in 2007 \cite{FLSS07} using a relaxation method in the spirit of TF and HF theories. In \cite{K17}, Kehle established the nonexistence  to a family of density-matrix-functional theories  that interpolates the HF and M\"uller theories. 

Hopefully an exchange of ideas from the ionization problem to the liquid drop model will lead to further results in the future.


\begin{thebibliography}{19}

\bibitem{B92} V. Bach. Error Bound for the Hartree-Fock Energy of Atoms and Molecules. Commun. Math. Phys. 147 (1992), pp. 527--548. 


\bibitem{B84} B. Baumgartner. On Thomas-Fermi-von Weizs\"acker and Hartree energies as
functions of the degree of ionization. J. Phys. A: Math. Gen.  17(1984), pp. 1593--1602.

\bibitem{B79} R. Benguria. The von Weizs\"acker and exchange corrections in the Thomas Fermi theory. PhD thesis, Princeton University, 1979. 


\bibitem{BBL81} R. Benguria, H. Brezis, and E. H. Lieb, The Thomas-Fermi-von Weizs\"acker theory of atoms and molecules. Commun. Math. Phys. 79 (1981), pp. 167--180. 


\bibitem{BL83} R. Benguria and E. H. Lieb. Proof of stability of highly negative ions in the absence of the Pauli
principle. Phys. Rev. Lett.   50 (1983), pp. 1771--1774.

\bibitem{BL85} R. Benguria and E. H. Lieb. The most negative ion in the Thomas-Fermi-von Weizs\"acker
theory of atoms and molecules. J. Phys. B 18 (1985), pp. 1045--1059.

\bibitem{BT22} R. Benguria and T. Tubino. Analytic bound on the excess charge for the Hartree Model. arXiv:2201.13421


\bibitem{CS13} L. Chen and H. Siedentop. Positivity of $|p|^a |q|^b+ |q|^b |p|^a$. J. Funct. Anal. 264 (2013),  pp. 2817--2824.

\bibitem{CS18} L. Chen and H. Siedentop. The maximal negative ion of the time dependent Thomas-Fermi and the
Vlasov atom.  J. Math. Phys. 59 (2018), p. 051902.



\bibitem{CP11} R. Choksi and M. A. Peletier. Small volume-fraction limit of the diblock copolymer problem: II. Diffuse-interface functional. SIAM J. Math. Anal. 43  (2011),  pp. 739--763.

\bibitem{CMT17} R. Choksi, C. B. Muratov, I. Topaloglu. An Old Problem Resurfaces Nonlocally: Gamow's Liquid Drops Inspire Today's Research and Applications.  Notices of the AMS, December 2017. 


\bibitem{DG58} E. De Giorgi. Sulla propriet\`a isoperimetrica dell'ipersfera, nella classe degli insiemi aventi frontiera orientata di misura finita. Atti Accad. Naz. Lincei. Mem. Cl. Sci. Fis. Mat. Nat. Sez. I (8) (1958), pp. 33--44.


\bibitem{FS90b} C. Fefferman and L. A. Seco. On the energy of a large atom. Bull. Amer. Math. Soc.  23 (1990), pp. 525--530.

\bibitem{FS90} C. Fefferman and L. A. Seco. Asymptotic neutrality of large ions. Commun. Math. Phys. 128 (1990), pp. 109-130.

\bibitem{FMP08} N. Fusco, F. Maggi, and A. Pratelli. The sharp quantitative isoperimetric inequality. Ann. of Math. 168 (2008), pp. 941--980. 

\bibitem{FKN16} R. L. Frank, R. Killip, and P. T. Nam. Nonexistence of large nuclei in the liquid drop model. Lett. Math. Phys. 106 (2016), pp. 1033--1036. 

\bibitem{FL15} R. L. Frank and E. H. Lieb. A Compactness Lemma and Its Application to the Existence of Minimizers for the Liquid Drop Model.  SIAM J. Math. Anal., 47 (2015), no. 6, 4436--4450.

\bibitem{FN21} R. L. Frank and P. T. Nam. Existence and nonexistence in the liquid drop model. Calc. Var. Partial Differential Equations, 60, 223 (2021), pp. 1-12.

\bibitem{FLSS07} R. L. Frank, E. H. Lieb, R. Seiringer, and H. Siedentop. M\"uller’s exchange-correlation energy in density-matrix-functional theory. Phys. Rev. A 76 (2007), p. 052517. 




\bibitem{FNV18} R. L. Frank, P. T. Nam and H. Van Den Bosch. The ionization conjecture in Thomas--Fermi--Dirac--von Weizs\"acker theory. Comm. Pure Appl. Math. 71 (2018), pp. 577--614.

\bibitem{FNV18b} R. L. Frank, P. T. Nam, and H. Van Den Bosch.  The maximal excess charge in M\"uller density-matrix-functional theory.  Ann. Henri Poincar\'e 19  (2018), pp. 2839--2867. 

\bibitem{G77} R. T. Glassey. On the blowing up of solutions to the Cauchy problem for nonlinear Schr\"odinger equations. J. Math. Phys. 18 (1977), pp. 1794--1797.



\bibitem{HO77} M. Hoffmann-Ostenhof and T. Hoffmann-Ostenhof. ``Schr\"odinger inequalities" and asymptotic behavior of the electron density of atoms and molecules. Phys. Rev. A 16 (1977), pp. 1782--1785.

\bibitem{K17} C. Kehle. The maximal excess charge for a family of density-matrix-functional theories including Hartree-Fock and M\"uller theories
J. Math. Phys. 58 (2017), p. 011901.  


\bibitem{KM14} H. Kn\"upfer, C. B. Muratov. On an isoperimetric problem with a competing nonlocal term
II: The general case. Comm. Pure Appl. Math. 67 (2014), pp. 1974 --1994.

\bibitem{LL13} E. Lenzmann and M. Lewin. Dynamical Ionization Bounds for Atoms. Anal. and PDE. 6 (2013), pp. 1183-1211.

\bibitem{LNR14} M. Lewin, P. T. Nam, and N. Rougerie. Derivation of Hartree's theory for generic mean-field Bose systems. Advances in Math. 254 (2014), pp. 570--621. 



\bibitem{L81} E. H. Lieb. Thomas-Fermi and related theories of atoms and molecules. Rev. Mod. Phys. 53 (1981), pp. 603-641.

\bibitem{L83} E. H. Lieb. Sharp Constants in the Hardy-Littlewood-Sobolev and Related Inequalities. Ann. of Math. 118 (1983), pp. 349--374.

\bibitem{L81b} E. H. Lieb. Variational Principle for Many-Fermion Systems. Phys. Rev. Lett. 46 (1981), p. 457. 



\bibitem{L84b} E. H. Lieb. Atomic and Molecular Negative Ions, Phys. Rev. Lett. 52 (1984), pp. 315--317.  


\bibitem{L84} E. H. Lieb. Bound on the maximum negative ionization of atoms and molecules. Phys. Rev. A  29 (1984), pp. 3018--3028.

\bibitem{LO81} E. H. Lieb and S. Oxford. Improved lower bound on the indirect Coulomb energy.  International Journal of Quantum Chemistry. 19 (1981), p. 427. 

\bibitem{LSST84} E. H. Lieb, I. M. Sigal, B. Simon, and W. Thirring. Asymptotic Neutrality of Large-Z Ions. Phys. Rev. Lett. 52 (1984), p. 994996.  

\bibitem{LSST88} E. H. Lieb, I. M. Sigal, B. Simon, and W. Thirring. Asymptotic neutrality of large-Z ions. Commun. Math. Phys. 116 (1988), pp. 635-644.

\bibitem{LS73} E. H. Lieb and B. Simon. Thomas-Fermi Theory Revisited. Phys. Rev. Lett. 31 (1973), p. 681 

\bibitem{LS77} E. H. Lieb and B. Simon. The Thomas-Fermi theory of atoms, molecules and solids. Advances in Math. 23 (1977),  pp. 22--116.

\bibitem{LS77b} E. H. Lieb and B. Simon. The Hartree-Fock theory for Coulomb systems. Commun. Math. Phys. 53 (1977), pp. 185--194.

\bibitem{LS09} E. H. Lieb and R. Seiringer. The stability of matter in quantum mechanics. Cambridge University Press, 2009.

\bibitem{LT75} E. H. Lieb and W. Thirring. Bound for the Kinetic Energy of Fermions which
Proves the Stability of Matter. Phys. Rev. Lett. 35 (1975), pp. 687--689.

\bibitem{LO14} J. Lu and F. Otto. Nonexistence of a Minimizer for Thomas--Fermi--Dirac--von Weizs\"acker Model.  Comm. Pure Appl. Math. 67 (2014), pp. 1605-1617. 

\bibitem{M68} C. S. Morawetz. Time decay for the nonlinear Klein-Gordon equations, Proc. Roy. Soc. Ser. A. 306 (1968), pp. 291--296.


\bibitem{N12} P.~T. Nam. New bounds on the maximum ionization of atoms.   Commun. Math. Phys. 312 (2012), pp. 427--445.
  
\bibitem{N13} P. T. Nam. On the number of electrons that a nucleus can bind. Proceedings of the ICMP 2012, A. Jensen (ed.), World Scientific 2013, pp. 504-511. 

\bibitem{N21} P. T. Nam. The Ionization Problem. EMS Newsletter, December 2020, pp. 22--27.  

\bibitem{R30} {F. Riesz},  Sur une in\'egalit\'e int\'egrale. J. London Math. Soc. 5 (1930), pp. 162--168.


\bibitem{R82} M. B. Ruskai. Absence of discrete spectrum in highly negative ions. Commun. Math. Phys. 82 (1982), pp. 457--469.

\bibitem{R82b} M. B. Ruskai. Absence of discrete spectrum in highly negative ions, II. Extension to Fermions. Commun. Math. Phys. 82 (1982), pp. 325--327.

\bibitem{SSS90} L. A. Seco, I. M. Sigal, and J. P. Solovej. Bound on the ionization energy of large atoms. Commun. Math. Phys. 131 (1990), pp. 307--315.

\bibitem{S82} I. M. Sigal. Geometric methods in the quantum many-body problem. Nonexistence of very negative ions. Commun. Math. Phys. 85 (1982), pp. 309--324.

\bibitem{S84} I.M. Sigal, How many electrons can a nucleus bind?  Ann. Phys. 157 (1984), pp. 307--320.

\bibitem{S00} B. Simon. Schr\"odinger operators in the twenty-first century. Mathematical Physics 2000, Imperial College Press, pp. 283-288.


\bibitem{S90} J.P. Solovej. Universality in the Thomas-Fermi-von Weizs\"acker model of atoms and molecules. Commun. Math. Phys.129 (1990),  pp. 561--598.  

\bibitem{S90b} J.P. Solovej. Asymptotics for bosonic atoms. Lett. Math. Phys.  20 (1990), pp. 165--172.

\bibitem{S03} J. P. Solovej. The ionization conjecture in Hartree--Fock theory. Ann. of Math. 158 (2003), pp. 509--576.

\bibitem{S17} J. P. Solovej. A new look at Thomas--Fermi theory.  Molecular Physics 114 (2016), pp. 1036--1040.

\bibitem{T09} G. Teschl. Mathematical methods in quantum mechanics, with applications to Schr\"odinger operators. Graduate Studies in Mathematics, Vol. 99. Providence, RI, Amer. Math. Soc., 2009.

\bibitem{T08} T. Tao. A global compact attractor for high-dimensional defocusing non-linear Schr\"odinger equations with potential, Dyn. Partial Differ. Equ. 5 (2008), pp. 101--116.



\bibitem{Z60} G. Zhislin, Discussion of the spectrum of Schr\"odinger operator for system of many particles, {Trudy. Mosk. Mat. Ob\v s\v c.} {9}, 81 (1960). 


\end{thebibliography}
\end{document}